\documentclass[letterpaper]{article} % DO NOT CHANGE THIS

\usepackage{times}  % DO NOT CHANGE THIS
\usepackage{helvet}  % DO NOT CHANGE THIS
\usepackage{courier}  % DO NOT CHANGE THIS
\usepackage[hyphens]{url}  % DO NOT CHANGE THIS
\usepackage{graphicx} % DO NOT CHANGE THIS
\usepackage[authoryear]{natbib}  % DO NOT CHANGE THIS AND DO NOT ADD ANY OPTIONS TO IT
\usepackage{caption} % DO NOT CHANGE THIS AND DO NOT ADD ANY OPTIONS TO IT
\usepackage{algorithm}
\usepackage{algorithmic}
\usepackage{microtype}
\usepackage{newfloat}
\usepackage{listings}
\DeclareCaptionStyle{ruled}{labelfont=normalfont,labelsep=colon,strut=off} % DO NOT CHANGE THIS
\floatstyle{ruled}
\newfloat{listing}{tb}{lst}{}
\floatname{listing}{Listing}
\usepackage{amsmath}
\usepackage{amsthm}
\usepackage{booktabs}
\usepackage{algorithm}
\usepackage{algorithmic}
\usepackage{enumitem}
\usepackage{times}
\usepackage{xparse}
\usepackage{mathtools}
\usepackage{amssymb}

\usepackage{natbib}
\usepackage{mathptmx}
\usepackage{MnSymbol}

\usepackage{mathtools}

\usepackage{todonotes}

\usepackage[bibliography=common]{apxproof}

\usepackage{microtype}

\usepackage[capitalise,noabbrev]{cleveref}
\usepackage{natbib}

\newtheoremrep{theorem}{Theorem}
\newtheoremrep{example}[theorem]{Example} 
\newtheoremrep{definition}[theorem]{Definition}
\newtheoremrep{claim}[theorem]{Claim}
\newtheoremrep{corollary}[theorem]{Corollary}
\newtheoremrep{lemma}[theorem]{Lemma}
\newtheoremrep{proposition}[theorem]{Proposition}
\newtheoremrep{observation}[theorem]{Observation}

\newtheoremrep{hypothesis}{Hypothesis}

\crefname{example}{Ex.}{plural}
\crefname{section}{Sec.}{plural}
\crefname{theorem}{Thm.}{plural}
\nocite{*}

\usepackage{xspace}
\usepackage{tikz}
\usetikzlibrary{backgrounds}
\usetikzlibrary{calc}
\usetikzlibrary{positioning,fit}
\usepgfmodule{plot}

\def\rConstrHelper(#1,#2,#3,#4){{#1}\ \substack{#2 \\ #3}\ {#4}}

\newcommand{\sss}{{\mathrel{\kern.25em{\sqsubseteq}\kern-.5em \mbox{{\scriptsize *}}\kern.25em}}}
\newcommand{\smallsss}{{\mathrel{\kern.25em{\sqsubseteq}\kern-.4em \mbox{{\scriptsize *}}\kern.25em}}}

\newcommand{\EL}{\ensuremath{{\cal E\!L}}\xspace}

\newcommand{\ALC}{\ensuremath{{\cal ALC}}\xspace}

\newcommand{\NC}{\ensuremath{{\sf N_C}}\xspace}

\newcommand{\NR}{\ensuremath{{\sf N_R}}\xspace}

\newlength{\indxlength}
\newcommand{\Bmc}{\ensuremath{\mathcal{B}}\xspace}
\newcommand{\Cmc}{\ensuremath{\mathcal{C}}\xspace}

\newcommand{\Imc}{\ensuremath{\mathcal{I}}\xspace}
\newcommand{\Jmc}{\ensuremath{\mathcal{J}}\xspace}

\newcommand{\Lmc}{\ensuremath{\mathcal{L}}\xspace}

\newcommand{\Cn}{{\operatorname{Cn}}}

\newcommand{\llang}{\mathcal{L}}

\newcommand{\kernelsymb}{\mathbin{\perp\kern-5pt\perp}}
\newcommand{\finitepwset}{\operatorname{\powerset_{f}}}
\newcommand{\nepwset}{\operatorname{\powerset^{\ast}}}

\newcommand{\modelsofsymb}{\mathsf{mod}}
\newcommand{\modelsof}[1]{\modelsofsymb({#1})}
\newcommand{\modelsofx}[2]{\modelsofsymb_{{#2}}({#1})}

\newcommand{\FRsups}{{\operatorname{MinFRSups}}}
\newcommand{\FRsubs}{{\operatorname{MaxFRSubs}}}

\newcommand{\mSet}{\mathbb{M}}
\newcommand{\mUni}{\mathfrak{M}}

\newcommand{\baseb}{{\Bmc}}

\newcommand{\logsys}{{\Lambda}}
\newcommand{\Evc}{{\operatorname{{\sf evc}}}}
\newcommand{\Rcp}{{\operatorname{{\sf rcp}}}}
\newcommand{\mCon}{{\Evc}}

\newcommand{\mExp}{{\Rcp}}

\newcommand{\mconnm}{eviction}

\newcommand{\interp}{\mathcal{I}}

\newcommand{\psinter}{\rho}

\newcommand{\reduct}[1][\psinter]{\interp_{[\psinter]}}

\newcommand{\FRsetssymb}{{\operatorname{FR}}}
\NewDocumentCommand{\FRsets}{o}{%
	\IfNoValueTF{#1}
	{{\FRsetssymb}}%
	{{\FRsetssymb_{#1}}}%
}

\newcommand{\M}{\mathbb{M}}
\newcommand{\Mplus}{\M^+}
\newcommand{\Mminus}{\M^-}

\newcommand{\vacuousexpansion}{\textbf{vacuous-expansion}\xspace}
\newcommand{\vacuousremoval}{\textbf{vacuous-removal}\xspace}
\newcommand{\lethargy}{\textbf{lethargy}\xspace}
\newcommand{\prudence}{\textbf{circumspection}\xspace}

\newcommand{\revline}{\rev(\baseb,\M^+,\M^-)}

\newcommand{\sel}{\mathrm{sel}}

\newcommand{\naive}{na\"{i}ve\xspace}

\newcommand{\mmin}{\mathsf{min}}

\newcommand{\success}{\textbf{success}\xspace}
\newcommand{\uniformity}{\textbf{uniformity}\xspace}

\newcommand{\dualcon}{\chi_{\logsys}}

\newcommand{\rev}{\ensuremath{{\sf rev}}\xspace}

\newcommand{\decomp}{decomposable\xspace}

\date{}
\title{Model Change for Description Logic Concepts}
\author {
    Ana Ozaki\textsuperscript{\rm 1},
    Jandson S. Ribeiro \textsuperscript{\rm 2}
    \\
     \textsuperscript{\rm 1}University of Oslo, Norway \\
    \textsuperscript{\rm 2}Cardiff University, UK \\
    ana.ozaki@uib.no, ribeiroj@cardiff.ac.uk
}

\newcommand{\cang}{\mathsf{Kangaroo}}
\newcommand{\marsupial}{\mathsf{Marsupial}}
\newcommand{\placental}{\mathsf{Placental}}
\newcommand{\koala}{\mathsf{Koala}}
\newcommand{\platypus}{\mathsf{Platypus}}
\newcommand{\mammal}{\mathsf{Mammal}}

\begin{document}

\maketitle

\begin{abstract}
We consider the problem of modifying a description logic concept in light of models 
represented
%expressed
as pointed interpretations. 
We call this setting \emph{model change}, and distinguish three main kinds of changes: eviction, which consists of only removing models; 
reception, which  incorporates models; and revision, which combines removal with incorporation of models in a single operation. 
We introduce a formal notion of revision and argue that 
%it is more complex than it seems, and 
it does not reduce to a simple combination of eviction and reception,  contrary to intuition. We provide positive and negative results on the compatibility of eviction and reception for $\EL_\bot$ and \ALC description logic concepts and on the compatibility of revision for \ALC concepts. 
\end{abstract}

\section{Introduction}
Keeping beliefs updated is a central problem in knowledge representation, that has been investigated in the context of different logics and applications.
%where is the citation below?
%\cite{bc, ontorev}. 
%extensively studied under several 
%Keeping a knowledge base updated is a recurring problems that appears in knowledge representation and reasoning such .... 
%a central problem that has been investigated in several different content. 
In the main streams of belief change research,  the belief base is finite and the pieces of information expressing how to modify it are expressed as sets of formulae in the underlying logic \citep{Hansson1999,flux,Alchourron1985}. 
%\todo[inline]{elaborate more on finiteness and motivate it, see how we did on the previous AAAI paper, then motivate the change by models, and why revision is important and the challenge. Can we connect the motivation of models with learning from interpretations? Where one needs to learn the concept, but communication with oracles is via models. Also, partial models? Mismatching expressiveness between models and concepts? }

In many scenarios, however, using sets of models to specify the observed change is more suitable than using formulae. This is well studied in the context of \emph{learning from interpretations}~\citep{DeRaedt1997}, where the goal is to find a concise formula that is consistent with models labelled as positive or negative.  In the context of description logic (DL), the process of building an ontology usually goes through stages where the person creating it 
%called the \emph{ontology engineer},
studies possible models of world, discarding models when they are proven false and adding new models previously not considered.
%, based on observations.
The next examples illustrate model change operations for DL concepts.

%...
%\Cref{ex} illustrate a scenario for modifying a base in the light of interpretations. 
%for instance,  information observed 
%
%.. discuss finiteness....
%-  discuss motivation for change, .... 

\begin{example}\label{ex:platypus_evc}
	Araci is visiting a zoo in Australia 
	%for the first time and has very 
	and knows little %knowledge a
	about Australian animals. She 
	knows that a platypus is a mammal that lays eggs,   her  belief on platypus is %
	\begin{align*}
		\platypus \equiv \mammal \sqcap (\exists\mathsf{lays.Egg}).
	\end{align*}
	%
	%She decides to go to the platypuses aquarium in the zoo, where 
	She sees a platypus `$d$' 
	%However, as she 
	but knows nothing about their diet.
	%, she 
	%considers possible that they are herbivores %or not. 
	So, she entertains the two following possible worlds  
	$(\Imc_1,d)$ and $(\Imc_2,d)$ with $\{d,e\}\subseteq \Delta^{\Imc_1}=\Delta^{\Imc_2}$:
	\begin{align*}
		\Imc_1: \  \boxed{\begin{aligned}[t]
				d\in &\ \mathsf{Mammal}^{\Imc_1}  \\   d\not\in & \ \mathsf{Herbivore}^{\Imc_1}  \\
				(d,e)\in &\ \mathsf{lays}^{\Imc_1} \\
				e\in &\ \mathsf{Egg}^{\Imc_1}\\
		\end{aligned}} \hspace{2ex}
		\Imc_2 : \boxed{\begin{aligned}[t]
				d\in &\ \mathsf{Mammal}^{\Imc_2}  \\   d\in & \ \mathsf{Herbivore}^{\Imc_2}  \\
				(d,e)\in &\ \mathsf{lays}^{\Imc_2} \\
				e\in &\ \mathsf{Egg}^{\Imc_2} \\
		\end{aligned}}
	\end{align*}
	%After admiring the small animal, 
	She catches the platypus eating a small insect, which makes Araci retract the pointed interpretation $(\Imc_2,d)$. 
	So, she changes her conceptual beliefs about  platypuses to \begin{align*}
		\platypus \equiv \mammal \sqcap (\exists\mathsf{lays.Egg})\sqcap \neg \mathsf{Herbivore}.
	\end{align*}
\end{example}

\cref{ex:platypus_evc}  illustrates the process of removing a model from the concept, producing a concept closer to the real world. This change operation is called \emph{eviction}~\citep{ouraaai}, and should minimally modify the concept. 
The next example illustrates its dual: the process of adding a model, called \emph{reception}. 

\begin{example}\label{ex:devel_rcp}
	Araci 
	%is curious to know more about the other famous Australian animals: koalas, kangaroos and the infamous Tasmanian devils. She 
	knows that kangaroos and koalas are marsupials, but knows  little about Tasmanian devils. 
	%She believes that
	%they were endemic to the whole Australian continent and that 
	%they are carnivorous mammals that can only be found in Tasmania. 
	So, her  beliefs about marsupials %them 
	are
	%\begin{align*}
	$\marsupial \equiv \koala \sqcup \cang.$
	%\\
	%\mathsf{TasDevil} & \equiv \mammal \sqcap \neg \marsupial
	%\sqcap\mathsf{Carnivore}\sqcap \forall \mathsf{lives}.\mathsf{Tas.}
	%\end{align*}
	%She moves to 
	In the Tasmanian devils section, %curious to catch more information about them, 
	%and 
	she sees a   
	%She believes she will see at least one 
	devil `$d'$' %in the yard, 
	and reads a sign with the information that 
	\emph{tasmanian devils are marsupials}. So, she now admits a world $(\Imc_3,d')$ where
	\begin{align*}
		\Imc_3 : \boxed{\begin{aligned}[t]
				d'\in  \mathsf{TasDevil}^{\Imc_3}, \ 
				d'\in  \mathsf{Carnivore}^{\Imc_3}, \ 
				d'\in \marsupial^{\Imc_3}
		\end{aligned}}
	\end{align*}
	She changes her conceptual beliefs to comply with $(\Imc_3,d')$:
	\begin{align*}
		\marsupial &\equiv \koala \sqcup \cang\sqcup \mathsf{TasDevil}.
		%\\
		%\mathsf{TasDevil} & \equiv \mammal \sqcap \mathsf{Carnivore}\sqcap \forall \mathsf{lives}.\mathsf{Tas.}
	\end{align*}
	
\end{example}

In both examples, Araci minimally modified her concepts. For platypus, she modified only concepts related to the diet, while for the Tasmanian devil, she modified only concepts related to marsupials and devils; no further changes relate to kangaroos or koalas were carried out.  
Ensuring minimal change in this setting is a challenge, as it is not always possible to retract or incorporate the single input model \citep{ouraaai}. %For instance, in \cref{ex:devel_rcp}, all other interpretations that comply with .. are also absorbed in the process, while .... 
A third and more complex kind of operation is illustrated in \cref{ex:koala_rev}, where one must add and remove models in a single step.

\begin{example}\label{ex:koala_rev}
	Araci believes that 
	%marsupials are not placental. She also knows that koalas, just like kangaroos, are marsupials, so she concludes that 
	koalas are marsupial mammals that are not placental. %She also knows that koalas live on trees (arboreal animals). 
	So her concept on koalas is: 
	$$ \koala \equiv \mammal\sqcap\marsupial \sqcap   \neg \placental.$$
	She sees a koala   `$d''$' in the zoo, so she considers  $(\Imc_4,d'')$ possible, but not $(\Imc_5,d'')$.  
	\begin{align*}
		\Imc_4 : \boxed{\begin{aligned}[t]
				d'' \in & \ \mammal^{\Imc_4}  \\
				d'' \in & \  \marsupial^{\Imc_4} \\
				d'' \notin & \ \placental^{\Imc_4}
		\end{aligned}}\hspace*{2ex} 
		\Imc_5 : \boxed{\begin{aligned}[t]
				d'' \in & \ \mammal^{\Imc_5}  \\
				d'' \in & \  \marsupial^{\Imc_5} \\
				d'' \in & \ \placental^{\Imc_5} 
		\end{aligned}}
	\end{align*}
	She reads a sign informing that koalas are actually placental. So she is compelled to comply with $(\Imc_5,d'')$, whereas retracting $(\Imc_4,d'')$.
	Therefore, she changes her belief to 
	$$ \koala \equiv \mammal\sqcap\marsupial  \sqcap \placental .$$
	%    \todo{Ana: check}
\end{example}

In  \cref{ex:koala_rev}, the model $(\Imc_4,d'')$  had to be removed, while  $(\Imc_5,d'')$ is added. We call this more complex operation a \emph{revision}. After providing preliminaries (\cref{sec:prelidl}) and recalling eviction and reception (\cref{sec:model_change}), we present our main contributions, which are: 
\begin{itemize}
	\item a theoretical study on eviction and reception for DL concepts in $\ALC$ and $\EL_\bot$ (\cref{sec:dlconcept_evc_rcp}); and
	\item  the introduction of the notion of model revision (\cref{sec:modelrev}),   with some results for DL concepts (\cref{sec:revdl}).    
\end{itemize}
Contrary to intuition, revision \emph{does not} correspond to a serial combination of eviction and reception, as we show in this paper. % that is, it dos not collapses to adding .. an then removing. 
%In fact, in the second step, NY removing $I2$, one can eliminate ... which should be guaranteed. This issue is related to several urdles which we identify in this paper. 
In general, it may well be impossible to add and remove exactly and only the models of the input. 

\medskip
\noindent
\textbf{Related Work} Our contribution is closest to the work by \cite{ouraaai}, with the main differences being that they neither consider revision nor DL concepts. Although their results on eviction and reception of DL ontologies inspired some of our proofs, the settings are considerably different, with the DL concept case that we present here being arguably more natural to the process of modelling concepts, which is  fundamental  for building ontologies. In this line, there has been work on computing least common subsumers, which generalize a set of concepts with minimal change~\cite{DBLP:journals/japll/BaaderST07}.
Recent work considered learnability of DL concept expressions with pointed interpretations labelled as positive and negative~\citep{DBLP:conf/ijcai/CateKO24}. One can see the positive and negative pointed interpretations in their work as the sets of models to be added and removed in our setting. The difference from their work to ours is that here we focus on the existence of operators following minimal change rationality postulates, while they focus on the existence of sets of labelled models that characterize a DL concept, that is, that can be used to distinguish a particular concept from all the others in a DL language for concepts. Ontology learning from interpretations has been investigated by~\cite{DBLP:conf/ilp/KlarmanB15}. %\todo{cite more works}
Other works investigated learnability of DL concepts in a data retrieval setting~\citep{DBLP:conf/ijcai/FunkJL21,DBLP:conf/ijcai/FunkJLPW19}, using inductive logic programming~\citep{DBLP:journals/jmlr/Lehmann09,DBLP:conf/ilp/FanizzidE08,DBLP:conf/ilp/LehmannH09,DBLP:journals/ml/LehmannH10}, and with counterfactuals~\citep{DBLP:journals/apin/IannonePF07}. 
We also point out works relating learning, epistemic logic, and belief revision 
~\citep{DBLP:journals/sLogica/BaltagGS19,DBLP:journals/jlap/BaltagGOSS19,DBLP:conf/jelia/OzakiT19,DBLP:conf/ijcai/SchwindIKM25}. 
%\todo[inline]{put contributions and outlines}
% We introduce a novel class of revision operators based on symmetric difference between models, and we show that revision is characterised precisely by symmetric difference, that is, minimal change is characterised by symmetric difference.   
% conencting eviction an
% DLCs that are compl

%For instance, ..... Another main hurdle is related to the finiteness of the produced knowledge base, .... 

% ===========

% ... discuss compatibility....

% We investigate model change on Description Logic Concepts. 
% We identify the comtibulity fo such logcs with eviction and reception. 
% We introduce model revision, whose purpose is to ..... 
% Model revision is a complex operation, and we show hat cannot be defined by a simple combination of eviction and reception. 
% We introduce suitable minimal change rationality postulate for revision, and investigate the compatibility on DLC. We identify a tight connection between revision and the other two kinds of operators: eviction and reception. 

%\input{sections/2_preliminaries}
\section{Preliminaries}
\label{sec:prelidl}
The power set of a set $A$ is denoted by $\powerset(A)$,  while the set of all finite subsets of $A$ is denoted by $\finitepwset(A)$. 
Given a pre-order $\mathbin{\leqslant} \mathbin{\subseteq} \mathcal{D} \times \mathcal{D}$ on a domain $\mathcal{D}$, and a set $A \subseteq \mathcal{D},$ the set of all maximal and minimal elements of $A$ w.r.t. $\leqslant$ are respectively %\todo{essas definicoes precisam de correcao (!?)}
%\begin{align*}
$\mathsf{max}_{\leqslant}(A) = \{ x \in A \mid \mbox{ for all } y \in A \mbox{ if } x \leqslant y, \mbox{ then } y \leqslant x\}$, and  %    \\
$\mathsf{min}_{\leqslant}(A) = \{ x \in A \mid \mbox{ for all } y \in A \mbox{ if } y \leqslant x, \mbox{ then } x \leqslant y\}$.
%\end{align*}
%
We write \(\nepwset(A)\) to denote the non-empty subsets of \(A\).
Following \citet{Aiguier2018,DelgrandePW18}, and \citet{ouraaai}, we use satisfaction systems to define logics.
%
%\begin{definition}\cite{Aiguier2018}%
%\label{def:logsys}
A \emph{satisfaction system} is a triple \(\logsys=(\llang, \mUni, \models)\),  where 
\(\llang\) is a non-empty (possibly countably infinite)
%\todo{ countable set called }
language 
,  \(\mUni\) is a set of models, 
%the  set of all possible models,  
%also
%called interpretations,  
%that determine the meaning of the sentences in
%$\llang$,  
and \(\models \subseteq \mUni \times \powerset(\llang)\) is a relation, called the \emph{satisfaction} relation, which relates  
%\todo{discuss this, the pairs should be (model, base)} 
models to subsets of the language. We use the infix notation $M \models \baseb$ as a shorthand for $(M,\baseb) \in \ \models$ and say that $M$ \emph{satisfies} $\baseb$.
%\todo[color=green]{dei uma reescrita em algumas partes que nao estavam claras}
%all pairs 
%\((M, \baseb) \) means that a model \(M\) satisfies  $\baseb$. 
Every subset of $\llang$ is called a \emph{base} (which can be finite or infinite), denoted $\baseb$. 
%Note that bases can be infinite §. %We call a \emph{base} (that is, a subset of \(\llang\)),
% meaning that 
% of bases $\baseb$ (subsets of \(\llang\)) %\todo{bases were not defined yet, rewrite with a set of sentences... \\RG: tried solving this by adding the text in parenthesis} 
% and 
% models \(M \in \mUni\) 
%such that \(M\) satisfies \(\baseb\) (i.e., \(M \models\baseb\)).
We  denote by \(\modelsofx{\baseb}{\logsys}\) the set 
$\{M \in \mathfrak{M} \mid %\forall \varphi \in \baseb,
M \models
\baseb\}.$
We write  \(\modelsof{\baseb}\)  when the satisfaction system is clear from the context.
%    Let \(\llang\) be a language, \(\mUni\) set of models and \(\models
%    : \mUni \to 2^\llang\) a satisfaction relation.  We call the triple
%     a satisfaction system.
%\end{definition}
%
%Looking at a logic simply as a satisfaction system allows us to explore its
%properties without making assumptions about the language or putting constraints
%upon the logic's entailment relation.  %should satisfy.

Satisfaction systems 
%allow us to be more flexible and precise regarding the precise scope of the operations and constructions we define.
%This view also 
facilitate the generalisation of some results that do not depend on certain properties of the consequence relation of the logic. 
A  set of models \(\mSet \subseteq \mUni\) 
%\todo{dizer queue conjunto de models sera denotado por ??}
within $\logsys$ is finitely representable iff there is
$\baseb \in \finitepwset(\llang)$ such that $\modelsof{\baseb} = \mSet$. 
Let $\FRsets(\logsys)$ denote 
%the
%collection of 
all \emph{finitely representable sets of models in $\logsys$}, i.e.,
\[
\FRsets(\logsys) = \{\mSet \subseteq \mUni \mid \exists \baseb \in
\finitepwset(\llang) : \modelsof{\baseb} = \mSet\}.
\]%
Given a set $\mSet$ of models, the  greatest finitely-representable subsets of $\mSet$ and the least  finitely-representable supersets of $\mSet$ are given respectively by 
%from a set $\mSet$ are give by 
\begin{align*}
	\FRsubs(\mSet, \logsys) &= \mathsf{max}_{\subseteq}(\{ \mSet' \in \FRsets(\logsys) \mid
	\mSet' \subseteq \mSet \}),\\
	\FRsups(\mSet, \logsys) &= \mmin_{\subseteq} (\{ \mSet' \in \FRsets(\logsys) \mid
	\mSet \subseteq \mSet' \}).
\end{align*}
% On the converse, 
% the least finitely-representable supersets of  $\mSet$ are given by 
% \begin{align*}
	%     \FRsups(\mSet, \logsys) = \mmin_{\subseteq} (\{ \mSet' \in \FRsets(\logsys) \mid
	%             \mSet \subseteq \mSet' \}).
	%     \end{align*}

We say that a set of formulae $\baseb \subseteq \llang$ is finitely
representable iff there is   $\baseb' \in \finitepwset(\llang)$ with $\modelsof{\baseb} = \modelsof{\baseb'}$.  %Given a satisfaction system  \(\logsys = (\llang, \mUni, \models)\),
%We omit the subscript whenever \(\baseclass = \finitepwset(\llang)\). 
%Additionally, 
We write $\times$ for the
Cartesian product of two sets.
Also, we denote the logical closure of a base %\(\baseb\) 
in a satisfaction system \(\logsys\) by \(\Cn_\logsys\), omitting the subscript when clear from the context.

%\todo[inline]{notation}
% \todo[inline]{ sistema de sat}

%\subsection{Description Logic Concepts}\label{sec:prelidl}
\medskip
\noindent
\textbf{DL Concepts} Let \NC{} and \NR{} be countable and pairwise disjoint sets of concept %\todo{ana: try to shorten}
names and role names,   respectively. In this work,   \NC{} and \NR{} can be finite or infinite. We explicitly indicate when these sets, called \emph{signature}, are finite (otherwise, they are assumed to be infinite).
%\todo{define EL and EL bottom}
\EL
\emph{concepts} are built according to the rule:
%
%\[
$C,D ::= \top \mid A %\mid \neg C 
\mid (C \sqcap D) \mid (\exists r.C)$, 
%\]
%
where $A \in \NC$ and $r\in\NR$.  $\EL_\bot$ concepts extend \EL by allowing $\bot$ (interpreted as the empty set). $\ALC$
{concepts} extend \EL concepts with the rule %(and \EL$_\bot$) by allowing
$\neg C$ (recall that $C\sqcap \neg C$ is equivalent to $\bot$, so \ALC extends $\EL_\bot$).
\begin{toappendix}
	Here we introduce additional notions about the syntax and semantics of DLs, basic theoretical tools, and results needed in the proofs of subsequent sections.
	
	When writing \ALC concepts, we may use w.l.o.g. disjunctions ($C \sqcup D\equiv \neg(\neg C\sqcap \neg D)$) and universal quantification ($\forall r.C  \equiv \neg (\exists r.\neg C)$) as part of the \ALC syntax.
	%
	%\(\ALCfo\) formulae are    expressions $\varphi$ of the form 
	%\begin{align*}
	%\[\alpha ::= C(a) \mid r(a,b) \mid (C =\top) %\quad 
	%\]
	%\[
	%\varphi  ::= \alpha \mid \neg(\varphi) \mid (\varphi \wedge \varphi)
	%\]
	%
	%\end{align*}
	%where $C$ is an \ALC concept,  $a, b \in \NI$, and $r \in \NR$. %\footnote{
		We may omit parentheses if there is no risk of confusion. A concept $C$ is a subconcept of a concept $D$ if it occurs in $D$ (by this, every concept is a subconcept of itself). We define the (role) \emph{depth} of a concept inductively.
		The {depth} of a concept without any roles is $0$. 
		The depth of a concept of the form $(\exists r.C)$ or $(\forall r.C)$ is the depth of $C$ plus $1$. 
		The  depth of $(C\sqcap D)$ or $(C\sqcup D)$ is the maximum of the depths of $C,D$. 
	\end{toappendix}
	We may write $\exists r^n.\top$, with $n\in\mathbb{N}$, as a shorthand for the nesting of $n$ existential quantifiers (that is, $\exists r^{n+1}.\top=\exists r.(\exists r^{n}.\top)$) and $\exists r^0.\top=\top$.
	%The  depth of a concept is the maximum of the  depths of its subconcepts.
	%The usual  concept inclusions $C \sqsubseteq D$ can be expressed 
	%with  $\top \sqsubseteq \neg C\sqcup D$ and $\neg C\sqcup D \sqsubseteq \top$,
	%which is $(\neg C\sqcup D = \top)$. %\textcolor{black}{We also 
		%consider in this work standard abbreviations
		%for other operators such as $\bot$ (interpreted as the empty set or, equivalently,
		%as the negation of $\top$). }
	%\emph{Assertions} are expressions of the form 
	%$r(a,b)$ and $A(a)$, with $r\in\NR$, $a,b\in\NI$, and $A\in\NC$.
	%Whenever we speak of an $\EL_\bot$  finite  base
	%we mean a finite set of concept inclusions and assertions built from $\EL_\bot$ concepts. The same holds for \EL and \ALC.
	
	\medskip
	\noindent
	\textbf{Semantics} The semantics of \EL, $\EL_\bot$, and \ALC concepts is defined using pointed interpretations~\citep{Baader2017,Gabbay2003a}. An interpretation \Imc is a pair 
	$(\Delta^\Imc,\cdot^\Imc)$, where
	$\Delta^\Imc$ is a non-empty set, called the \emph{domain}, and $\cdot^\Imc$ is a function that maps every $A\in\NC$ to a subset of $\Delta^\Imc$ and every $r\in\NR$ to a subset of $\Delta^\Imc\times\Delta^\Imc$. It is \emph{finite} if $\Delta^\Imc$ is finite.
	A \emph{pointed interpretation} is a pair $(\Imc,d)$ where $\Imc=(\cdot^\Imc,\Delta^\Imc)$ is an interpretation and $d\in \Delta^\Imc$.
	\begin{toappendix}

		We say that an interpretation \Jmc with $\Delta^{\Jmc}\subseteq\Delta^\Imc$
		is a \emph{subinterpretation} of \Imc
		if
		$A^\Jmc=A^\Imc\cap\Delta^\Jmc$ for 
		all $A\in\NC$ and $r^\Jmc=r^\Imc\cap(\Delta^\Jmc\times\Delta^\Jmc)$ for 
		all $r\in\NR$. Given an interpretation $\Imc$ and a role name $r\in\NR$, we say that $e\in\Delta^\Imc$ is an \emph{$r$-successor} of $d\in\Delta^\Imc$ in   \Imc if
		$(d,e)\in r^\Imc$.
	\end{toappendix}
	A pointed interpretation $(\Imc,d)$ \emph{satisfies} a concept $C$ iff
	$d\in C^\Imc$. We define tree-shaped pointed interpretations using the notion of unfolding \cite{Dummett1959-DUMMLB-2} (see also~\cite{DBLP:conf/ijcai/KonevLWZ16}).
	%\todo{Ana: move to appendix unused def}
	We say that a concept $C$ \emph{entails} a concept $D$ if for all pointed interpretations $(\Imc,d)$ (over the signature), $d \in C^\Imc$
	implies $d \in D^\Imc$. Two concepts $C,D$ are \emph{equivalent}, written $C\equiv D$, iff $C$ entails $D$ and $D$ entails $C$. 
	%We use standard notions of canonical model, homomorphisms, isomorphisms, unfolding, and bisimulations, given in the appendix. 
	%inductively defined as expected. The role depth of a concept without any roles is $0$. Suppose the role depth of $C$ is $n$ then the 
	%\todo[inline]{introduce finite interpretation}
	%\todo[inline]{introduce def of depth of a concept, depth of a pointed tree-shaped finite interpretation}
	%\todo[inline]{introduce the notion of subconcept rooted at depth $k$}
	%\todo[inline]{entailment of concepts}
	%\todo[inline]{logical equivalence of concepts}
	\begin{toappendix}

		\paragraph{Canonical Model}
		%\todo{in the appendix it appears proofs but these are def}
		Given a satisfiable $\EL_\bot$ concept $D$,
		we inductively define the tree-shaped pointed interpretation $(\Imc_D,d_D)$, called the \emph{canonical model} of $D$, with the root denoted $d_D$, as follows. 
		%\todo{add the case $C=\top$}
		When $D$ is $\top$, we define $(\Imc_{\top},d_\top)$ as the pointed interpretation with $\Delta^{\Imc_{\top}}:=\{d_{\top}\}$  and all
		concept and role names interpreted as the empty set.
		For $D$ a concept name $A\in \NC$ we define $(\Imc_A,d_A)$ as the pointed interpretation with $\Delta^{\Imc_A}:=\{d_A\}$,
		$A^{\Imc_A}:=\{d_A\}$, and all
		other concept and role names interpreted as the empty set.
		For $D=\exists r.C$, we define
		$(\Imc_D,d_D)$ as the pointed interpretation with $\Delta^{\Imc_D}:=\{d_D\}\cup \Delta^{\Imc_C}$. 
		All concept and role name interpretations are as for
		$(\Imc_C,d_C)$ and we add $(d_D,d_C)$ to $r^{\Imc_D}$, 
		and assume $d_D$ is fresh (i.e., it is not in $\Delta^{\Imc_C}$).  Finally, for $D=D_1\sqcap D_2$
		we define $\Delta^{\Imc_D}:=\Delta^{\Imc_{D_1}}\cup (\Delta^{\Imc_{D_2}}\setminus\{d_{D_2}\})$,
		assuming $\Delta^{\Imc_{D_1}}$ and $\Delta^{\Imc_{D_2}}$
		are disjoint, and
		with
		all concept and role name interpretations as in
		$(\Imc_{D_1},d_{D_1})$ and $(\Imc_{D_2},d_{D_2})$, except that
		we connect $d_{D_1}$ with the elements in $\Delta^{\Imc_{D_2}}$
		in the same way as $d_{D_2}$ is connected to these elements in $(\Imc_{D_2},d_{D_2})$. In other words, we 
		{identify} $d_{D_1}$ with   $d_{D_2}$ in $\Delta^{\Imc_{D_2}}$. 
		\paragraph{Homomorphism and Isomorphism} 
		Let 
		$(\Imc,d_0)$ and $(\Jmc,e_0)$ be two pointed interpretations. 
		A \emph{homomorphism} from $(\Imc,d_0)$ to $(\Jmc,e_0)$
		is a function  $h:\Delta^{\Imc}\to\Delta^\Jmc$
		that   satisfies: (i) $h(d_0)=e_0$; % and the following:
		%\begin{itemize}
		%\item  ;
		%\item 
		(ii)   for all $A\in\NC$, if $d\in A^\Imc$  then $h(d)\in A^\Jmc$; and (iii)
		%   \item 
		for all $r\in\NR$, if $(d,d')\in r^\Imc$   then   $(h(d),h(d'))\in r^\Jmc$.
		An \emph{isomorphism} between $(\Imc,d_0)$ and $(\Jmc,e_0)$
		is a bijective function  $h:\Delta^{\Imc}\to\Delta^\Jmc$
		that   satisfies: (i) $h(d_0)=e_0$; % and the following:
		%\begin{itemize}
		%\item  ;
		%\item 
		(ii)   for all $A\in\NC$,  $d\in A^\Imc$  iff $h(d)\in A^\Jmc$; and (iii)
		%   \item 
		for all $r\in\NR$,  $(d,d')\in r^\Imc$   iff   $(h(d),h(d'))\in r^\Jmc$.
		
		%\end{itemize}
		
		%\todo{instanciar o framework para esse caso}

		\paragraph{Unfolding and Bisimulations} We use the classical notion of \emph{unfolding}~\cite{Dummett1959-DUMMLB-2} (see e.g.~\cite{DBLP:conf/ijcai/KonevLWZ16} for a definition in the context of description logic) in some technical lemmas.
		A \emph{path} $p$ in an interpretation \Imc is a sequence $d_0 \cdot r_0 \ldots r_{n-1}\cdot d_n$ such that $d_i \in \Delta^\Imc$, $r_i \in \NR$, and $(d_i,d_{i+1})\in r^\Imc_i$  for all $0\leq i < n$. The \emph{length} of a path $d_0 \cdot r_0 \ldots r_{n-1}\cdot d_n$ is $n$ (repetitions are counted).
		By $\mathsf{tail}(p)$ we denote the final element of $p$. 
		%If \Imc is a directed tree interpretation, then for every d ∈ ∆I there exists a unique path p starting from the root ρI of p such that tail(p) = d. 
		Given a pointed interpretation $(\Imc,d)$, the \emph{unfolding}
		of $(\Imc,d)$, denoted $(\Imc^u,d)$,
		is the pointed interpretation defined as follows:
		\begin{itemize}
			\item $\Delta^{\Imc^u}$ is the set of all paths that start at $d$;
			\item $p \in A^{\Imc^u}$ if $\mathsf{tail}(p)\in A^{\Imc}$;
			\item $r^{\Imc^u}:=\{(p,p\cdot r\cdot f)\mid  p,p\cdot r\cdot f\in \Delta^{\Imc^u} (\mathsf{tail}(p),f)\in r^{\Imc}\}$.
		\end{itemize}
		A pointed interpretation $(\Imc,d)$ is \emph{tree-shaped} if  $(\Imc,d)$
		and $(\Imc^u,d)$ are isomorphic. 
		Each finite pointed tree-shaped interpretation $(\Imc,d)$ is isomorphic to the canonical model of an \EL concept $C$. The depth of $(\Imc,d)$ is the depth of $C$. 
		
		\begin{definition}[Bisimulation]\label{def:bisimulaiton}
			Let \Imc and \Jmc be   interpretations over   a signature~$\Sigma$. A
			\emph{bisimulation} between \Imc and \Jmc is a non-empty relation $Z \subseteq\Delta^\Imc \times \Delta^\Jmc$ such that, for every
			$d\in \Delta^\Imc$ and $e\in \Delta^\Jmc$ with $(d,e) \in Z$, every concept name $A$ in $\Sigma$, and every role name $r$ in $\Sigma$: 
			\begin{itemize}
				\item (atom) $d \in A^\Imc$ iff $e \in A^\Jmc$; 
				\item (forth)
				if $(d,d') \in r^\Imc$ then there is $e' \in \Delta^\Jmc$ such that $(e,e') \in r^\Jmc$ and $(d',e') \in Z$;
				%and 
				\item (back) if $(e,e') \in r^\Jmc$ then there is $d' \in \Delta^\Imc$ such that $(d,d') \in r^\Imc$ and
				$(d',e') \in Z$. 
			\end{itemize}
		\end{definition}
		%Given  $(\Imc,d)$ and  $(\Jmc,e)$, 
		We say that pointed interpretations 
		$(\Imc,d)$ and $(\Jmc,e)$ are \emph{bisimilar} if there is a bisimulation between \Imc and \Jmc that contains  $(d,e)$. 
		By definition, there is a bisimulation
		between $\Imc$ and $\Imc^u$
		including the pair $(d,d)$ (see, e.g., page 15 in~\cite{DBLP:books/el/07/BBW2007}).

		To prove \cref{thm:alcnotrecpfinitetree}, we use  the following technical results.
		%\begin{toappendix}
		\begin{lemma}[Chapter~3, Lemma~9 in~\cite{DBLP:books/el/07/BBW2007}]\label{lem:bisimulation}
			Given a pointed interpretation $(\Imc,d)$, the \emph{unfolding}  $(\Imc^u,d)$
			of $(\Imc,d)$ is a tree-shaped pointed interpretation such that:
			\begin{itemize}
				\item   $(\Imc,d)$ and $(\Imc^u,d)$ are bisimilar;
				\item for all $\ALC$ ${\text{concepts}}$ $C$,
				we have that $d\in C^\Imc$ iff $d\in C^{\Imc^u}$. 
			\end{itemize}
		\end{lemma}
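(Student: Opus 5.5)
Since this lemma is quoted from Chapter~3, Lemma~9 of~\cite{DBLP:books/el/07/BBW2007}, I would prove it by first showing that the unfolding is tree-shaped, then building an explicit bisimulation, and finally deriving the \ALC invariance by structural induction.

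\textbf{Tree-shapedness.} Every element of $\Delta^{\Imc^u}$ is a path starting at $d$. The only $r$-predecessor of a path $p\cdot r\cdot f$ is $p$, obtained by deleting its last step. Hence the relational structure of $\Imc^u$ is a tree rooted at the trivial path $d$. Unfolding $(\Imc^u,d)$ once more yields paths of paths. The map sending a path $p$ of $\Imc$ to the unique path in $\Imc^u$ from the root $d$ to $p$ is a bijection. It preserves concept names, because the tail of that path is $p$ and $p\in A^{\Imc^u}$ iff $\mathsf{tail}(p)\in A^\Imc$. It preserves roles in both directions by construction. So $(\Imc^u,d)$ is isomorphic to its own unfolding, which is exactly the definition of tree-shaped.

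\textbf{Bisimulation.} Define $Z=\{(\mathsf{tail}(p),p)\mid p\in\Delta^{\Imc^u}\}\subseteq\Delta^\Imc\times\Delta^{\Imc^u}$. It is non-empty and contains $(d,d)$.
\begin{itemize}
\item (atom) This holds by the definition of $A^{\Imc^u}$.
\item (forth) Suppose $(\mathsf{tail}(p),f)\in r^\Imc$. Then $p\cdot r\cdot f$ is a path starting at $d$, it is an $r$-successor of $p$, and its tail is $f$. So the pair $(f,p\cdot r\cdot f)$ lies in $Z$.
\item (back) Every $r$-successor of $p$ in $\Imc^u$ has the form $p\cdot r\cdot f$ with $(\mathsf{tail}(p),f)\in r^\Imc$. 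Its tail $f$ is therefore an $r$-successor of $\mathsf{tail}(p)$, and $(f,p\cdot r\cdot f)\in Z$.
\end{itemize}

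\textbf{Invariance.} I prove a stronger claim by induction on the structure of an \ALC concept $C$: for every pair $(e,q)\in Z$, we have $e\in C^\Imc$ iff $q\in C^{\Imc^u}$.
\begin{itemize}
\item The cases $\top$ and concept names follow from (atom).
\item The cases $\neg$ and $\sqcap$ follow directly from the induction hypothesis.
\item For $\exists r.C$, the left-to-right direction uses (forth) together with the induction hypothesis on $C$, and the right-to-left direction uses (back).
\end{itemize}
Instantiating the claim at $(d,d)$ gives the second item of the lemma.

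I expect no real obstacle. The only delicate point is the bookkeeping in the tree-shapedness step, namely that the unfolding of an already unfolded structure adds nothing new. Alternatively, that step can simply be cited from the standard literature, as the paper does.
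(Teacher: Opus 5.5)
Your proof is correct and is the standard argument behind the cited result. The paper gives no proof of its own beyond the citation and the remark that a bisimulation between $\mathcal{I}$ and $\mathcal{I}^u$ containing $(d,d)$ exists ``by definition''; your relation $Z=\{(\mathsf{tail}(p),p)\}$ makes that bisimulation explicit, and \ALC invariance then follows by the usual structural induction over all pairs in $Z$. One point in the tree-shapedness step is worth tightening: a non-root path $p\cdot r\cdot f$ also has no $s$-predecessor for any $s\neq r$, so both its parent and its incoming role label are unique, and this is what makes the path from the root to $p$ in $\mathcal{I}^u$ unique.
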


		\begin{definition}[$k$-bisimulations]
			Let $\Imc,\Jmc$ be two interpretations (over a  signature). Let $Z_0\subseteq\ldots\subseteq Z_k\subseteq\Delta^{\Imc}\times\Delta^{\Jmc}$ be binary relations on the domains of $\Imc$ and $\Jmc$. 
			We say that the tuple $(Z_0,\ldots,Z_k)$ of non-empty binary relations is a \emph{$k$-bisimulation}  between $\Imc$ and $\Jmc$ if, for every $0\leq i\leq k$ and $(d,e)\in Z_i$, the following  conditions  hold:
			\begin{itemize}
				\item  $d\in A^{\Imc}$ iff $e\in A^{\Jmc}$ for all $A\in\NC$;
				\item if $i<k$, then for all $r\in\NR$ and $(d,d')\in r^{\Imc}$, there is $(e,e') \in r^{\Jmc}$   with $(d',e')\in Z_{i+1}$;
				\item  symmetric to the (forth).
			\end{itemize}
			
		\end{definition}
		We denote by $\Imc^u_{|k}$ the subinterpretation of $\Imc^u$ induced by
		removing all paths of length larger than 
		$k$ from $\Delta^{\Imc^u}$.
		By definition, there is a $k$-bisimulation
		between $\Imc^u_{|k}$ and $\Imc^u$
		including the pair $(d,d)$ and,
		by \cref{lem:bisimulation}, there is a simulation between \Imc and $\Imc^u$. So the following lemma holds.
		%\cite{DBLP:books/cu/BlackburnRV01,DBLP:books/el/07/GorankoO07} (see also Proposition 2.6 in~\cite{DBLP:conf/aiml/BolanderB24}).
		
		\begin{lemma}[\cite{DBLP:books/cu/BlackburnRV01,DBLP:books/el/07/GorankoO07}(see also Proposition 2.6 in~\cite{DBLP:conf/aiml/BolanderB24})]\label{lem:kbisimulation}
			For all $\ALC$ ${\text{concepts}}$ $C$ of depth $k$, and all pointed interpretations
			$(\Imc,d)$, we have that 
			$d\in C^\Imc$ iff $d\in C^{\Imc^u_{|k}}$.
		\end{lemma}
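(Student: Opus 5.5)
We prove the statement of \cref{lem:kbisimulation} by structural induction on $C$, after moving to the unfolding. By \cref{lem:bisimulation}, $d\in C^\Imc$ iff $d\in C^{\Imc^u}$, where on the right $d$ denotes the path of length $0$. It therefore suffices to show $d\in C^{\Imc^u}$ iff $d\in C^{\Imc^u_{|k}}$.

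For this we prove a stronger claim. For every path $p\in\Delta^{\Imc^u}$ of length $\ell\le k$ and every \ALC concept $D$ of depth at most $k-\ell$, we show $p\in D^{\Imc^u}$ iff $p\in D^{\Imc^u_{|k}}$. The original statement is the case $\ell=0$, $D=C$. The proof is by induction on the structure of $D$, treating $\sqcup$ and $\forall$ as abbreviations.

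The cases are as follows.
\begin{itemize}
\item For $D=\top$ the claim is trivial.
\item For a concept name $A$, it holds because $\Imc^u_{|k}$ is a subinterpretation of $\Imc^u$, so $A^{\Imc^u_{|k}}=A^{\Imc^u}\cap\Delta^{\Imc^u_{|k}}$.
\item The cases $\neg D'$ and $D_1\sqcap D_2$ follow directly from the induction hypothesis, since the depth does not increase.
\item The only real case is $D=\exists r.D'$, where $D'$ has depth at most $k-\ell-1$; in particular $\ell<k$. Suppose $p\in D^{\Imc^u}$, witnessed by an $r$-successor $p'=p\cdot r\cdot f$. Then $p'$ has length $\ell+1\le k$, so $p'$ lies in $\Delta^{\Imc^u_{|k}}$, and the edge $(p,p')$ is kept because $\Imc^u_{|k}$ is the induced subinterpretation. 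The induction hypothesis applied to $p'$ and $D'$ gives $p'\in D'^{\Imc^u_{|k}}$, hence $p\in D^{\Imc^u_{|k}}$.
\item Conversely, any $r$-successor of $p$ in $\Imc^u_{|k}$ is also one in $\Imc^u$, and the induction hypothesis transfers $D'$ back.
\end{itemize}

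The main point to get right is this existential step. The induction must be phrased relative to the remaining depth budget $k-\ell$, so that every successor needed as a witness stays within the truncation. The \emph{induced} subinterpretation is also essential, since it guarantees that the relevant role edges are preserved.

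Alternatively, we could use the $k$-bisimulation $(Z_0,\ldots,Z_k)$ between $\Imc^u_{|k}$ and $\Imc^u$, with $Z_i$ the identity on paths of length at most $k-i$. We would then invoke the standard fact that $k$-bisimilar points satisfy the same \ALC concepts of depth at most $k$. That fact is proved by exactly the same induction, so we give the direct argument.
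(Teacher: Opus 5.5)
Your proof is correct and follows the same decomposition as the paper's justification: pass to the unfolding via \cref{lem:bisimulation}, then compare $\Imc^u$ with its truncation $\Imc^u_{|k}$. The only difference is that you prove the second step by a direct depth-budgeted induction, where the paper exhibits the $k$-bisimulation and invokes \cref{lem:kinvariance}; your induction is just that lemma's proof specialised to the identity relation.

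One small correction to your closing remark. Under the paper's convention for $k$-bisimulations ($Z_0\subseteq\cdots\subseteq Z_k$, root pair in $Z_0$, successors of $Z_i$-pairs landing in $Z_{i+1}$), $Z_i$ should be the identity on paths of length at most $i$. Your indexing by $k-i$ fits the reverse, textbook convention instead.
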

		
		%\todo[inline]{relembrar q \Imc e isomorphico ao unfolding de \Imc}
		
		\begin{lemma}\label{lem:khbisimulation}[\cite{DBLP:books/cu/BlackburnRV01,DBLP:books/el/07/GorankoO07}]
			If $(Z_0,\ldots,Z_k)$ is a $k$-bisimulation between
			two interpretations $\Imc$ and $\Imc'$ 
			then, for all $0\leq k'\leq k$, $(Z_0,\ldots,Z_{k'})$ is a
			$k'$-bisimulation between
			$\Imc$ and $\Imc'$.   
		\end{lemma}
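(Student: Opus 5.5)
The statement says that truncating a $k$-bisimulation to its first $k'+1$ relations yields a $k'$-bisimulation, so I would verify the definition of $k'$-bisimulation clause by clause for the tuple $(Z_0,\ldots,Z_{k'})$. Fix $0\leq k'\leq k$ and assume $(Z_0,\ldots,Z_k)$ is a $k$-bisimulation between $\Imc$ and $\Imc'$.

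The structural requirements come first. Each $Z_i$ with $i\leq k'$ is among the original relations, so it is non-empty and contained in $\Delta^{\Imc}\times\Delta^{\Imc'}$. The chain $Z_0\subseteq\ldots\subseteq Z_{k'}$ is an initial segment of the original chain.

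Next come the three conditions, for each $0\leq i\leq k'$ and $(d,e)\in Z_i$:
\begin{itemize}
\item The atomic condition ($d\in A^{\Imc}$ iff $e\in A^{\Imc'}$ for all $A\in\NC$) holds because it already holds for every $i\leq k$, in particular for $i\leq k'$.
\item For the forth condition, suppose $i<k'$ and $(d,d')\in r^{\Imc}$. Since $k'\leq k$, also $i<k$, so the original forth condition gives $e'$ with $(e,e')\in r^{\Imc'}$ and $(d',e')\in Z_{i+1}$. Because $i+1\leq k'$, the relation $Z_{i+1}$ belongs to the truncated tuple, so the witness is valid there.
\item The back condition is symmetric.
\end{itemize}

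I do not expect a real obstacle. The only point needing care is the index bookkeeping: the forth and back clauses are only demanded for $i<k'$, and this is exactly what keeps the required $Z_{i+1}$ inside the truncated tuple.
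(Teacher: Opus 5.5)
Your proof is correct and complete. The paper gives no proof of its own; it simply cites the result from Blackburn et al.\ and Goranko--Otto. Your direct check against the paper's definition is exactly the routine verification that citation stands for, and the key bookkeeping is right: $i<k'$ forces $i<k$, so the original back/forth condition applies, and it also gives $i+1\le k'$, so the witness in $Z_{i+1}$ stays inside the truncated tuple.
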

		\begin{lemma}[\cite{DBLP:books/cu/BlackburnRV01,DBLP:books/el/07/GorankoO07}]\label{lem:kinvariance}
			For all pointed interpretations
			$(\Imc,d)$ and $(\Jmc,d')$, the following is equivalent: (1)
			there is a $k$-bisimulation between $\Imc$ and $\Jmc$ including $(d,d')$ and (2)
			for all $\ALC$ ${\text{concepts}}$ $C$ of depth up to  $k$, $d\in C^\Imc$ iff
			$d'\in C^\Jmc$. 
		\end{lemma}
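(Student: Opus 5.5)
The plan is to prove both directions via a strengthened, level-indexed statement. The index $i$ of $Z_i$ measures how deep we are below the starting pair, so "including $(d,d')$" is read as $(d,d')\in Z_0$; by monotonicity this puts the pair in every $Z_i$. Accordingly, a pair in $Z_i$ should agree on all concepts of depth at most $k-i$.

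\emph{(1) $\Rightarrow$ (2).} Fix a $k$-bisimulation $(Z_0,\ldots,Z_k)$ with $(d,d')\in Z_0$. I will show by induction on the structure of $C$ the following claim: for all $0\leq i\leq k$, all $(x,y)\in Z_i$, and all \ALC concepts $C$ of depth at most $k-i$, we have $x\in C^\Imc$ iff $y\in C^\Jmc$.
\begin{itemize}
\item The cases $\top$ and concept names follow from the atom condition, which holds at every level.
\item The cases $\neg$ and $\sqcap$ are immediate from the induction hypothesis, since subconcepts have no larger depth.
\item Let $C=\exists r.D$ with $x\in C^\Imc$. Then the depth of $C$ is at least $1$, so $i<k$. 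Take an $r$-successor $x'$ of $x$ with $x'\in D^\Imc$. The forth condition gives an $r$-successor $y'$ of $y$ with $(x',y')\in Z_{i+1}$. Since $D$ has depth at most $k-(i+1)$, the induction hypothesis yields $y'\in D^\Jmc$, hence $y\in C^\Jmc$. The converse uses the back condition.
\end{itemize}
Instantiating the claim with $i=0$ and $(x,y)=(d,d')$ gives (2). The disjunction and universal quantification shorthands are covered, since they are definable from $\neg$, $\sqcap$ and $\exists$.

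\emph{(2) $\Rightarrow$ (1).} Define $Z_i$ as the set of pairs $(x,y)\in\Delta^\Imc\times\Delta^\Jmc$ that agree on all \ALC concepts of depth at most $k-i$. These relations have the required shape:
\begin{itemize}
\item They form an increasing chain $Z_0\subseteq\cdots\subseteq Z_k$, because the agreement requirement weakens as $i$ grows.
\item Each $Z_i$ contains $(d,d')$, by (2) and monotonicity, so each is non-empty as required.
\item The atom condition holds, because concept names have depth $0$.
\end{itemize}

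It remains to verify the forth condition, and the back condition is symmetric. Let $(x,y)\in Z_i$ with $i<k$, and let $(x,x')\in r^\Imc$. Suppose, towards a contradiction, that no $r$-successor $y'$ of $y$ has $(x',y')\in Z_{i+1}$. Then for each $r$-successor $y'$ of $y$ there is a concept $C_{y'}$ of depth at most $k-i-1$ on which $x'$ and $y'$ disagree. By replacing $C_{y'}$ with its negation if necessary, we may assume $x'\in C_{y'}^\Imc$ and $y'\notin C_{y'}^\Jmc$.

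The intended witness is $\exists r.\bigsqcap_{y'}C_{y'}$. It has depth at most $k-i$, and $x$ satisfies it while $y$ does not. This contradicts $(x,y)\in Z_i$.

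\textbf{Main obstacle.} The conjunction $\bigsqcap_{y'}C_{y'}$ must be finite. I would resolve this as the cited sources do: over a finite signature there are, up to equivalence, only finitely many \ALC concepts of each bounded depth. This is proved by a straightforward induction on the depth, using that a concept of depth $n+1$ is a Boolean combination of names and $\exists r.D$ with $D$ of depth at most $n$. Hence the set $\{C_{y'}\}$ can be taken finite even when $y$ has infinitely many successors.

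If the signature is infinite, this step fails in general: if $y$ has infinitely many $r$-successors, the finite conjunction need not exist. In that case I would restrict the relevant concept names to a finite subsignature, or assume image-finiteness, matching the hypotheses of the cited result.
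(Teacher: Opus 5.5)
Your proof is correct and is essentially the standard argument behind the results the paper cites; the paper itself gives no proof, only the citation. The argument is induction on concepts with level-indexed agreement for (1)$\Rightarrow$(2), and for (2)$\Rightarrow$(1) the relations ``agree on all concepts of depth at most $k-i$'' together with the finiteness, up to equivalence, of depth-bounded \ALC concepts over a finite signature. Your caveat is also justified: over an infinite signature, (2)$\Rightarrow$(1) genuinely fails for infinitely branching interpretations, so the statement silently needs a finite signature (or image-finiteness), whereas (1)$\Rightarrow$(2), the only direction the paper actually uses, holds in full generality by your induction.
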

		
	\end{toappendix}
	%\paragraph{Chains} %In this paper, 
	%In some of  our proofs and examples, we  use interpretations representing finite and infinite chains of role names. We define these interpretations as follows. 
	Given a fixed but arbitrary %$a\in\NI$ and 
	$r\in\NR$, we define 
	%\begin{itemize}
	%	\item 
	\(M^n=(\mathbb{N},\cdot^{M^n})\)
	where 
	\(r^{M^n} = \{(i,i+1)\mid i\in \mathbb{N}, 0\leq i < n\}\)
	%and $a^{M_n}=0$, 
	and similarly
	%\item 
	\(M^{\infty}=(\mathbb{N},\cdot^{M^{\infty}})\)
	where \(r^{M^{\infty}} = \{(i,i+1)\mid i\in \mathbb{N}\}.\)
	%\todo{  $r$-successor}
	%Given a finite tree-shaped pointed interpretation $(\Imc,d)$, 
	%we denote by $C_e$ the \EL concept
	%`rooted' in $e\in\Delta^\Imc$. That is, let $(\Imc',e)$ be the subinterpretation of $(\Imc,d)$ induced by taking the subset of $\Delta^\Imc$ that consist of $e$ and all elements in $\Delta^\Imc$ reachable from $e$.

	\medskip
	\noindent
	\textbf{DL concepts in Satisfaction Systems}
	In a satisfaction system, we use the term `base' for a subset of formulas in a logic language. 
	%In the DL concepts setting, 
	We treat  a set of concepts and a concept formed by the  conjunction of the elements of the set interchangeably. So we may refer to a base as a concept or as a finite set of concepts (in the latter, we mean the concept formed by the conjunction). 
	The notion of a `model' in a satisfaction system corresponds to the notion of a pointed interpretation.

\section{Model Reception and Eviction}\label{sec:model_change}
%\todo[color=red]{put in 1.5 pages}
This section addresses the problem of modifying a finite base in light of a set of models, as illustrated in \cref{ex:platypus_evc} and \cref{ex:devel_rcp}.
%\todo{recheck if example remains after changing intro} 
%in the introduction. 
We call such kinds of operators \emph{model change operators}. 
\citet{ouraaai} distinguished two main primitive kinds of model change operators: 
\begin{description}
	\item[] \textit{eviction}: 
	remove a set $\mSet$ of models from a base  $\baseb$, %that occurs in a given set $\mathbb{M}$ of models, 
	that is, 
	turn $\baseb$ into a base $\baseb'$ whose models are not in $\mathbb{M}$;
	\item[] \textit{reception}: incorporate all models from the set $\mSet$ into a base $\baseb$, that is, turn $\baseb$ into a base $\baseb'$ such that all   models in $\mathbb{M}$ satisfy $\baseb'$.
\end{description}

In some scenarios, not all sets of models need to be taken into account. 
For instance, some DLs  have the finite model property
or %, while others have 
the tree-shaped property. 
%As in such logics, these models are deemed the most plausible ones;
So, it makes sense to also consider model change operators that only take into account such classes of models.
%should consider   such models. 
%We call such spaces of models a class of models. 
Formally, a class of models on a satisfaction system $\logsys = (\llang, \mUni, \models)$ is a set $\Cmc \subseteq \powerset(\mUni)$ with sets of interpretations.  
Model change operators are, therefore, defined on a given class of models. 
\begin{definition}    
	%Let $\Cmc$ be a class of models on a satisfaction system. 
	A model change operator in a class $\Cmc$ of models is a function  $\circ: \finitepwset(\llang) \times \Cmc \to \finitepwset(\llang)$, mapping each finite base $\baseb$ into a finite base $\baseb'$ in light of a set of models. 
\end{definition}

For conciseness, we may omit the reference to the class $\Cmc$ of models if $\Cmc$ is clear from the context. 
The main challenge of model change operators is to guarantee the finiteness of the new base, which presents two main hurdles:
%The hurdle has two main reasons:
\begin{enumerate}
	\item some sets of models cannot be uniquely added/removed to/from some bases, %some bwhen adding or removing a set of interpretations $\mSet$, these interpretations cannot, in general, be uniquely added/removed, 
	as \cref{ex:nonunique} below illustrates. 
	% In this case, extra models must be minimally added/removed so that a finite base is properly achieved. 
	%So, extra models must be added/removed in behalf of finiteness.
	% For example, in several DLs, like 
	% \EL and \ALC, the models are closed under bisimulation, and therefore, incorporating/removing a model implies removing/adding all other bisimilar models. 
	%This makes it hard to find a proper finite representation for the new base. 
	
	\item the simple addition/removal of a set of models might not be finitely representable. 
	This occurs because the language of logic is not expressive enough to distinguish all the models in a set from those which are not in the set. 
	%In this case, extra models must be added/removed to achieve a finite base. 
	This issue is illustrated at 
	\cref{ex:separate}. 
	%\todo[inline]{Moved to new example 7, which better shows this. Example 6 would be better to shown motivate issues with compliance}
\end{enumerate}

\begin{example}\label{ex:nonunique}
	Consider the $\EL$ concept 
	$\exists r^3. \top $, with $\NC = \emptyset$, $\NR = \{r\}$, and the pointed models $(\Imc_1,d_1)$ and $ (\Imc_2,d_1)$ with domains $\Delta^{\Imc_1} = \{d_1,d_2\}$ and $\Delta^{\Imc_2} = \{d_1,d_2,d_3\}$, where 
	\begin{align*}
		\Imc_1 : \boxed{\begin{aligned}[t]
				r^{\Imc_1}= &\ \{(d_1,d_2)\}\\
		\end{aligned}} \hspace{2ex}
		\Imc_2 : \boxed{\begin{aligned}[t]
				r^{\Imc_2}= &\ \{(d_1,d_2),(d_2,d_3)\}
		\end{aligned}}
	\end{align*}
	Neither $(\Imc_1,d_1)$ nor  $(\Imc_2,d_1)$ are models of  $\exists r^3. \top$. 
	Suppose that we want to add $(\Imc_1,d_1)$ to the models of $\exists r^3. \top $.
	%, that is, construct a concept that contains the models of $\exists r^3. \top$  and $(\Imc_1,d_1)$. 
	In $\EL_\bot$, however, every concept satisfied by $(\Imc_1,d_1)$ and $\modelsof{\exists r^3. \top}$ is also satisfied by $(\Imc_2,d_1)$. There is no concept separating  $(\Imc_2,d_1)$ from  $\modelsof{\exists r^3. \top} \cup \{(\Imc_1,d_1)\}$. 
\end{example}

% \begin{example}\label{ex:separate}
	% Consider the concept $\exists r.\top$ and assume we only want to remove the model $(M^\infty,0)$ (see def. of the infinite chain $M^\infty$ in \cref{sec:prelidl}). In \ALC, there is no
	% finite concept that can `best' represent this. 
	% To see this, consider concepts of the form
	% $C_n=(\forall r^{n+1}.\bot)\sqcap (\exists r^n.\top)$. The first conjunct expresses that there is no chain of size greater than $n$ while the second expresses that there is a chain of size $n$. Such concepts can have $n$ arbitrarily large, there is no finite union of these concepts that includes all  models $(M^n,0)$, with $n\in\mathbb{N}$,  while excluding $(M^\infty,0)$. 
	% \end{example}

\begin{example}\label{ex:separate}
	Let $\baseb = \{\exists r. \top\}$ be an \EL concept and let the signature be $\NC = \{A\}, \NR=\{r\}$. 
	We want to evict the pointed model $(\Imc, d)$ with $\Delta^{\Imc} = \{d\}$, $r^{\Imc} = \{(d,d)\}$, and $A^\Imc=\emptyset$. 
	%\begin{align*}
	% $ \Imc: \boxed{\begin{aligned}[t]
			%      %A^{\Imc} = \{d'\}, 
			%      r^{\Imc} = \{(d,d)\}.
			%  \end{aligned}}
	%  $
	%\end{align*}
	We have that $(\Imc,d)$ satisfies $\baseb$. 
	The %simple 
	removal of $(\Imc, d)$ from $\modelsof{\baseb}$ yields the base 
	$\baseb'=\{  \exists r. A, \exists r^2.A, \cdots, \exists r^n.A, \cdots   \}$,
	%$\modelsof{\baseb}$ does no yield a finite base. $\modelsof{\baseb}\setminus \{(\Imc,d)\}$ yields  $\baseb' = \{ C \in \EL_{concepts} \mid  \modelsof{\baseb}\setminus \{(\Imc,d)\} \models C\}$. Hence, $$ \baseb' = \{ \exists r. A, \exists r^2.A, \cdots, \exists r^n.A \cdots   \}.  $$
	%Precisely, $\baseb'$ corresponds to the set of all \EL concepts satisfied by $\modelsof{\baseb} \setminus \{(\Imc,d)\}$. 
	which is not finitely representable. %as in fact we need infinitely many concepts to represent this base. 
	% If it were finitely representable, then there would be a finite subset $\baseb ''$ of it that entails all concepts of $\baseb'$. 
	% Each concept $C_i = \exists r^i. A$ in $\baseb'$ is violated by $(\Imc,d)$, and $\baseb'$ entails $\baseb$. 
	% %and every other concept entailed from  $\baseb$violated by $\Imc$ is entailed from some $C_i$. 
	% There is no finite subset of $\baseb'$, however, that entails $\baseb'$, as no $C_i$ entails $C_{i+i}$. 
	% %So, there is no concept capable of separating $\modelsof{\baseb}\setminus \{(\Imc,d)\}$ from its complement. \todo{reve se esta tudo ok aqui}
	% The finite subset that conserves most of the models from $\baseb$ is the finite base $\baseb'' = \{ \exists r. A \}$, as $\exists r. A$ entails all concepts in $\baseb'$, while violated by $(\Imc,d)$. A proper eviction operator must identify such a finite base that conserves most of the original models of $\baseb$.
	% \todo{put proof in the appendix ... should we have the formal proof in the appendix, and an observation that the base B' from example is not finitely representable, even more explicitly, $Form(\ models {\baseb}\setminus\{\Imc\})$ is not fin.rep. }
\end{example}

On both cases 1 and 2, as illustrated respectively on \cref{ex:nonunique} and \cref{ex:separate}, extra models must be added/removed to achieve a finite base. %For the second case, 
Such addition/removal should be minimised, so only models that do contribute to reaching  finiteness are considered. %, and are added/removed as little as possible. 
In this case, a ``closest'' finite base is produced. 
Such minimality criteria are properly addressed in the form of rationality postulates. % and present some specific nuances that differ from eviction to reception. %We briefly review eviction and reception in the next subsection. 
%The principle of minimal change is formalised via rationality postulates, and 
The appropriate class of all eviction/reception operators abiding by such postulates are identified. 

We show that in several classes of models, the minimality principles cannot be guaranteed, which is due to the non-existence of a ``closest'' finite base, known as the compatibility problem \citep{ouraaai}. % of eviction and reception. 
We briefly review reception and eviction, respectively, on \cref{sec:reception} and \cref{sec:eviction}. 
%We discuss their rationality postulates and respective operators. 
%In \cref{sec:dlconcept_evc_rcp}, we identify under which conditions the compatibility of eviction and reception is guaranteed for Description Logic Concepts. 
%We briefly review their postulates and respective rational operators. 
%The main hurdle with compatibility, as explored by ...., is the existence of den
%the compatinility, as per the  eviction and reception operators 
%We also discuss the problem of a class of models being compatible with eviction and reception. 
%Due to the issues 

\subsection{Reception}\label{sec:reception}
%\todo[inline]{check reception compatibility how defined in previous work}

We  denote belief change operators related to reception as $\mExp$. \citet{ouraaai} proposed the following rationality postulates to govern reception %governing reception are listed below. 

\begin{description}[leftmargin=*]
	\item[(success)] \(\mSet \subseteq \modelsof{\mExp(\baseb, \mSet)}\)
	
	\item[(persistence)] \(\modelsof{\baseb} \subseteq
	\modelsof{\mExp(\baseb, \mSet)} \) 
	
	\item[(finite temperance)] \(\mSet'
	\not\in \FRsets(\logsys)\), if
	\(\modelsof{\baseb} \cup \mSet  \subseteq \mSet'\) and \(\mSet' \subset
	\modelsof{\mExp(\baseb, \mSet)}\) 
	
	\item[(uniformity)] $\modelsof{\mExp(\baseb, \mSet)} =\modelsof{ \mExp(\baseb',
		\mSet')}$, if $\FRsups(\modelsof{\baseb} \cup \mSet, \logsys)
	=\FRsups(\modelsof{\baseb'} \cup \mSet', \logsys)$ .
\end{description}

%In the remainder of this section, all the results are from \citep{ouraaai}.
\emph{Success} ensures that each model from $\mSet$ must be incorporated. 
The purpose of reception is to accommodate new models. 
\emph{Persistence} ensures that no model is removed in the process.  
%\emph{Uniformity} for reception is analogous to \emph{uniformity} for eviction: reception is neither syntax sensitive nor sensitive to model structure. 
\emph{Finite temperance} %is analogous to \emph{finite retainment}, and 
ensures that the addition of extra models is minimized, adding only models that contribute to reaching a finite base. 
\emph{Uniformity} ensures that reception is neither syntax sensitive nor sensitive to model structure. 
% Two finite bases $\baseb$ and $\baseb'$ might differ syntactically but yet be semantically equivalent. For example, in classical propositional logic, the bases $\{a, \ a\to b\}$ and $\{a, b\}$ are semantically equivalent, although syntactically different. 
% Uniformity guarantees that reception is not syntax sensitive. 
% A similar phenomenon occurs for interpretations.fbe
% In several logics, some interpretations, although different in structure, may satisfy exactly the same formulae, making them indistinguishable. 
% For example, in \ALC, models are closed under bisimilarity (see \cref{def:bisimulaiton}). 
% Intuitively, if two interpretations 
% are indistinguishable, then reception of  
% either interpretations 
% should yield the same result. 

For instance, in \cref{ex:nonunique}, the pointed interpretation $(\Imc_1,d_1)$ cannot be separated from $(\Imc_2,d_1)$ in the presence of the models of $\baseb$, so reception, on $\baseb$,  of the sets $\{(\Imc_1,d_1)\}$ and $\{(\Imc_1,d_1), (\Imc_2,d_1)\}$ must coincide. 

% This can be generalised for sets of interpretations. If two sets $\mSet_1$ and $\mSet_2$ satisfy precisely the same set of formulae, then evicting either $\mSet_1$ or $\mSet_2$ should yield the same result. 
% Trivially, if the trivial removal of $\mSet_1$ from $\baseb$ reaches a finitely representable base, then removing $\mSet_2$ will yield the same result. 
% However, if the trivial result does not reach a finite representable base, then a choice must be made upon the minimal removal, as \emph{finite retainment} demands. 
% In this case, the choices for evicting $\mSet_1$ must coincide with the choices for evicting $\mSet_2$. 
%
%\begin{definition}
A \emph{reception operator} on a class $\Cmc $ of models is a model change operator $\mExp: \finitepwset(\llang) \times \Cmc \to \finitepwset(\llang)$ that satisfies \emph{success}.
%\end{definition}
%\todo[inline]{check minfrsups and maxfrsubs for the second parameter}
A reception operator satisfying all rationality postulates is called \emph{rational}. 
% In the easiest case, reception of a set of models $\Mplus$ corresponds to simply augmenting the models of $\baseb$ with $\Mplus$, that is, 
When there is no finite base for 
%writing a base for the set 
$\modelsof{\baseb} \cup \mSet$,   
some further models must be added in favor of finiteness. 
Such extra removal must be minimized as finite-temperance demands. 
This corresponds to picking a least finitely representable superset of  
$\modelsof{\baseb} \cup \mSet$, that is, picking a set from 
$ \FRsups(\modelsof{\baseb}\cup \mSet, \logsys) $. 
However, for some classes of models, such a least finitely representable superset does not exists, that is, $\FRsups(\modelsof{\baseb}\cup \mSet, \logsys) $ is empty. 
In such cases, therefore, finite-temperance cannot be satisfied, which implies in the inexistence of rational reception operators. 
Classes of models in which such least finitely representable supersets exist are called reception-compatible. 

\begin{definition}
	A class $\Cmc$ of models %$\Cmc$ 
	on a satisfaction system $\logsys$,
	is reception-compatible iff for all $\mSet \in \Cmc$ and base $\baseb \in \finitepwset(\llang)$,
	$\FRsups(\modelsof{\baseb}\cup \mSet, \logsys) \neq \emptyset$.
\end{definition}

%\citet{ouraaai} have identified several satisfaction systems whose 
Several classes of all models are not reception-compatible, as we show in  \cref{sec:dlconcept_evc_rcp}. 
For instance, in the DL  $\EL_{\bot}$, the class of all its models is not reception compatible, as 
\Cref{ex:chains} illustrates. 
% In \cref{sec:dlconcept_evc_rcp}, we investigate the issue of reception-compatibility for classes of models on DL concepts. %\todo[color=red]{Na tabela da sec 4, apareced Elbottom, mas na verdade eh EL bot concept, a notacao la precisa ser mudada para nao causar confusao.}
%We show in \crefone of theses cases. \todo[color=red]{ver com Ana, EL bot nao eh, mas EL bot concept eh. Seria bom explicar aqui isso? Se sim, como fazer sem causar confusao.}
%We show on \cref{sec:dlconcepts}, that when the language is restricted on  % this issue. 
% when $\Cmc$ covers all sets of models of the satisfaction system, that is, when $\Cmc = \powerset(\mSet)$. 

% \todo[color=red,inline]{should we put one with finie model as well?}
\begin{example}\label{ex:chains} %\todo[color=green]{remove ex to gain space}
	Consider the concept $\bot$ in  $\EL_\bot$ %description logic 
	and a signature with  $\NR=\{r\}$. Suppose $\Mplus=\{(M^{\infty},0)\}$ (see def. %of $M^{\infty}$ 
	in \cref{sec:prelidl}). %That is, 
	We would like to include an interpretation pointed at an infinite chain but we are not asking the finite chains $M^n$ to be included. There is no $\EL_\bot$ concept that best represents adding this set to $\bot$. Given a concept $\exists r^n.\top$ 
	%of depth $n$ 
	one can always create another concept $\exists r^{n+1}.\top$ %of depth $n+1$ 
	that includes   $\Mplus$ but has strictly less models: 
	$  \modelsof{\exists r. \top} \supset  \modelsof{\exists r^2. \top} \supset \cdots \supset \modelsof{\exists r^n. \top} \supset \cdots $.
\end{example}

% On reception-compatible classes, 
% the closest finitely representable bases corresponds to the most plausible candidates for reception. When several plausible candidates exist, a choice must be made. % when several plausible candidates exist. 
% The choice is realised by a choice function (\cref{def:choicefun}).
% This strategy yields rational reception operators (\cref{def:reception_con}).  

% \begin{definition}\label{def:choicefun}\citep{ouraaai}
	%     A choice function  on a  class  $\Cmc$ of models is a map $\sel: \finitepwset(\finitepwset(\Cmc)) \to \finitepwset(\Cmc)$ s.t. 
	% $\sel(X) \in X$, if $X \neq \emptyset$. 
	% \end{definition}
% % A choice function must choose exactly one element, whenever the list of choices is not empty. 

% % \todo[inline, color=green]{maybe bring the example and definitin of compatibility here}

% \begin{definition}\label{def:reception_con}\citep{ouraaai}
	%     Let $\Cmc$ be a reception-compatible class of models, and $\sel$ a choice function on $\Cmc$. 
	%     A maxichoice reception operator $\mExp_{\sel}$, on $\sel$, is a function $\mExp_{\sel}: \finitepwset(\llang) \times \Cmc \to \finitepwset(\llang)$ such that 
	%     $\modelsof{\mExp_{\sel}(\baseb, \Mplus)} = \sel(\FRsups(\modelsof{\baseb} \cup \Mplus, \logsys))$. 
	% %    \todo{add parameter and check}
	% \end{definition}

\Cref{ex:recep} illustrates a reception operation. 

\begin{example}\label{ex:recep}(continued from \cref{ex:nonunique}). 
	%     Let $\baseb = \{\exists r. \top\}$ be a base on $\EL_{concept}$ on the signature $\NR = \{A,B\}, \NR=\{r\}$. 
	%     We want the reception of  the pointed model $(\Imc, d)$ on domain $\Delta^{\Imc} = \{d,d'\}$ such that
	%     \begin{align*}
		%     \Imc= \boxed{\begin{aligned}[t]
				%         A^{\Imc} = \{d\}, B^{\Imc} = \{d'\},  r^{\Imc} = \{(d,d), (d',d')\}
				%    \end{aligned}}
		% \end{align*}
	%     Note that $\Imc$ violates $\baseb$, 
	%     as $(\exists r.A)^{\Imc} \cap (\exists r.B)^{\Imc} = \emptyset$. 
	%     The set $\modelsof{\baseb} \cup \{\Imc\}$ does not yield a finite base, as 
	%     Simply adding $\Imc$ does not yield 
	%     fr$\modelsof{\baseb}$ does no yield a finite base. Indeed, $\modelsof{\baseb}\setminus \{\Imc\}$ yield the following infinite base 
	%     $$ \baseb' = \{ \exists r. A, \exists r^2.A, \cdots, \exists r^n.A \cdots   \}  $$
	% =====
	Recall we want the reception of $\exists r^3. \top$ with 
	$(\Imc_1, d_1)$. Both  $ \exists r^2.\top $ and $\exists r. \top$. 
	are more general than $\exists r^3. \top$. 
	From these, the closest that contains $(\Imc_1,d_1)$ is $\exists r. \top$. 
	So, $ \mExp(\{\exists r^3. \top\}, \{(\Imc_1,d_1)\}) = \{\exists r. \top\}$.
\end{example}

On reception-compatible classes of models, 
\citet{ouraaai} have proposed the family of maxichoice reception operators, which chooses one finitely representable base from $\FRsups$.
Maxichoice reception operators %are characterised by all rationality postulates of reception; that is, the family of maxichoice reception operators 
coincide with all rational reception operators.

% The rationality postulates of reception characterise the class of all maxichoice reception operators, that is, maxichoice operators are all and the only reception operators.

% \begin{theorem}\citep{ouraaai}
	%     A model change operator $\mExp$ on a class $\Cmc$ 
	%     is a rational reception operator iff $\mExp$ is a maxichoice reception operator. 
	% \end{theorem}

% As rational reception operators do not exists in all classes of models, % cannot always be defined on all satisfaction systems, 
% one should identify suitable classes of models, where reception can be properly defined. In \cref{sec:dlconcept_evc_rcp}, we investigate this issue for  DL concepts. 
% We call such classes reception-compatible. 

% \begin{definition}
	%     A class of models %$\Cmc$ 
	%     on a satisfaction system %$\logsys$,
	%     is reception-compatible iff there is a  
	%     \todo{we should have added rational}
	%    rational reception operator %$\mExp$
	%     on it. %$\Cmc$. 
	% \end{definition}

\subsection{Eviction}\label{sec:eviction}

%The principles of minimal change governing eviction are conceptualised via rationality postulates. 
%Let $\mCon$ denote a belief change operator. 
Eviction operators, denoted by $\Evc$, are governed by the following postulates: 

\begin{description}[leftmargin=*]
	\item[(success)] \(\mSet \cap \modelsof{\mCon(\baseb, \mSet)} = \emptyset\).
	\item[(inclusion)] \(\modelsof{\mCon(\baseb, \mSet)} \subseteq
	\modelsof{\baseb}\).
	%
	%\item[(vacuity)] If $\mSet \cap \modelsof{\baseb} = \emptyset$,  then \\
	%\hspace*{1cm} \hfill$\modelsof{\mCon(\baseb, \mSet)} = \modelsof{\baseb}$.
	\item[(finite retainment)] \(\mSet' \not\in \FRsets(\logsys)\), if
	\(\modelsof{\mCon(\baseb, \mSet)}
	\subset \mSet'\) and \( \mSet'\subseteq \modelsof{\baseb} \setminus \mSet\).
	
	\item[(uniformity)] 
	$\modelsof{\mCon(\baseb, \mSet)} = \modelsof{\mCon(\baseb',  \mSet')}$, if 
	$\FRsubs(\modelsof{\baseb} \setminus \mSet,
	\logsys) =\FRsubs(\modelsof{\baseb'} \setminus \mSet',
	\logsys)$ 
	
\end{description}

\emph{Success} ensures that each model from $\mSet$ must be relinquished. As the purpose of eviction is to remove models, \emph{inclusion} ensures that no models are added. \emph{Uniformity}, as for reception, guarantees that eviction is neither syntax sensitive nor sensitive to model structure. 
% \emph{Finite retainment} %and \emph{uniformity} 
% deserves a bit more explanation due to its complex nature. 
\emph{Finite retainment} ensures that the removal of extra models is minimized, retracting only models that contribute to reaching a finite base. 
An \emph{eviction operator} on a class $\Cmc $ of models is a model change operator $\mCon: \finitepwset(\llang) \times \Cmc \to \finitepwset(\llang)$ that satisfies \emph{success}.
%\end{definition}
%Evcition operators satisfying all the rationality postulates are called rational eviction operators.
An eviction operator satisfying all rationality postulates is called \emph{rational}. 
In the best scenario, to evict a set of models $\mSet$ from a finite base $\baseb$, one would simply identify a base for the set $\modelsof{\baseb} \setminus \mSet$. 
However, as not all sets of models are finitely representable, %and 
some further models must be minimally removed in favour of finiteness, as  % as illustrated at \cref{ex:separate}. % and 
%\cref{ex:evic}. 
%Such extra removal must be minimized as 
finite-retainment demands. 
This corresponds to picking a greatest finitely representable subset from 
$\modelsof{\baseb} \setminus \mSet$, that is, picking a set from 
$ \FRsubs(\modelsof{\baseb}\setminus \mSet, \logsys) $. 
This set, in general, is not a singleton, and a choice must be made among the most plausible candidates.  %(see sets $X_1$ and $X_2$ at \cref{ex:evic} below). 
The choice, as for reception, is realised by a choice function.
Similarly to reception, depending on the underlying  class $\Cmc$ of models, $\FRsubs(\modelsof{\baseb}\setminus \mSet, \logsys) $ might be empty. 
In such a case,  finite retainment cannot be satisfied, which means that in such classes rational eviction operators do not exist. 
Classes of models in which the greatest finitely representable subsets exist are called eviction-compatible.

\begin{definition}
	A class $\Cmc$ of models, 
	on a satisfaction system $\logsys$,
	is eviction-compatible iff for all $\mSet \in \Cmc$ and base $\baseb \in \finitepwset(\llang)$,
	$\FRsubs(\modelsof{\baseb}\setminus \mSet, \logsys) \neq \emptyset$.
\end{definition}

On eviction-compatible classes of models, 
\citet{ouraaai} have proposed the family of maxichoice eviction operators, which chooses one finitely representable base from $\FRsets$.
Maxihoicde reception operators are characterised by all rationality postulates of reception.

\section{Eviction and Reception: DL Concepts}\label{sec:dlconcept_evc_rcp}
%\todo[color=green, inline]{Poderiamos juntar sec 4 e 6 para economisar espaco} Ana: encurtei
%In this section, 
Here we investigate eviction and reception on DL concepts, focusing on the prototypical DLs \ALC and \EL. In the following, we denote by \(\logsys(\EL_{\bot\text{concepts}})\)
and \(\logsys(\ALC_{\text{concepts}})\)
%= (\llang_{\ALC_{\text{concepts}}}, \mUni_{\ALC_{\text{concepts}}}, \models_{\ALC_{\text{concepts}}})\)  
the
satisfaction systems for $\EL_\bot$ and $\ALC$ concepts with pointed interpretations as models. Table~\ref{results} summarises our results.  
%\todo{In the table, shouldn't be with concept in the subscript?}

\begin{table}[h]
	\centering
	\begin{tabular}{l |c| c }
		\toprule
		\bf{Sat. System} & \bf{Eviction} & \bf{Reception}  \\
		\midrule
		%\EL & no & yes? \\
		%Bottom con. & $\bot$ &  $\emptyset$\\
		$\logsys(\EL_{\bot con.})$  & yes (\cref{thm:positive}) & no (\cref{thm:notreception}) \\
		$\logsys(\EL_{\bot con.})^\dagger$   & yes (\cref{thm:positive}) & yes (\cref{thm:elreceptiontree}) \\
		\midrule
		%Existential restriction & $\exists R$ & $\{d\mid (d,e)\in R^\Imc\}$\\
		$\logsys(\ALC_{con.})$ & no (\cref{thm:alcconcepteviction}) & no (\cref{thm:notreception}) \\
		%\ALC (fin. tree) & ? & no (\cref{thm:alcnotrecpfinitetree}) \\
		%\ALC (fin. tree, fin. sig.) & ? & no (\cref{thm:alcnotrecpfinitetree}) \\
		%\ALC (fin. tree, fin. set) & ? & no (\cref{thm:alcrecfintreefinset}) \\
		$\logsys(\ALC_{con.})^\ddagger$ & yes (\cref{thm:alcposevictionreception}) & yes (\cref{thm:alcposevictionreception}) \\
		
		\bottomrule
	\end{tabular}
	%\captionsetup{justification=centering}
	\caption{Eviction and reception-compatibility for DL concepts. $\dagger$ is for the case pointed interpretations can only be tree-shaped and $\ddagger$ is for the case  %pointed interpretations 
		they can only be tree-shaped, over finite signatures, and sets of models can only be finite.}\label{results}
\end{table}

%\smallskip

%Our proof strategy for Theorem~\ref{thm:positive} is to invoke  Theorem~\ref{thm:invokefirst} (see~\cite{ouraaai}).
\begin{toappendix}

	\begin{theorem}[Theorem~16~\cite{ouraaai}]\label{thm:invokefirst}
		A satisfaction system $\logsys$ is eviction-compatible iff for every $\mSet \subseteq \mUni$ either (i) $\mSet 
		\in\FRsets(\logsys)$,
		(ii) $\mSet$ has an immediate predecessor in $(\FRsets(\logsys)\cup\{\mSet\},\subset)$, 
		or (iii) there is no $\mSet' \in \FRsets(\logsys)$ with $\mSet \subseteq \mSet'$.
	\end{theorem}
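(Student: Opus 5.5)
The plan is to reduce both directions to one observation about $\FRsubs$, namely that for any $\mSet \subseteq \mUni$,
\[
\FRsubs(\mSet,\logsys) \neq \emptyset \iff \mSet \in \FRsets(\logsys) \text{ or } \mSet \text{ has an immediate predecessor in } (\FRsets(\logsys)\cup\{\mSet\},\subset).
\]
Here I read ``$\logsys$ is eviction-compatible'' as eviction-compatibility of the full class $\powerset(\mUni)$. So the condition to analyse is that $\FRsubs(\modelsof{\baseb}\setminus \mSet,\logsys)\neq\emptyset$ for all finite $\baseb$ and all $\mSet\subseteq\mUni$.

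First I would prove the observation. If $\mSet\in\FRsets(\logsys)$, then $\mSet$ is the maximum of the finitely representable subsets of $\mSet$, so $\FRsubs(\mSet,\logsys)=\{\mSet\}$. Now suppose $\mSet\notin\FRsets(\logsys)$. Then an element $X\in\FRsubs(\mSet,\logsys)$ is a finitely representable set with $X\subset\mSet$ such that no finitely representable $Y$ satisfies $X\subset Y\subseteq\mSet$. Since $\mSet$ itself is not finitely representable, this says exactly that no $Y\in\FRsets(\logsys)\cup\{\mSet\}$ lies strictly between $X$ and $\mSet$. In other words, $X$ is an immediate predecessor of $\mSet$ in $(\FRsets(\logsys)\cup\{\mSet\},\subset)$, and the converse reading gives the other inclusion. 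This is just unwinding the definitions of $\mathsf{max}_\subseteq$ and of immediate predecessor. The only subtlety is that the pre-order is $\subseteq$, which is antisymmetric, so ``maximal'' has its naive meaning.

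For the ($\Rightarrow$) direction, take any $\mSet\subseteq\mUni$ and assume (iii) fails. Then there is $\baseb'\in\finitepwset(\llang)$ with $\mSet\subseteq\modelsof{\baseb'}$. Setting $\mSet_0:=\modelsof{\baseb'}\setminus\mSet$, we get $\modelsof{\baseb'}\setminus\mSet_0=\mSet$. Eviction-compatibility then gives $\FRsubs(\mSet,\logsys)\neq\emptyset$, and the observation yields (i) or (ii).

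For the ($\Leftarrow$) direction, take a finite base $\baseb$ and a set $\mSet_0$, and put $\mSet:=\modelsof{\baseb}\setminus\mSet_0$. Since $\modelsof{\baseb}\in\FRsets(\logsys)$ contains $\mSet$, alternative (iii) fails for this $\mSet$, so (i) or (ii) holds. By the observation, $\FRsubs(\mSet,\logsys)\neq\emptyset$, which is exactly what eviction-compatibility requires.

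I expect the main obstacle to be purely bookkeeping: matching ``immediate predecessor in $\FRsets(\logsys)\cup\{\mSet\}$'' with membership in $\mathsf{max}_{\subseteq}$. The care needed is in keeping the cases $\mSet\in\FRsets$ and $\mSet\notin\FRsets$ separate, so that $\mSet$ itself is never counted as its own predecessor. I also need to check that the quantification over all $\mSet\subseteq\mUni$ matches the class $\powerset(\mUni)$ used in the definition of eviction-compatibility. The key trick of rewriting any $\mSet$ that has a finitely representable superset as a difference $\modelsof{\baseb'}\setminus\mSet_0$ is immediate.
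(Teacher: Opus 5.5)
Your proof is correct. The key equivalence holds: $\FRsubs(\mSet,\logsys)\neq\emptyset$ iff $\mSet\in\FRsets(\logsys)$ or $\mSet$ has an immediate predecessor in $(\FRsets(\logsys)\cup\{\mSet\},\subset)$. Splitting on whether $\mSet\in\FRsets(\logsys)$ is what makes maximality in $\FRsubs$ coincide with being an immediate predecessor. The sets of the form $\modelsof{\baseb}\setminus\mSet_0$ are exactly the sets with a finitely representable superset, which accounts for alternative (iii).

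The paper gives no proof of this statement; it imports it from the earlier work it cites. Under the paper's definition of eviction-compatibility (nonemptiness of $\FRsubs(\modelsof{\baseb}\setminus\mSet_0,\logsys)$ for all finite $\baseb$ and all $\mSet_0\subseteq\mUni$), your definitional unwinding is a complete argument.
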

	%\todo{technically speaking no DL would satisfy (i) except for corner cases top and bot}
	We also use the following two technical lemmas.
	\begin{lemma}[Adapted~\cite{DBLP:conf/ijcai/BaaderKM99}]\label{clm:elsimulation} For all satisfiable $\EL_\bot$ concepts $C\text { and }D$, we have that 
		%the following are equivalent:
		%   \begin{enumerate}
			%\item
			$\models C \sqsubseteq D$ iff
			% \item           $d_C\in D^{\Imc_C}$;
			%   \item 
			there is a homomorphism from $(\Imc_D,d_D)$ to $(\Imc_C,d_C)$.
			%\end{enumerate}
		\end{lemma}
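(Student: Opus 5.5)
The plan is to reduce both directions to a single characterisation of canonical models. For every satisfiable $\EL_\bot$ concept $E$ and every pointed interpretation $(\Jmc,e)$,
\[
e\in E^{\Jmc} \text{ iff there is a homomorphism from } (\Imc_E,d_E) \text{ to } (\Jmc,e). \qquad (\star)
\]
First I would observe that a satisfiable $\EL_\bot$ concept contains no occurrence of $\bot$. This follows by a quick induction: $\bot$ is unsatisfiable, and it propagates through $\sqcap$ and $\exists r.\cdot$. So the canonical model construction given above, which only has cases for $\top$, names, $\exists$ and $\sqcap$, is defined for $C$ and $D$, and $(\star)$ only has to be proved for $\EL$ concepts.

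I would prove $(\star)$ by structural induction on $E$, following the inductive definition of $(\Imc_E,d_E)$.

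For $E=\top$, both sides always hold: map the single element to $e$, which is fine because all names are empty in $\Imc_\top$.

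For $E=A$, $e\in A^\Jmc$ iff the map $d_A\mapsto e$ preserves $A$, and that is all a homomorphism has to satisfy here.

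For $E=\exists r.F$, suppose $e\in E^\Jmc$. Pick an $r$-successor $e'$ of $e$ with $e'\in F^\Jmc$. The induction hypothesis gives a homomorphism $h'$ from $(\Imc_F,d_F)$ to $(\Jmc,e')$, and I extend it by $d_E\mapsto e$; the new edge $(d_E,d_F)$ is preserved. Conversely, given a homomorphism $h$ from $(\Imc_E,d_E)$, restrict it to $\Delta^{\Imc_F}$. The induction hypothesis then gives $h(d_F)\in F^\Jmc$, and $(e,h(d_F))\in r^\Jmc$.

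For $E=F_1\sqcap F_2$, the two homomorphisms given by the induction hypothesis agree on the identified root, since both send it to $e$. Their union is therefore a well-defined function on $\Delta^{\Imc_E}$, and it preserves names and edges because every fact of $\Imc_E$ comes from $\Imc_{F_1}$ or from $\Imc_{F_2}$, with the root of $F_2$ renamed to $d_{F_1}$. Conversely, restricting a homomorphism to the two copies yields homomorphisms from $(\Imc_{F_i},d_{F_i})$ to $(\Jmc,e)$, once the root of $F_2$ is read as $d_{F_1}$.

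This conjunction case is the only one needing care. It is the step I expect to be the main obstacle, only because the bookkeeping of the root identification has to be spelled out.

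With $(\star)$ available, the lemma follows quickly.

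($\Rightarrow$) The identity map is a homomorphism from $(\Imc_C,d_C)$ to itself, so $(\star)$ gives $d_C\in C^{\Imc_C}$. Since $\models C\sqsubseteq D$, we get $d_C\in D^{\Imc_C}$, and $(\star)$ applied to $D$ yields a homomorphism from $(\Imc_D,d_D)$ to $(\Imc_C,d_C)$.

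($\Leftarrow$) Let $g$ be a homomorphism from $(\Imc_D,d_D)$ to $(\Imc_C,d_C)$, and take any $(\Jmc,e)$ with $e\in C^\Jmc$. By $(\star)$ there is a homomorphism $h$ from $(\Imc_C,d_C)$ to $(\Jmc,e)$. Homomorphisms compose, so $h\circ g$ is a homomorphism from $(\Imc_D,d_D)$ to $(\Jmc,e)$. By $(\star)$ again, $e\in D^\Jmc$, hence $\models C\sqsubseteq D$.
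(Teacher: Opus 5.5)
Your proof is correct. The paper gives no proof of its own here and simply cites Baader, K\"usters and Molitor; your route is essentially the standard argument behind that cited result. You first establish the canonical-model characterisation $(\star)$ by structural induction, after correctly noting that satisfiable $\EL_\bot$ concepts are $\bot$-free, so the paper's construction of $(\Imc_C,d_C)$ and $(\Imc_D,d_D)$ applies; you then use the identity map for ($\Rightarrow$) and composition of homomorphisms for ($\Leftarrow$).
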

		
		\begin{lemmarep}\label{clm:tech} For all satisfiable $\EL_\bot$ concepts $C$, if $C$ is satisfiable then there is an $\EL_\bot$ concept $C'$
			such that $\models \bot\sqsubset C'\sqsubset C$.
		\end{lemmarep}
		
		\begin{proof}
			Let $n$ be the depth of the tree-shaped labelled structure representing $C$. If we take $C'$
			as the concept $C\sqcap \exists r^{n+1}.\top$
			then $\models \bot\sqsubset C'\sqsubset C$. Note that  if $C$ is satisfiable then $C\sqcap \exists r^{n+1}.\top$ is satisfiable (one can see this by the inductive construction of the canonical model). 
		\end{proof}
	\end{toappendix}
	
	\begin{theoremrep}\label{thm:positive}
		$\logsys(\EL_{\bot\text{concepts}})$ is  eviction-compatible. 
		%\todo{EL concepts nao e reception compatible}
	\end{theoremrep}
	\begin{proof}
		We show that every \emph{non-empty} $\mSet\subseteq\mUni$ has an immediate predecessor in
		$$(\FRsets(\logsys(\EL_{\bot\text{concepts}}))\cup \{\mSet\},\subset)$$ (the case $\mSet$ is empty is easy as $\EL_\bot$ has the concept $\bot$).
		Recall that an immediate predecessor of $\mSet$ in this case is a
		set of models $\mSet'\in \FRsets(\logsys(\EL_{\bot\text{concepts}}))$ such that $\mSet'\subset\mSet$ and there is no $\mSet''$ such that 
		$\mSet'\subset\mSet''\subset\mSet$.  
		For this, we show that given a finitely representable set of models $\mSet$, there is no infinite chain \[\mSet_1 \subset \mSet_2\subset \ldots\] of   sets of models in $(\FRsets(\logsys(\EL_{\bot\text{concepts}}))$ such that $\mSet_i\subset\mSet$ for all $i\in\mathbb{N}$.
		Indeed, given $\mSet,\mSet'\in(\FRsets(\logsys(\EL_{\bot\text{concepts}}))$,  with $\mSet'\subset\mSet$, let
		$C,D$ be $\EL_{\bot}$ concepts such that $\modelsof{C} = \mSet$ and $\modelsof{D} = \mSet'$. By the semantics of 
		$\EL_\bot$, we have that $\models D\sqsubset C$. We want to show that there are finitely many $\EL_{\bot}$ concepts $E$ such that \[\models D\sqsubset E\sqsubset  C.\] 
		%the following claim.
		%\begin{claim}
		%There are finitely many $\EL_{\bot}$ concepts $E$ such that $\models D\sqsubset E\sqsubset  C$. 
		%\end{claim}
		%\begin{proof}
		{Suppose} there is no $\mSet'\in \FRsubs(\logsys(\EL_{\bot \text{concepts}}))$
		such that $\mSet'\subset\mSet$   with $\mSet'\neq \emptyset$. That is,  
		$\bot$ is the only $\EL_\bot$ concept such that
		$\models \bot\sqsubseteq C$, or, in other words, there is no (satisfiable) $\EL_{\bot}$ concept $C'$ such that  \[\models \bot\sqsubset C'\sqsubset C.\] However, as $\mSet'\subset\mSet$, we have that $\mSet\neq\emptyset$, so $C$ is satisfiable. This cannot be the case for $\EL_{\bot}$ concepts due to \cref{clm:tech}. 
		By \cref{clm:tech}, if $C$ is satisfiable then there is $C'$
		such that \[\models \bot\sqsubset C'\sqsubset C.\]
		{Then, there is $ \mSet'\in(\FRsets(\logsys(\EL_{\bot\text{concepts}}))$  such that $\mSet'\subset\mSet$ and $\mSet'\neq \emptyset$, so $D$ is satisfiable.}
		%If  $\models  D\sqsubset E$ 
		%then $\models  D\sqsubseteq E$. 
		Let $E$ be an $\EL_{\bot}$ concept  such that \[\models D\sqsubset E\sqsubset  C\] (if there is no such concept then we are done).
		Since  
		%$D$ is satisfiable and 
		$\models  D\sqsubset E$  we have that $E$ is satisfiable. Then, by Lemma~\ref{clm:elsimulation},  there is a homomorphism from $(\Imc_E,d_E)$ to $(\Imc_D,d_D)$.
		As $D$ is finite, the number of concept and role names that occur in $D$ is finite. Denote with ${\sf sig}(D)$ the set of concept and role names occurring $D$.
		Also, the existence of  a homomorphism from $(\Imc_E,d_E)$ to $(\Imc_D,d_D)$ and the definition of  $\Imc_E$ and $\Imc_D$
		imply that the set of concept and role names occurring in $E$ is a subset of ${\sf sig}(D)$. 
		%\footnote{Technically this may not be the case if both $E,D$ are equivalent to $\bot$ since e.g. $\models \bot\sqsubseteq \exists r.\bot$. Though, the assumption that $\models D\sqsubset  E$ implies
			%$E$  is necessarily satisfiable.}. 
		That is, there are finitely many concept and role names occurring in $E$ and they are subset of ${\sf sig}(D)$.  Moreover, by the definition of 
		$(\Imc_E,d_E)$ and $(\Imc_D,d_D)$, these interpretations correspond to tree-shaped labelled structures (pointed at the root) where the depth of 
		(the tree corresponding to) $\Imc_E$ is less than or equal to the depth $n$ of $\Imc_D$. Since $E$ was an arbitrary 
		$\EL_\bot$ concept such that $\models  D\sqsubset E$ this holds for all such concepts.
		As  there are finitely many tree-shaped labelled structures with symbols in ${\sf sig}(D)$ and depth bounded by $n$, there are finitely many $\EL_\bot$ concepts $E$ (up to logical equivalence) such that
		$\models  D\sqsubset E$. Then there are finitely many $\EL_\bot$ concepts $E$ (up to logical equivalence) such that \[\models D\sqsubset E\sqsubseteq C.\] This means that, by \cref{thm:invokefirst}, $\logsys(\EL_{\bot\text{concepts}})$ is  eviction-compatible. 
	\end{proof}
	%\begin{proof}[Sketch]
	%   The main intuition here is because in this language one cannot express infinitely many ``more general'' concepts (which we can in e.g. \ALC using disjunctions). This means that one can always find a ``maximal'' concept that does not satisfy a given set of models.
	%\end{proof}
	
	Theorem~\ref{thm:positive} does not hold for $\EL$   (without $\bot$) as any language that cannot express inconsistencies is not eviction-compatible~\citep{ouraaai}. 
	We also do not have eviction-compatibility for \ALC.

	\begin{theoremrep}
		\label{thm:alcconcepteviction}
		$\logsys(\ALC_\text{concepts})$ is not \mconnm-compatible. 
	\end{theoremrep}
	\begin{proof}
		The proof of \cref{thm:alcconcepteviction}
		is based on \cite{ouraaai}.
		We show that $\logsys(\ALC_\text{concepts})$ is not \mconnm-compatible (the proof works if the signature is finite or infinite).
		%For conciseness, we  write \(\models\) instead of \(\models_{\ALC_{\text{concepts}}}\) within this proof.  
		Let \(C_\top :=\top\), that is, \(\modelsof{C_\top} = \mUni_{\ALC_{\text{concepts}}}\). 
		%and $a^{M^{\infty}}=0$.
		Let $\mSet$ be the set of all models $(M,0)$
		such that for some $n\in\mathbb{N}$
		we have that $0\in (\forall r^n.\bot)^M$.
		That is, there is no loop or infinite chain
		of elements connected via the role $r$ starting
		from $0$. 
		By definition of $\mSet$, we have that
		$(M^{\infty},0)\not\in \mSet$ since
		this model has an
		infinite chain
		of elements connected via the role $r$ starting
		from $0$, while $(M^{n},0)\in \mSet$ for all $n\in\mathbb{N}$.

		To prove that $\logsys(\ALC_\text{concepts})$ is not \mconnm-compatible, we need to prove that there is no
		\ALC concept $C$ 
		%\(C \in \finitepwset(\llang_{\ALC_\text{concepts}})\) 
		such that \[\modelsof{C} \in \FRsubs(\mSet, \logsys(\ALC_\text{concepts})),\] that is, \[\FRsubs(\mSet, \logsys(\ALC_\text{concepts})) = \emptyset.\]
		Intuitively, we want to show that 
		we cannot find a maximal $\ALC$ concept 
		that finitely represents the result of removing the models in $\mUni_{\ALC_{\text{concepts}}} \setminus \mSet$ 
		from $C_\top$.
		First, we  recall 
		the following claim.
		% \todo{dont know how to make cite work}
		The next result is by \citep{DBLP:conf/nmr/0001OR23}.
		\begin{claim}\label{cl:eviction-aux}
			For every \ALC{} concept $C$, 
			if there is $n\in\mathbb{N}$ 
			such that 
			$0\in C^{M^m}$
			for all $m\geq n$, with $m\in\mathbb{N}$, 
			then $0\in C^{M^{\infty}}$.
		\end{claim}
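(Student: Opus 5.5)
The plan is to use the fact that \ALC concepts of bounded depth cannot tell apart $(M^m,0)$ and $(M^{\infty},0)$ once $m$ is at least the depth. Fix an \ALC concept $C$ and let $k$ be its depth. Assume there is $n\in\mathbb{N}$ with $0\in C^{M^m}$ for all $m\geq n$. Choose $m:=\max(n,k)$. Then $0\in C^{M^m}$, and it suffices to show that $(M^m,0)$ and $(M^{\infty},0)$ agree on all \ALC concepts of depth at most $k$.

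First I would build an explicit $k$-bisimulation $(Z_0,\ldots,Z_k)$ between $M^m$ and $M^{\infty}$ that contains $(0,0)$. I take $Z_i:=\{(j,j)\mid 0\leq j\leq i\}$. These relations are non-empty and increasing, as the definition requires.

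\emph{Atom condition.} In both interpretations every concept name is interpreted as the empty set, so this condition holds trivially.

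\emph{Roles other than $r$.} Every role name other than $r$ is also interpreted as empty in both, so the forth and back conditions are vacuous for those roles.

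\emph{Forth and back for $r$.} Take $(j,j)\in Z_i$ with $i<k$. In $M^{\infty}$ the only $r$-successor of $j$ is $j+1$. In $M^m$ the only $r$-successor of $j$ is $j+1$ as well, provided $j<m$. This holds because $j\leq i<k\leq m$. Since $(j+1,j+1)\in Z_{i+1}$, forth and back are satisfied in both directions.

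Then I apply \cref{lem:kinvariance} to the pointed interpretations $(M^m,0)$ and $(M^{\infty},0)$. This gives $0\in C^{M^m}$ iff $0\in C^{M^{\infty}}$, and hence $0\in C^{M^{\infty}}$.

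The only delicate point is the indexing. Pairs in $Z_i$ for $i<k$ must never reach the last element $m$ of the finite chain, where $M^m$ has no successor but $M^{\infty}$ does. Choosing $m\geq k$, rather than merely $m\geq n$, is exactly what rules this out.
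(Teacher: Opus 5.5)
Your proof is correct.

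The paper does not prove this claim itself; it imports it from the cited NMR 2023 work, so there is no in-paper argument to match. Your proof is a self-contained replacement built only from the appendix machinery: the paper's increasing-chain definition of $k$-bisimulations and \cref{lem:kinvariance}. This is the same style of argument the paper uses in the proof of \cref{thm:alcnotrecpfinitetree}, where it compares $M^{m+1}$ with $M^{m+1}_A$.

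Your index bookkeeping is the crucial point, and you handle it correctly. Every pair $(j,j)\in Z_i$ that is subject to forth/back has $j\leq i<k\leq m$, so it has its unique successor $j+1$ in both chains. The last level $Z_k$ is exempt, which covers the endpoint $m$ of the finite chain when $m=k$.

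Two small remarks. First, your argument proves more than the claim: $0\in C^{M^m}$ iff $0\in C^{M^{\infty}}$ for every single $m\geq k$, so the hypothesis is used only at the one index $\max(n,k)$. Second, you correctly make explicit that all concept names and all role names other than $r$ are empty in $M^m$ and $M^{\infty}$. The paper leaves this implicit, and the claim fails without it.
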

		
		We are now ready to show that 
		$\logsys(\ALC_\text{concepts})$ is not \mconnm-compatible.
		Suppose to the contrary that
		there is an \ALC concept $C$ 
		%\(C \in \finitepwset(\llang_{\ALC_\text{concepts}})\) 
		such that \[\modelsof{C} \in \FRsubs( \mSet, \logsys(\ALC_\text{concepts})).\] 
		If there is $n\in\mathbb{N}$ such that
		$0\not\in C^{M^n}$ then\footnote{Recall that
			$M^n$ has a chain of $n+1$ elements connected via the role $r$.}
		\[C':=C\sqcup (
		\exists r^{n+1}.\top\sqcap \neg \exists r^{n+2}.\top), 
		\]
		is such that $0\in C'^{M^n}$.
		Moreover, $\modelsof{C}\subset\modelsof{C'}$.
		By definition of $C'$ and $\mSet$, we also have that
		\(\modelsof{C'} \subseteq  \mSet\).
		This contradicts the assumption that
		\[\modelsof{C} \in \FRsubs( \mSet, \logsys(\ALC_\text{concepts})).\]
		So, for all $n\in\mathbb{N}$, we have  that
		$0\in C^{M^n}$.
		Then, by Claim~\ref{cl:eviction-aux}, it follows that  
		$0\in C^{M^{\infty}}$. 
		Since, as already mentioned, $M^{\infty}\not\in\mSet$,
		this contradicts the assumption that
		\[\modelsof{C} \in \FRsubs( \mSet, \logsys(\ALC_\text{concepts})).\]
		Thus, \[\FRsubs( \mSet, \logsys(\ALC_\text{concepts}))=\emptyset.\] 
	\end{proof}
	The next theorem establishes that reception-compatibility neither  holds for $\EL_\bot$ 
	%(with or without $\bot$) 
	nor for \ALC concepts. 
	\begin{theoremrep}
		\label{thm:notreception}
		$\logsys(\EL_{\bot \text{concepts}})$ and $\logsys(\ALC_{\text{concepts}})$  are not reception-compatible. This %result
		holds even if we restrict to the class of (possibly infinite) sets of (possibly infinite) tree-shaped pointed interpretations or if we restrict to the class of sets of pointed interpretations over a unique finite signature.
		% \todo{even if we restrict to tree-shaped pointed interpretations}
		%\todo{even if the signature is finite}
		%\todo{EL concepts nao e reception compatible}
	\end{theoremrep}
	\begin{proof}
		We prove this theorem for  $\logsys(\EL_{\bot \text{concepts}})$, the same argument can be used for 
		$\logsys(\ALC_{\text{concepts}})$.
		Let \(C_\bot :=\bot\), that is, \(\modelsof{C_\bot} = \emptyset\). 
		%and $a^{M^{\infty}}=0$.
		Let $\mSet$ be $\{(M^{\infty},0)\}$. To prove that $\logsys(\EL_{\bot \text{concepts}})$ is not reception-compatible, we need to prove that there is no \(C \in \finitepwset(\llang_{\EL_{\bot \text{concepts}}})\) such that \[\modelsof{C} \in \FRsups(\mSet, \logsys(\EL_{\bot \text{concepts}})).\] In other words, \[\FRsups(\mSet, \logsys(\EL_{\bot \text{concepts}})) = \emptyset.\]
		
		\begin{claim}\label{claim:elneg}
			For all satisfiable $\EL_\bot$ concepts $C$, there is $n\in\mathbb{N}$ such that  $\not\models C\sqsubseteq\exists r^n.\top$.
		\end{claim}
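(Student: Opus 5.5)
Given a satisfiable $\EL_\bot$ concept $C$, let $n$ be the depth of $C$ (the depth of its canonical model $(\Imc_C,d_C)$) and choose a role name $r$ together with $m := n+1$. I will show $\not\models C\sqsubseteq\exists r^{m}.\top$. The plan is to produce a pointed interpretation that satisfies $C$ but contains no $r$-chain of length $m$ from its root. The natural witness is the canonical model $(\Imc_C,d_C)$ itself. It exists because $C$ is satisfiable: no $\bot$ occurs in a satisfiable $\EL_\bot$ concept except in positions we may eliminate, and an $\EL_\bot$ concept containing $\bot$ anywhere is unsatisfiable, so $C$ is in fact an \EL concept.

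The key steps, in order, are the following. First, I observe that $d_C\in C^{\Imc_C}$; this is a standard fact about canonical models and follows by induction on $C$ from the definition of $(\Imc_C,d_C)$. Second, I note that $\Imc_C$ is tree-shaped with depth $n$, again by induction on the construction. Every path starting at $d_C$ therefore has length at most $n$; the conjunction case identifies roots, so it does not increase depth. Third, $d_C\in(\exists r^{m}.\top)^{\Imc_C}$ would require a path $d_C\cdot r\cdots r\cdot d_m$ of length $m=n+1$ starting at $d_C$. No such path exists, so $d_C\notin(\exists r^{m}.\top)^{\Imc_C}$, and $(\Imc_C,d_C)$ is the desired countermodel.

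Alternatively, one can appeal to \cref{clm:elsimulation}. If $\models C\sqsubseteq\exists r^m.\top$ held, there would be a homomorphism from the canonical model of $\exists r^m.\top$, which is a chain of length $m$, into $(\Imc_C,d_C)$. Such a homomorphism would map that chain onto an $r$-path of length $m$ from $d_C$, contradicting the depth bound.

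The only delicate point is the treatment of $\bot$. It needs a short argument that a satisfiable $\EL_\bot$ concept contains no occurrence of $\bot$: if $\bot$ occurs inside an $\exists r.\cdot$ or a conjunct, the whole concept is unsatisfiable, by a simple induction. Once this is settled, the canonical model is well defined and the rest is routine. Having the claim, the theorem follows. Any candidate $C$ in $\FRsups(\mSet,\cdot)$ must be satisfiable, since it contains $(M^\infty,0)$. Then $C\sqcap\exists r^{n}.\top$ still contains $(M^\infty,0)$ but is strictly smaller, because the countermodel above lies in $\modelsof{C}$ but not in $\modelsof{C\sqcap\exists r^{n}.\top}$. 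This contradicts minimality.
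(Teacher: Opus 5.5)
Your proof is correct and is essentially the paper's argument made explicit: the paper only appeals to the finite nesting of existential quantifiers in \EL concepts. Your canonical model of depth $n$ (well defined because a satisfiable $\EL_\bot$ concept contains no $\bot$) is exactly the countermodel witnessing $\not\models C\sqsubseteq\exists r^{n+1}.\top$. One small slip in your last paragraph: with $n$ the depth of $C$ you should conjoin $\exists r^{m}.\top$ with $m=n+1$, since $C\sqcap\exists r^{n}.\top$ may be equivalent to $C$ (e.g.\ $C=\exists r.\top$).
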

		
		This follows by the syntax and semantics of \EL, which only allows for finite concepts, with finite nesting of existential quantifiers. 
		%A simple structural induction argument can be given. 
		
		\medskip
		
		\noindent
		Suppose to the contrary that
		there is (a satisfiable) \(C \in \finitepwset(\llang_{\EL_{\bot \text{concepts}}})\) such that \[\modelsof{C} \in \FRsups( \mSet, \logsys(\EL_{\bot \text{concepts}})).\] 
		% Otherwise, 
		By Claim~\ref{claim:elneg}, there is $n\in\mathbb{N}$ such that  $\not\models C\sqsubseteq\exists r^n.\top$. Then $C':=C\sqcap\exists r^n.\top$
		is such that $\mod{(C')}\subset\mod{(C)}$. Note that, by assumption,
		$(M^{\infty},0)\in\mod{(C)}$ and the definition of $C'$ does not remove $(M^{\infty},0)$ so $(M^{\infty},0)\in\mod{(C')}$. This   contradicts  the assumption that $\modelsof{C} \in \FRsups( \mSet, \logsys(\EL_{\bot \text{concepts}}))$. 
		%We then have that $\models C\sqsubseteq\exists r^n.\top$ for all $n\in\mathbb{N}$.
		%However, this is impossible for an \EL concept $C$. 
		Thus, $\logsys(\EL_{\bot \text{concepts}})$ is not reception-compatible. 
	\end{proof}
	We now concentrate on finding a restricted class of concepts where reception-compatibility holds.  
	%that we have found a positive case for $\EL_\bot$, we   concentrate on finding a class of models that is also positive for \ALC, that is, one where $\logsys(\ALC_{\text{concepts}})$ is both eviction and reception-compatible.
	\begin{toappendix}
		\begin{theorem}[Theorem~16~\cite{ouraaai}]\label{thm:invoke}
			A satisfaction system $\logsys$ is reception-compatible iff for every $\mSet \subseteq \mUni$ either (i) $\mSet 
			\in\FRsets(\logsys)$,
			(ii) $\mSet$ has an immediate successor in $(\FRsets(\logsys)\cup\{\mSet\},\subset)$, 
			or (iii) there is no $\mSet' \in \FRsets(\logsys)$ with $\mSet' \subseteq \mSet$.
		\end{theorem}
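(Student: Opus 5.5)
The plan is to reduce both sides of the equivalence to one elementary observation about minimal finitely representable supersets. I take ``reception-compatible'' for $\logsys$ in the sense of the class $\Cmc=\powerset(\mUni)$ of all sets of models, which is the reading under which the statement is formulated.

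\emph{Key observation.} For any $X\subseteq\mUni$, I claim that $\FRsups(X,\logsys)\neq\emptyset$ iff $X\in\FRsets(\logsys)$ or $X$ has an immediate successor in $(\FRsets(\logsys)\cup\{X\},\subset)$.
\begin{itemize}
\item If $X\in\FRsets(\logsys)$, then $X$ itself is the least element of $\{\mSet'\in\FRsets(\logsys)\mid X\subseteq\mSet'\}$, so it belongs to $\FRsups(X,\logsys)$.
\item Suppose $X\notin\FRsets(\logsys)$ and $\mSet'\in\FRsups(X,\logsys)$. Then $X\subset\mSet'$. Minimality of $\mSet'$ says that no $\mSet''\in\FRsets(\logsys)$ satisfies $X\subseteq\mSet''\subset\mSet'$. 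Since $X\notin\FRsets(\logsys)$, this is exactly the condition that $\mSet'$ is an immediate successor of $X$ in $(\FRsets(\logsys)\cup\{X\},\subset)$.
\item Conversely, an immediate successor of $X$ is a minimal finitely representable superset of $X$, so it lies in $\FRsups(X,\logsys)$.
\end{itemize}
This step only unfolds the definitions of $\mmin_\subseteq$ and of immediate successor.

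\emph{Which sets are relevant.} Next I characterise the sets that actually occur as arguments of $\FRsups$ in the definition of reception-compatibility. These are the sets $\modelsof{\baseb}\cup\mSet$ with $\baseb$ finite and $\mSet\subseteq\mUni$. I claim they are exactly the sets $X$ that have some subset $\mSet'\in\FRsets(\logsys)$.
\begin{itemize}
\item One direction holds because $\modelsof{\baseb}\subseteq X$ and $\modelsof{\baseb}\in\FRsets(\logsys)$.
\item For the other, if $\modelsof{\baseb}\subseteq X$ for some finite $\baseb$, take $\mSet:=X$. Then $\modelsof{\baseb}\cup\mSet=X$.
\end{itemize}
So condition (iii) of the statement describes precisely the sets that never arise as inputs to reception.

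\emph{Assembling the equivalence.}
\begin{itemize}
\item ($\Rightarrow$) Let $\mSet\subseteq\mUni$. If (iii) fails, then $\mSet$ has a finitely representable subset $\modelsof{\baseb}$, so $\mSet=\modelsof{\baseb}\cup\mSet$. Reception-compatibility then gives $\FRsups(\mSet,\logsys)\neq\emptyset$, and the key observation yields (i) or (ii).
\item ($\Leftarrow$) Given a finite $\baseb$ and any $\mSet$, the set $X=\modelsof{\baseb}\cup\mSet$ contains $\modelsof{\baseb}$, so (iii) fails for $X$. Hence (i) or (ii) holds for $X$, and the key observation gives $\FRsups(X,\logsys)\neq\emptyset$.
\end{itemize}

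\emph{Main obstacle.} The only delicate point is the role of condition (iii), which at first looks like a misdualisation of the eviction version in \cref{thm:invokefirst}. The second step resolves this: the definition quantifies only over sets of the form $\modelsof{\baseb}\cup\mSet$, and (iii) is exactly the condition that excludes all other sets. I would also check the degenerate case where the finitely representable subset is $\modelsof{\baseb}=\emptyset$, for example when $\bot$ is expressible. Nothing breaks there: (iii) then simply never holds, and the argument goes through unchanged.
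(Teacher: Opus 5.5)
Your argument is correct. The paper gives no proof of this statement: it is imported from \citet{ouraaai}, like its eviction counterpart \cref{thm:invokefirst}, and only invoked, e.g.\ in the proof of \cref{thm:elreceptiontree}. So there is no in-paper argument to compare against. Your unfolding of the definitions is a complete, self-contained proof under the paper's definition of reception-compatibility, read for the class of all subsets of $\mUni$. That reading is essential exactly at your choice $\mSet:=X$ in the second step, and you correctly flag it.

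One small slip needs repair. In the third bullet of your key observation, an immediate successor $\mSet'$ of $X$ in $(\FRsets(\logsys)\cup\{X\},\subset)$ lies in $\FRsups(X,\logsys)$ only if $X\notin\FRsets(\logsys)$. When $X$ is itself finitely representable, $X$ is a strictly smaller finitely representable superset, so $\mSet'$ is not minimal. Add the hypothesis $X\notin\FRsets(\logsys)$. Then any $\mSet''\in\FRsets(\logsys)$ with $X\subseteq\mSet''\subset\mSet'$ has $\mSet''\neq X$, contradicting immediacy. The case $X\in\FRsets(\logsys)$ is already covered by your first bullet, so the key observation and the whole equivalence go through.

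Finally, the worry in your last paragraph is unfounded. Condition (iii) here (no finitely representable subset of $\mSet$) is exactly the order-dual of condition (iii) in \cref{thm:invokefirst} (no finitely representable superset). This mirrors the fact that eviction inputs $\modelsof{\baseb}\setminus\mSet$ are precisely the subsets of finitely representable sets.
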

	\end{toappendix}
	\begin{theoremrep}\label{thm:elreceptiontree}
		$\logsys(\EL_{\bot \text{concepts}})$    is   reception-compatible in the class of (possibly infinite) sets of finite tree-shaped pointed interpretations  (over a possibly infinite signature). 
		%\todo{EL concepts nao e reception compatible}
	\end{theoremrep}

	\begin{proof}
		Point (iii) of \cref{thm:invoke} never holds because we can express unsatisfiable concepts in  $\EL_\bot$ using $\bot$. We argue that
		(i) or (ii) always hold for this class of models.
		We first consider $\logsys(\EL_{\bot \text{concepts}})$.
		Let $\mUni$ be the class of all finite tree-shaped pointed interpretations.
		Given $\mSet \subseteq \mUni$, if $\mSet 
		\in\FRsets(\logsys)$ then we are done. Suppose this is not the case. We then need to show that $\mSet$ has an immediate successor in $(\FRsets(\logsys)\cup\{\mSet\},\subset)$. 
		
		Let $(\Imc,d)$ be  a tree-shaped pointed interpretation  in  $\mSet$ with minimal depth $k$. Such $(\Imc,d)$ and $k$ exist since pointed interpretations in $\mSet$ are finite and tree-shaped (we know that $\mSet\neq \emptyset$ since we assume case (i) does not hold and $\emptyset\in (\FRsets(\logsys)$). For any $\EL_{\bot}$ concept $C$, if $(\Imc,d)\in\modelsof{C}$ then the depth of $C$ is at most $k$. Moreover, 
		if $\mSet\subseteq \modelsof{C}$ then
		the symbols (that is, concept and role names) occurring in $C$ are a subset of the intersection of the symbols occurring in the elements of $\mSet$. Since each element of $\mSet$ is finite, such intersection must also be finite. We then have that, up to logical equivalence, there are finitely many $\EL_{\bot}$ concepts $C$ such that $\mSet\subseteq \modelsof{C}$. This means that $\mSet$ has an immediate successor in $(\FRsets(\logsys)\cup\{\mSet\},\subset)$. 
		% \todo{finite signature for %ALC}
	\end{proof}

	%\todo{ref def of  isomorphism}

	%\todo[color=red]{also Th12?} 
	\cref{thm:alcconcepteviction} and \cref{thm:notreception} compel us to investigate more restricted classes of models to obtain a positive result for \ALC. There are two natural ways of restricting this class: restricting to finite sets of finite tree-shaped interpretations or to finite tree-shaped interpretations with finite signature. The next  theorem establishes that these restrictions alone are not sufficient \ALC reception-compatibility.
	%\ALC \EL
	\begin{theorem}\label{thm:alcrecfintreefinsetboth}
		$\logsys(\ALC_{\text{concepts}})$ is neither reception-compatible  in the class of finite sets of finite tree-shaped pointed interpretations over a (possibly infinite) signature; nor  in the class of (possibly infinite) sets of finite tree-shaped
		pointed interpretations over a (unique) finite signature. 
	\end{theorem}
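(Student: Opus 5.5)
The plan is to prove the two claims separately. In each case I exhibit a base $\baseb$ and a set $\mSet$ in the given class with $\FRsups(\modelsof{\baseb}\cup\mSet,\logsys(\ALC_{\text{concepts}}))=\emptyset$. In both cases I take $\baseb=\{\bot\}$, so the relevant set is just $\mSet$. The strategy mirrors the proof of \cref{thm:notreception}. Suppose $C$ is an \ALC concept with $\mSet\subseteq\modelsof{C}$. I construct $C'$ with $\mSet\subseteq\modelsof{C'}\subset\modelsof{C}$. Hence no finitely representable superset of $\mSet$ is $\subseteq$-minimal.

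\emph{Finite sets, infinite signature.} Let $\mSet=\{(\Imc,d)\}$, where $\Imc$ has domain $\{d\}$ and every concept and role name is interpreted as the empty set. This is a finite tree-shaped pointed interpretation. Let $C$ be any concept with $d\in C^\Imc$. Since $C$ is finite and $\NC$ is infinite, I pick $A\in\NC$ not occurring in $C$ and set $C':=C\sqcap\neg A$.
\begin{itemize}
\item $(\Imc,d)$ satisfies $C'$, because $A^\Imc=\emptyset$.
\item $C$ is satisfiable, witnessed by $(\Imc,d)$. Let $\Imc'$ be $\Imc$ except that $A^{\Imc'}=\{d\}$. Since $A$ does not occur in $C$, an easy induction shows $d\in C^{\Imc'}$, while $d\notin C'^{\Imc'}$. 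So $\modelsof{C'}\subset\modelsof{C}$.
\end{itemize}
Hence $\FRsups(\mSet,\logsys)=\emptyset$.

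\emph{Infinite sets, finite signature.} I fix any finite signature containing a role $r$; no concept names are needed. Let $\mSet=\{(M^n_{\mathrm{f}},0)\mid n\in\mathbb{N}\}$, where $M^n_{\mathrm{f}}$ is the finite chain $0\,r\,1\cdots r\,n$. This is $M^n$ restricted to $\{0,\dots,n\}$; the unused elements of $M^n$ are unreachable and do not affect satisfaction. Let $C$ contain $\mSet$ and have depth $k$. I set
\[
D:=\exists r^{k+1}.(\exists r.\neg\exists r.\top\sqcap\exists r.\exists r.\top), \qquad C':=C\sqcap\neg D .
\]
\begin{itemize}
\item Every chain has at most one successor per node, so no chain satisfies $D$. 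Hence every $(M^n_{\mathrm{f}},0)$ satisfies $C'$.
\item For strictness, let $N$ be the chain of length $k+3$ in which the node at depth $k+1$ is given an extra leaf successor. Then $(N,0)$ satisfies $D$.
\item $N$ and $M^{k+3}_{\mathrm{f}}$ agree up to depth $k+1$, so there is a $k$-bisimulation between them containing $(0,0)$. By \cref{lem:kinvariance}, $(N,0)$ satisfies $C$ but not $C'$.
\end{itemize}
Thus $\modelsof{C'}\subset\modelsof{C}$, and $\FRsups$ is again empty.

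The main point to check is the $k$-bisimulation in the second part. It should be routine: relate nodes at equal depth along the common prefix of length $k+1$. One must also confirm that depth-$k$ concepts only inspect paths of length at most $k$, which is \cref{lem:kbisimulation} and \cref{lem:kinvariance}. Unlike \cref{thm:alcconcepteviction}, I do not need \cref{cl:eviction-aux} here.
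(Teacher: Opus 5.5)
Your first part is the paper's argument for \cref{thm:alcrecfintreefinset} essentially verbatim: a one-point model, a concept name $A$ not occurring in $C$, and then $C\sqcap\neg A$. Your second part shares the overall strategy of the paper's proof of \cref{thm:alcnotrecpfinitetree}: take the set of all finite $r$-chains and use that a concept of depth $k$ cannot see beyond depth $k$. The construction, however, is genuinely different and is correct as you give it.

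The paper builds a free-standing concept $C_k$, depending only on the depth of $C$, meant to capture ``$A$-free and chain-like up to depth $k+1$''. It then has to prove $\modelsof{C_k}\subseteq\modelsof{C}$ by a bisimulation-to-a-chain argument. Strictness comes from the chain $M^{m+1}_A$, whose last node carries the concept name $A$. You instead take $C':=C\sqcap\neg D$, so the inclusion in $\modelsof{C}$ is immediate. You use bisimulation invariance only once, for the witness $N$, whose branching lies deeper than $C$ can see. This buys you three things:
\begin{itemize}
\item you need no concept names, so the argument works over every finite signature containing a role;
\item your chains $M^n_{\mathrm{f}}$ are genuinely finite, whereas the paper's $M^n$ has domain $\mathbb{N}$;
\item you sidestep the paper's inclusion step, which does not go through as written.
\end{itemize}
On the last point: $C_k$ constrains neither the label of the root nor the lengths of all $r$-paths. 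Already for $C=\neg A$ (depth $0$, so $k=1$), the chain $M^1$ with $0\in A$ lies in $\modelsof{C_1}\setminus\modelsof{C}$. Separately, $(M^0,0)$ satisfies no disjunct of $C_k$.

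The paper's idea can be repaired, for instance by letting $C_k$ characterise the $(m+1)$-bisimulation types of $A$-free chains. It would then give a $C$-independent descending family of definable supersets of $\mSet$, which is slightly more informative. Your argument is shorter and rigorous as it stands.
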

	\begin{toappendix}
		Theorem~\ref{thm:alcrecfintreefinsetboth} follows from Theorem~\ref{thm:alcrecfintreefinset} and Theorem~\ref{thm:alcnotrecpfinitetree}.
		
		\begin{theorem}\label{thm:alcrecfintreefinset}
			$\logsys(\ALC_{\text{concepts}})$ is not reception-compatible in the class of finite sets of finite tree-shaped pointed interpretations over a (possibly infinite) signature.  
		\end{theorem}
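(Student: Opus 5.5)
The plan is to give a direct counterexample: a finite base $\baseb$ and a one-element set $\mSet$ of finite tree-shaped pointed interpretations such that $\FRsups(\modelsof{\baseb}\cup\mSet, \logsys(\ALC_{\text{concepts}}))=\emptyset$. The key fact I will use is that the signature is infinite while every \ALC concept mentions only finitely many names.

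\textbf{The counterexample.} Take $\baseb=\{\bot\}$, so $\modelsof{\baseb}=\emptyset$. Let $\mSet=\{(\Imc,d)\}$, where $\Delta^\Imc=\{d\}$ and every concept and role name is interpreted as the empty set. This is a finite tree-shaped pointed interpretation, and $\mSet$ is a finite set, so it lies in the class under consideration. It then suffices to show that no finitely representable superset of $\mSet$ is $\subseteq$-minimal.

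\textbf{The shrinking step.} Let $C$ be any \ALC concept with $(\Imc,d)\in\modelsof{C}$. Since $C$ is finite, it uses only finitely many concept names. Because $\NC$ is infinite, I can pick $A\in\NC$ not occurring in $C$, and set $C':=C\sqcap\neg A$. Then:
\begin{itemize}
\item $(\Imc,d)\in\modelsof{C'}$, because $d\notin A^\Imc$.
\item $\modelsof{C'}\subseteq\modelsof{C}$, immediately from the definition of $C'$.
\end{itemize}

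\textbf{Strictness.} It remains to show $\modelsof{C'}\neq\modelsof{C}$. Let $\Jmc$ agree with $\Imc$ except that $A^\Jmc=\{d\}$. Since $A$ does not occur in $C$, a straightforward structural induction shows that $C^\Jmc=C^\Imc$. Hence $(\Jmc,d)$ satisfies $C$ but not $C'$.

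\textbf{Conclusion.} Every finitely representable superset of $\modelsof{\baseb}\cup\mSet$ therefore has a strictly smaller finitely representable superset of the same set. So $\FRsups$ is empty, and the class is not reception-compatible.

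The only step needing care is the invariance of $C$ under reinterpreting a name it does not mention. This is the standard locality property of DL semantics, proved by induction on $C$ (including the $\neg$ and $\exists$ cases), and it does not depend on the interpretation being tree-shaped.
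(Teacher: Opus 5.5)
Your proposal is correct and is essentially the paper's proof. Both take the base $\bot$ and a single one-element model, then shrink any candidate concept $C$ to $C\sqcap\neg A$ for a concept name $A$ not occurring in $C$; strictness comes from locality of the semantics when $A$ is reinterpreted. The only difference is presentational: the paper phrases the argument via the immediate-successor characterisation (Theorem~16 of the cited work), whereas you argue directly from the definition of $\FRsups$.
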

		\begin{proof}
			Suppose the signature $\Sigma$ has symbols of the form $A_i$ for all $i \in \mathbb{N}$.
			Assume the base concept is $\bot$ and we want to add a model $(\Imc,d)$ with one domain element $d$ in the extension of $A_0$ but not in the extension of any other $A_i$. Let $\mSet$ be a singleton set containing such model.
			Conditions (i) and (iii)  
			of \cref{thm:invoke} fail
			for $\mSet$. Indeed,
			Condition (i) fails because
			for any  concept $C$ satisfying  $(\Imc,d)$ there is $A_i$ that does not occur in it  (because $C$ is finite and $\Sigma$ is not)
			and if we add $d$ to the extension of $A_i$ then the resulting pointed interpretation $(\Imc',d)$ still satisfies $C$.
			Condition (iii) fails because we can take $\mSet'$ as the empty set, which is in $\FRsets(\logsys(\ALC_{\text{concepts}}))$ since $\bot$ is an \ALC concept. From the argument for Condition (i) we can already see that 
			Condition (ii) also fails. That is, $\mSet$ has no immediate successor because given $C$ with $\mSet\subseteq \modelsof{C}$ there is always $A_i$ that does not occur in $C$ and $C':=C\sqcap \neg A_i$ will be such that 
			$\mSet\subseteq \modelsof{C'}$ and $\modelsof{C'}\subset \modelsof{C}$.
			%\todo{...}
		\end{proof}
		
	\end{toappendix}
	\begin{toappendix}

		\begin{theorem}\label{thm:alcnotrecpfinitetree}
			$\logsys(\ALC_{\text{concepts}})$ is not reception-compatible in the class of (possibly infinite) sets of  finite tree-shaped pointed interpretations over a (unique) finite signature.  
		\end{theorem}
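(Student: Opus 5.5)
We need a base $\baseb$ and a set $\mSet$ of finite tree-shaped pointed interpretations over a finite signature such that $\FRsups(\modelsof{\baseb}\cup\mSet,\logsys(\ALC_{\text{concepts}}))=\emptyset$. We use the criterion of \cref{thm:invoke} restricted to this class. Fix the signature $\NC=\emptyset$, $\NR=\{r\}$, take the base $\bot$, and let $\mSet=\{(M^n,0)\mid n\in\mathbb{N}\}$, the set of all finite chains. In the class we work in, models are finite trees, so $(M^\infty,0)$ itself is not available. The construction has to be tuned so that the "limit" object lives inside the class; this is the point we return to below.

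Since the class consists only of finite trees, the compactness-type argument of \cref{cl:eviction-aux} does not transfer directly. Instead, we aim to show that for every \ALC concept $C$ with $\mSet\subseteq\modelsof{C}$ there is a strictly smaller \ALC concept $C'$ that still satisfies $\mSet\subseteq\modelsof{C'}$. Conditions (i) and (iii) of \cref{thm:invoke} fail: (iii) because $\bot$ gives the empty set, and (i) by the argument that follows. Condition (ii) then fails by the same argument.

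Let $C$ have depth $k$. By \cref{lem:kbisimulation} and \cref{lem:kinvariance}, satisfaction of $C$ at the root of a finite tree depends only on the tree truncated at depth $k$ up to $k$-bisimulation. We pick a finite tree $(\Imc,d)$ that is $k$-bisimilar to some chain $(M^m,0)$ with $m>k$ but is not itself a chain. A concrete choice is the tree consisting of a chain of length $k+1$ together with an extra branch of length exactly $k+2$ hanging off the node at depth $k+1$. This tree satisfies $C$, since it is $k$-bisimilar to $(M^{k+1},0)$. It is, however, separated from every chain by the depth-$(k+2)$ concept
\[
D:=\exists r^{k+1}.\left(\exists r.\top\sqcap\exists r.\forall r.\bot\sqcap\exists r.\exists r.\top\right)
\]
or a similar branching-detecting concept. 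We would adjust $D$ so that it asserts two distinct successors at some depth with different continuations, which no chain has. Setting $C':=C\sqcap\neg D$ gives $\mSet\subseteq\modelsof{C'}$ and $(\Imc,d)\notin\modelsof{C'}$, so $\modelsof{C'}\subset\modelsof{C}$. Hence no minimal finitely representable superset exists.

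The main obstacle is checking that the branching witness is genuinely $k$-bisimilar to a chain while still violating $C'$. The extra branch must appear only below depth $k$, since a depth-$k$ concept cannot see it. Meanwhile $D$ must be falsified by every $(M^n,0)$, which holds because every node in a chain has at most one successor, so the conjunction of incompatible successor requirements is never met. The truncation in \cref{lem:kbisimulation} ensures that the node at depth $k+1$ is invisible to $C$. If the chain example turns out to be problematic, for instance because of how the class is defined, we would instead use $\mSet$ consisting of all finite trees of the form $(M^n,0)$ together with the branching trees of bounded branch length, and apply the same refinement argument.
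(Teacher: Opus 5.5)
Your argument is correct in substance and takes a different route from the paper. The paper uses the signature $\{r,A\}$. Its witness is the chain $M^{m+1}$ with its last element added to $A$, which $C$ (of depth $m$) cannot see by $m$-bisimulation invariance. It then builds, from scratch, a concept $C_k$ meant to describe the $A$-free chains, and must prove both $\mSet\subseteq\modelsof{C_k}$ and $\modelsof{C_k}\subseteq\modelsof{C}$. The latter is argued by showing that every model of $C_k$ is $m$-bisimilar to some chain. That inclusion is the delicate step: it requires $C_k$ to force \emph{every} branch, and the root's label, to look like an $A$-free chain up to depth $m$.

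You instead take $C':=C\sqcap\neg D$ with $D$ false on all chains. Then $\modelsof{C'}\subseteq\modelsof{C}$ and $\mSet\subseteq\modelsof{C'}$ are immediate, and all that remains is to exhibit one witness in $\modelsof{C}\setminus\modelsof{C'}$. Intersecting the paper's $C_k$ with $C$ is exactly your move, and it makes that inclusion trivial. Your version also shows the result already holds with a single role name and no concept names, since branching rather than labelling does the separating. Both proofs rest on the same principle: a concept of depth $k$ cannot see below depth $k$. So a finite tree that agrees with a chain up to depth $k$ and deviates further down satisfies $C$, yet a deeper concept can cut it off.

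One concrete slip must be repaired. Read literally, your witness (a chain of length $k+1$ with an extra branch of length $k+2$ hanging off the node at depth $k+1$) is itself a chain. The node at depth $k+1$ is the endpoint of the first chain, so the result is bisimilar to $(M^{2k+3},0)$. It therefore falsifies $D$, and no concept separates it from $\mSet$. What you need is a node at depth $k+1$ with two successors that \ALC can tell apart (duplicating bisimilar successors would be invisible), which is exactly what your $D$ demands. For example:
\begin{itemize}
\item take the path $d_0\to d_1\to\cdots\to d_{k+1}$;
\item give $d_{k+1}$ a leaf successor $e$ and a successor $f$ that has a leaf successor $g$.
\end{itemize}
Every node at depth $<k$ has exactly one successor, so this finite tree is $k$-bisimilar to $(M^n,0)$ for every $n\geq k$ and hence satisfies $C$. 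At the same time $d_0\in D^{\Imc}$. (Note that $D$ has depth $k+3$, not $k+2$, which is harmless.) With this witness the proof is complete.

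The detour through the immediate-successor criterion and the fallback choice of $\mSet$ in your last sentence are unnecessary. With base $\bot$, exhibiting for every $C$ with $\mSet\subseteq\modelsof{C}$ a strictly smaller finitely representable superset already gives $\FRsups(\mSet,\logsys(\ALC_{\text{concepts}}))=\emptyset$.
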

		\begin{proof}
			Consider a signature with two symbols: a role name $r$ and a concept name $A$. Assume
			$\mathbb{M}$ is the set of all pointed interpretations of the form $(M^n,0)$ with 
			%\todo{$M^n$ as def in the appendix}
			$n\in\mathbb{N}$.
			Suppose to the contrary that
			$\logsys(\ALC_{\text{concepts}})$ is   reception-compatible in the class of   finite tree-shaped pointed interpretations.
			That is, suppose
			there is
			an \ALC concept $C$ such that
			\[\modelsof{C} \in \FRsups( \mSet, \logsys(\ALC_{\text{concepts}})).\]
			Suppose $m$ is the depth of $C$. Take $k=m+1$ and let $C_k$ be as follows
			%   
			%    \todo{...}
			%
			\[\bigsqcup^k_{i=1}(\exists r^i.\forall r.\bot\sqcap \bigsqcap_{j<i}(\forall r^j.\exists r.\top) \sqcap (\bigsqcap^i_{l=1} \forall r^l.\neg A))\sqcup ((\exists r^{k+1}.\top)\sqcap \]\[(\bigsqcap^{k+1}_{l=1} \forall r^l.\neg A)) \]
			We claim that $\modelsof{C_k}\subset\modelsof{C}$ and
			$\mathbb{M}\subseteq\modelsof{C_k}$, which contradicts the assumption that $\modelsof{C} \in \FRsups( \mSet, \logsys(\ALC_{\text{concepts}}))$. 
			Indeed, we first show that $\mathbb{M}\subseteq\modelsof{C_k}$.
			For this we just note that $C_k$ has
			two parts:
			\[C^1_k=\bigsqcup^k_{i=1}(\exists r^i.\forall r.\bot\sqcap \bigsqcap_{j<i}(\forall r^j.\exists r.\top)  \sqcap (\bigsqcap^i_{l=1} \forall r^l.\neg A)) \text{ and }\] 
			\[C^2_k=((\exists r^{k+1}.\top)\sqcap \bigsqcap_{j<k+1}(\forall r^j.\exists r.\top) \sqcap (\bigsqcap^{k+1}_{l=1} \forall r^l.\neg A)). \]
			We have that
			$0\in (C^1_n)^{M^n}$ for all $0\leq n< k$, and,  %we have that
			$0\in (C^2_n)^{M^n}$, for all 
			$n\geq k$. By definition of $\mathbb{M}$, this means that  $\mathbb{M}\subseteq\modelsof{C_k}$.
			
			We now show that 
			$\modelsof{C_k}\subset\modelsof{C}$.
			We first show that $\modelsof{C_k}\subseteq\modelsof{C}$. 
			We argue that if a pointed interpretation
			$(\Imc,d)$  satisfies
			$C_k$ then there is a $k'$-bisimulation between
			$\Imc$ and $M^{k'}$ that includes $(d,0)$
			for some $k'\leq k$.
			Indeed, if $d \in (C^1_k)^\Imc$
			then, by definition of $C^1_k$, there is a chain
			of $r$-successors in \Imc
			starting from $d$  of length $k'\leq k$, all chains of $r$-successors are of such length $k'\leq k$, and none of  the elements of the chain(s) is in the extension of $A$.
			Since $0\in C^{M^{k'}}$, by~\cref{lem:kinvariance}, $d\in C^\Imc$. In other words, $(\Imc,d)$  satisfies
			$C$.
			Otherwise
			$d \in (C^2_k)^\Imc$.
			Then, by definition of $C^2_k$, there is a chain
			of $r$-successors in \Imc
			starting from $d$  of length at least $k+1$, all chains of $r$-successors are of   length at least $k+1$, and none of  the elements of the chain(s) is in the extension of $A$ up to the $k+1$ length.
			Then, there is a $k$-bisimulation between
			$\Imc$ and $M^{k}$ that includes $(d,0)$
			for some $k\leq k$. Since $0\in C^{M^{k}}$, by~\cref{lem:kinvariance}, $d\in C^\Imc$. In other words, we also have that $(\Imc,d)$  satisfies
			$C$.
			%above is a bit informal  

			It remains to show that there is a pointed interpretation that satisfies
			$C$ but not $C_k$.
			Let $M^{m+1}_A$ be the result of extending
			$M^{m+1}$ by adding the element $m+1\in\mathbb{N}$ to the extension of 
			$A$. %Since the depth of $C$ is $m$, 
			%if $0\in C^{M^{m+1}}$ then $0\in C^{M^{m+1}_A}$.  
			By assumption, 
			$0\in C^{M^n}$ for all $n\in\mathbb{N}$.
			Then,   $0\in C^{M^{m+1}}$ and, by \cref{lem:kbisimulation} and \cref{lem:khbisimulation}, $0\in C^{M^{m+1}_A}$ since there is an $m$-bisimulation between $M^{m+1}$ and $M^{m+1}_A$. As $k=m+1$, by definition of $C_k$, $0\not\in (C_k)^{M^{m+1}_{A}}$.
			So $\modelsof{C_k}\subset\modelsof{C}$. This contradicts our assumption that \[\modelsof{C} \in \FRsups( \mSet, \logsys(\ALC_{\text{concepts}})).\] 
			%For $n=k=m+1$, we have that 
			%For $0\leq n<k$ and for all pointed interpretations $(\Imc,d)$, if $d$
			%    One can define using disjunction and negation models so that there is always a more suitable concept.
		\end{proof}
	\end{toappendix}
	%\todo{alc concept}
	
	\begin{toappendix}
		For our positive results for \ALC concepts, we use the following notions.
		Given a finite tree-shaped pointed interpretation $(\Imc,d)$ over a finite signature $\Sigma$, let $C_{\Imc}$ be the $\EL_\bot$ concept
		such that $\Imc$ is isomorphic to its canonical model.
		We define an \ALC concept 
		$(C_\Imc)^\dagger$ such that
		\begin{itemize}
			\item  if $(\Jmc,e)$ satisfies 
			$(C_\Imc)^\dagger$ then there is a bisimulation between 
			$(\Jmc,e)$ and $(\Imc,d)$ including $(e,d)$ (over $\Sigma$). 
		\end{itemize}
		We construct $(C_\Imc)^\dagger$  inductively, as follows. 
		\begin{definition}\label{def:translation}
			Given a satisfiable $\EL_\bot$ concept $C$, denote by $(C)^{\dagger}$   the result of translating $C$ into an \ALC concept as follows. Consider \NC and \NR to be finite.
			%and
			%let $\NC(C)$ and $\NR(C)$ denote the sets of concept and role names that occur in $C$.
			%$\Sigma=\{A_1,\ldots,A_n,r_1,\ldots, r_m\}$. 
			Assume w.l.o.g. that an $\EL_\bot$ concept
			$C$ is of the form 
			\[\bigsqcap_{A \in S} A\sqcap\bigsqcap_{r\in T}\bigsqcap^{m_r}_{l=1} \exists r.C^l_r\]
			for some $S\subseteq\NC$, some $T\subseteq \NR$, and each $C^l_r$ being an $\EL_\bot$ concept.
			%with $0\leq n'\leq n$ and $0\leq m'\leq m$. %Consider the case of $m'=0$ as the one with role depth $0$ and the case of $n'=0$ as the one with no concept names as top-level conjuncts. 
			%For the concept $C_\Imc$,
			%W.l.o.g., assume that no  conjunct of $C_\Imc$ implies the other (because such implied conjunct would be redundant). \todo{expand
				%}
			Then, we define
			$(C)^{\dagger}$ as:
			\[\bigsqcap_{A\in S} A\sqcap \bigsqcap_{B\in \NC\setminus S} \neg B \sqcap\bigsqcap_{r\in T}(\bigsqcap^{m_r}_{l=1} \exists r.(C^l_r)^{\dagger}\sqcap \forall r.(\bigsqcup^{m_r}_{l=1}(C^l_r)^{\dagger}))
			\]\[
			\sqcap  \bigsqcap_{s\in\NR\setminus T} \forall s.\bot\]
		\end{definition}
		In \cref{def:translation}, consider the empty cases for conjunction and disjunction 
		%(e.g. $n'=0$ in the first conjunction) 
		as $\top$.

		\begin{example}
			Assume $\NC=\{A,B\}$ and $\NR=\{r,s\}$. Let $C=B\sqcap \exists r.(A\sqcap B)$.
			Then $C^\dagger$ is
			\[\neg A\sqcap  B \sqcap (  \exists r.(A\sqcap B\sqcap \forall r.\bot\sqcap \forall s.\bot) \sqcap 
			\]\[
			\forall r.(A\sqcap B\sqcap \forall r.\bot\sqcap \forall s.\bot)) \sqcap    \forall s.\bot\]
		\end{example}

		%\begin{lemma}\label{lem:elsimulation}
		%   \todo[inline]{bring EL lemma homomorphism from the literature}
		%\end{lemma}
		
		\begin{lemma}\label{lem:translationconcept}
			%Let $(\Jmc,e)$ be a finite tree-shaped pointed interpretation over $\Sigma$. 
			Given a finite tree-shaped pointed interpretation $(\Imc,d)$ over a finite signature $\Sigma$, let $C_{\Imc}$ be the $\EL_\bot$ concept
			with $(\Imc,d)$  isomorphic to its canonical model. Then
			\begin{enumerate}
				\item $(\Imc,d)$ satisfies 
				$(C_\Imc)^\dagger$;
				\item  if a pointed interpretation $(\Jmc,e)$ satisfies 
				$(C_\Imc)^\dagger$ then there is a bisimulation between 
				$(\Jmc,e)$ and $(\Imc,d)$ including $(e,d)$. 
			\end{enumerate}
		\end{lemma}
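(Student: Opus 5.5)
We argue by structural induction on the depth of the finite tree-shaped pointed interpretation $(\Imc,d)$, equivalently on the depth of $C_\Imc$, proving both items together. Write $C_\Imc$ in the normal form of \cref{def:translation}, $C_\Imc=\bigsqcap_{A\in S}A\sqcap\bigsqcap_{r\in T}\bigsqcap_{l=1}^{m_r}\exists r.C^l_r$. Here $S$ is the set of concept names whose extension contains $d$, $T$ is the set of role names for which $d$ has at least one successor, and the $C^l_r$ are the concepts whose canonical models are the subtrees rooted at the $r$-successors $d^l_r$ of $d$. This normal form is read off directly from the canonical model construction, since $(\Imc,d)$ is isomorphic to $(\Imc_{C_\Imc},d_{C_\Imc})$. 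Each subtree rooted at $d^l_r$ is again a finite tree-shaped pointed interpretation of strictly smaller depth, so the induction hypothesis applies to it and to $(C^l_r)^\dagger$. The base case is depth $0$, where $T=\emptyset$ and $(C_\Imc)^\dagger$ fixes the atomic type of $d$ exactly and forbids all role successors.

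\textbf{Item 1.} We check each conjunct of $(C_\Imc)^\dagger$ at $d$.
\begin{itemize}
\item The concept names in $S$ hold at $d$, and no name in $\NC\setminus S$ holds, by the choice of $S$.
\item For $s\notin T$, $d$ has no $s$-successor, so $\forall s.\bot$ holds.
\item For $r\in T$, each $d^l_r$ satisfies $(C^l_r)^\dagger$ by the induction hypothesis (item 1 for the subtree). This gives $\exists r.(C^l_r)^\dagger$.
\item Every $r$-successor of $d$ is one of the $d^l_r$, so $\forall r.\bigsqcup_l (C^l_r)^\dagger$ holds.
\end{itemize}
We use that satisfaction of a concept at $d^l_r$ in $\Imc$ coincides with satisfaction in the subinterpretation rooted at $d^l_r$. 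This holds because $\Imc$ is a tree, so the elements reachable from $d^l_r$ are exactly those of the subtree.

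\textbf{Item 2.} Suppose $e\in((C_\Imc)^\dagger)^\Jmc$. We build the relation
\[Z=\{(e,d)\}\cup\textstyle\bigcup\{Z^l_{r,e'} : r\in T,\ (e,e')\in r^\Jmc,\ e'\in((C^l_r)^\dagger)^\Jmc\},\]
where $Z^l_{r,e'}$ is a bisimulation between $(\Jmc,e')$ and the subtree at $d^l_r$ containing $(e',d^l_r)$. Such a $Z^l_{r,e'}$ exists by the induction hypothesis, and it is also a bisimulation into $\Imc$ itself by the tree property. We verify the three conditions at the pair $(e,d)$.
\begin{itemize}
\item \textbf{Atom.} The conjuncts $\bigsqcap_{A\in S}A\sqcap\bigsqcap_{B\notin S}\neg B$ pin down the type of $e$ as exactly $S$. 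This step uses that $\NC$ is finite.
\item \textbf{Forth from $d$.} For an $r$-successor $d^l_r$ of $d$, the conjunct $\exists r.(C^l_r)^\dagger$ gives an $r$-successor $e'$ of $e$ satisfying $(C^l_r)^\dagger$, and $(e',d^l_r)\in Z$.
\item \textbf{Back from $e$.} Let $e'$ be an $r$-successor of $e$. If $r\notin T$, then $\forall r.\bot$ rules this out; this case uses that $\NR$ is finite. If $r\in T$, the universal conjunct gives some $l$ with $e'\in((C^l_r)^\dagger)^\Jmc$, and $(e',d^l_r)\in Z$.
\end{itemize}
At all other pairs, the conditions hold because they are inherited from the component bisimulations. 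A union of bisimulations is a bisimulation, so $Z$ is one.

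The main obstacle is the bookkeeping in item 2. We must confirm that the union $Z$ stays a bisimulation: pairs may come from different components $Z^l_{r,e'}$, and a single $e'$ may match several $l$. This is harmless because each component is itself a bisimulation, and the back and forth conditions are local to each pair. We must also handle the degenerate conventions, namely empty conjunctions and disjunctions read as $\top$, and the fact that $C_\Imc$ is satisfiable because $\Imc$ contains no $\bot$. The normal form and the identification of the concepts $C^l_r$ with the subtrees follow directly from the canonical model definition.
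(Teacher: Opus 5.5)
Your proof is correct and follows the paper's argument. Both proceed by induction on the role depth of $C_\Imc$, check item~1 conjunct by conjunct, and for item~2 take $Z$ to be $\{(e,d)\}$ together with the union of the bisimulations that the induction hypothesis supplies for the $r$-successors of $e$ satisfying $(C^l_r)^\dagger$. You also explicitly verify forth and back at the root pair using the conjuncts $\exists r.(C^l_r)^\dagger$, $\forall r.\bigsqcup_l (C^l_r)^\dagger$ and $\forall s.\bot$, a step the paper only asserts.
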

		\begin{proof}
			% \todo{prove the direction if there is a bisimulation between 
				% $(\Jmc,e)$ and $(\Imc,d)$ including $(e,d)$ then $(\Jmc,e)$ satisfies   %$(C_\Imc)^\dagger$ }
			By assumption, \Imc is a finite tree-shaped interpretation over a finite signature $\Sigma$ and $C_\Imc$ is the $\EL_\bot$ concept with \Imc isomorphic its  canonical model.
			%This means that

			We prove the two items of the lemma by induction on the role depth of $C_\Imc$ (that is, the number of nestings of existential quantifiers in $C_\Imc$).
			%
			%Let $\Sigma=\{A_1,\ldots,A_n,r_1,\ldots, r_m\}$. 
			At role depth $0$
			$C_\Imc$ is of the form
			\[\bigsqcap_{A\in S} A\]
			for some $S\subseteq \NC$. 
			%with $0\leq n'\leq n$. 
			Then, $(C_\Imc)^\dagger$ is of the form 
			\[\bigsqcap_{A\in S} A\sqcap \bigsqcap_{B\in \NC\setminus S} \neg B \sqcap \bigsqcap_{r\in \NR} \forall r.\bot\]
			We first argue about Item (1). It holds because $(\Imc,d)$ is a pointed interpretation with $r^\Imc=\emptyset$ for all $r\in\NR$ and
			$A^\Imc=\{d\}$ for all
			$A\in S$ and $B^\Imc=\emptyset$ for all
			$B\in \NC\setminus S$. So it satisfies $(C_\Imc)^\dagger$. 
			Regarding Item (2), suppose $(\Jmc,e)$ satisfies $(C_\Imc)^\dagger$.
			Let $Z$ be the relation
			\(\{(e,d)\}\). 
			The atom condition holds because we only add to $Z$ pairs of elements that satisfy the atom condition (recall that \NC is finite, so a finite \ALC concept can enforce the atom condition). The back and forth conditions hold vacuously.

			Now assume that the concept $C_\Imc$ is of the form 
			\[\bigsqcap_{A \in S} A\sqcap\bigsqcap_{r\in T}\bigsqcap^{m_r}_{l=1} \exists r.C^l_r\]
			%\[\bigsqcap^{n'}_{i=1} A_i\sqcap\bigsqcap^{m'}_{j=1}\bigsqcap^{m_j}_{l=1} \exists r_j.C^l_j.\]
			for some $S\subseteq \NC$ and $T\subseteq \NR$.
			Suppose that the two items of this lemma hold for $C^l_r$. We want to show that for $(C_\Imc)^\dagger=$
			\[\bigsqcap_{A\in S} A\sqcap \bigsqcap_{B\in \NC\setminus S} \neg B \sqcap\bigsqcap_{r\in T}(\bigsqcap^{m_r}_{l=1} \exists r.(C^l_r)^{\dagger}\sqcap \forall r.(\bigsqcup^{m_r}_{l=1}(C^l_r)^{\dagger}))
			\]\[
			\sqcap  \bigsqcap_{s\in\NR\setminus T} \forall s.\bot\]
			%\[\bigsqcap^{n'}_{i=1} A_i\sqcap \bigsqcap^n_{k=n'+1} \neg A_k \sqcap\bigsqcap^{m'}_{j=1}(\bigsqcap^{m_j}_{l=1} \exists r_j.(C^l_j)^{\dagger}\sqcap \forall r_j.(\bigsqcup^{m_j}_{l=1}(C^l_j)^{\dagger})) \] \[\sqcap \bigsqcap^m_{o=m'+1} \forall r_o.\bot\]
			we have that (1) \ $(\Imc,d)$ satisfies $(C_\Imc)^\dagger$
			and (2)
			there is a bisimulation between
			$(\Jmc,e)$ and $(\Imc,d)$ including $(e,d)$, where $(\Jmc,e)$ is a pointed interpretation that satisfies $(C_\Imc)^\dagger$. We start with Item (1).
			By assumption, for every subconcept $C^l_r$ of $C_\Imc$, we have that
			$(\Imc^l_r,d^l_r)$ satisfies $(C^l_r)^\dagger$, with $(\Imc^l_r,d^l_r)$ being isomorphic to the canonical model of $C^l_r$. Let
			$(\Imc,d)$ be the pointed interpretation with
			\begin{itemize}
				\item the domain being the union of the domains of all 
				$\Imc^l_r$ (assume w.l.o.g. that these domains are disjoint)
				and a fresh element $d$; and 
				\item the extension of concept and role names being as in each $\Imc^l_r$, except that
				we  add $(d,d^l_r)$ to 
				$r^\Imc$, for all $r\in T$, and we add $d$ to $A^\Imc$, for all $A\in S$.
			\end{itemize}
			By definition, $(\Imc,d)$ is 
			isomorphic to the canonical model of $C_\Imc$. It satisfies $(C_\Imc)^\dagger$ because we added $d$ exactly to the extension of the concept names that are not negated in $(C_\Imc)^\dagger$ and connected $d$ with elements $d^l_r$ where
			$d^l_r\in (C^l_r)^\Imc$ holds by the inductive hypothesis.  This finishes the proof of Item (1).
			
			We now proceed with Item (2).
			By assumption, for every subconcept $C^l_r$ of $C_\Imc$, there is a bisimulation
			$Z^l_r$ between a pointed interpretation $(\Imc_{C^l_r},d_{C^l_r})$, that is isomorphic to the canonical model of $C^l_r$, and every subinterpretation $(\Jmc_{C^l_r},e_{C^l_r})$ of $(\Jmc,e)$    that satisfies $(C^l_r)^\dagger$, with $(e,e_{C^l_r})\in r^\Jmc$.  
			We define $Z$ as \[\{(e,d)\}\cup\bigcup_{r\in T,l\in\{1,\ldots,m_r\}} Z^l_r.\] 
			By definition, the atom condition holds for $(e,d)$. The back, forth, and atom conditions hold for the successors of $e$ and $d$ because, by   assumption, each $Z^l_r$ is a bisimulation. 
		\end{proof}
		%$(C_\Imc)^\dagger$   
		%To prove this lemma, we first argue that the following technical claims hold.
		%
		%\begin{claim}\label{clm:translation}
		%   For all $\EL_\bot$ concepts $C$,
		%       if $(\Jmc,e)$ satisfies $(C)^\dagger$
		%        then it satisfies $C$.
		% \end{claim}
	%\begin{proof}
	%  
	%\end{proof}
	%   By \cref{lem:elsimulation},
	% there is a homomorphism $h$ from $(\Imc,d)$ to $(\Jmc,e)$. 
	%Let $\Jmc'$ be the subinterpretation of \Jmc that results from:  %performing the following operations:
	%\begin{itemize}
	%   \item removing all nodes not reachable via a (directed) path from 
	%  $e$.
	%\end{itemize}
	%
	%\begin{claim}
	%    $\Jmc'$ is a tree-shaped interpretation rooted in $e$.
	%\end{claim} 
	
	%
	%\begin{claim}\label{%clm:techtree}
	%     $(\Jmc,e)$ and $(\Jmc',e)$ are bisimilar.
	%\end{claim} 
\end{toappendix}
\begin{theorem}\label{thm:alcposevictionreception}
	$\logsys(\ALC_{\text{concepts}})$ is   reception-compatible and eviction-compatible in the class of finite sets of finite tree-shaped pointed interpretations over any finite signature.
\end{theorem}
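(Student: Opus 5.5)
The plan is to show directly that $\FRsups(\modelsof{\baseb}\cup\mSet,\logsys)$ and $\FRsubs(\modelsof{\baseb}\setminus\mSet,\logsys)$ are non-empty. In fact each is a singleton, because in this restricted class every model in $\mSet$ can be pinned down up to bisimulation by a single \ALC concept. Fix the finite signature $\Sigma$, a finite base $\baseb$ (read as a concept $C$), and a finite set $\mSet=\{(\Imc_1,d_1),\ldots,(\Imc_n,d_n)\}$ of finite tree-shaped pointed interpretations. For each $i$ take the $\EL_\bot$ concept $C_{\Imc_i}$ and its translation $(C_{\Imc_i})^\dagger$ from \cref{def:translation}. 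This translation is a finite \ALC concept because $\NC$ and $\NR$ are finite. By \cref{lem:translationconcept}, $(\Imc_i,d_i)$ satisfies $(C_{\Imc_i})^\dagger$, and every model of $(C_{\Imc_i})^\dagger$ is bisimilar to $(\Imc_i,d_i)$.

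The second ingredient is the standard fact that \ALC concepts are invariant under bisimulation: if $(\Jmc,e)$ and $(\Imc,d)$ are bisimilar, they satisfy the same \ALC concepts. This is the classical result behind \cref{lem:bisimulation} and \cref{lem:kinvariance}. Combining the two facts gives
\[\modelsof{(C_{\Imc_i})^\dagger}=\{(\Jmc,e)\mid (\Jmc,e)\text{ is bisimilar to }(\Imc_i,d_i)\},\]
the bisimulation class of $(\Imc_i,d_i)$. Call this class $[\Imc_i]$.

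For reception, set $D:=C\sqcup\bigsqcup_{i=1}^n(C_{\Imc_i})^\dagger$, so that $\modelsof{D}=\modelsof{C}\cup\bigcup_i[\Imc_i]$. Clearly $\modelsof{C}\cup\mSet\subseteq\modelsof{D}$. Now let $D'$ be any \ALC concept with $\modelsof{C}\cup\mSet\subseteq\modelsof{D'}$. By bisimulation invariance, $D'$ is satisfied by everything bisimilar to each $(\Imc_i,d_i)$, so $\bigcup_i[\Imc_i]\subseteq\modelsof{D'}$ and hence $\modelsof{D}\subseteq\modelsof{D'}$. Therefore $\modelsof{D}$ is the least finitely representable superset, and $\FRsups(\modelsof{\baseb}\cup\mSet,\logsys)=\{\modelsof{D}\}\neq\emptyset$.

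For eviction, argue dually with $E:=C\sqcap\bigsqcap_{i=1}^n\neg(C_{\Imc_i})^\dagger$, so that $\modelsof{E}=\modelsof{C}\setminus\bigcup_i[\Imc_i]$. Item 1 of \cref{lem:translationconcept} gives $\modelsof{E}\subseteq\modelsof{C}\setminus\mSet$. Now let $E'$ be any \ALC concept with $\modelsof{E'}\subseteq\modelsof{C}\setminus\mSet$. Suppose some model of $E'$ were bisimilar to some $(\Imc_i,d_i)$. By invariance $(\Imc_i,d_i)\in\modelsof{E'}$, contradicting $\modelsof{E'}\cap\mSet=\emptyset$. Hence $\modelsof{E'}\subseteq\modelsof{E}$, so $\modelsof{E}$ is the greatest such set and $\FRsubs(\modelsof{\baseb}\setminus\mSet,\logsys)\neq\emptyset$.

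The step I expect to need the most care is the exact characterisation $\modelsof{(C_{\Imc_i})^\dagger}=[\Imc_i]$. Item 2 of \cref{lem:translationconcept} only gives the inclusion ``$\subseteq$''. The inclusion ``$\supseteq$'' has to come from item 1 together with bisimulation invariance. I would also check that the universe $\mUni$, whether it is all pointed interpretations or only finite trees over $\Sigma$, does not affect the argument, since both inclusions are used only relative to $\mUni$.

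This is also where the three restrictions are each used:
\begin{itemize}
\item finiteness of $\mSet$ keeps the disjunction and conjunction over $i$ finite;
\item finiteness and tree shape of each $\Imc_i$ make $C_{\Imc_i}$ exist;
\item finiteness of $\Sigma$ makes $(\cdot)^\dagger$ a finite concept that fixes the atoms and bounds all roles.
\end{itemize}
These are exactly the restrictions whose absence fails in \cref{thm:alcrecfintreefinsetboth}.
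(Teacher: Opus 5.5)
Your proposal is correct and follows essentially the same route as the paper. The paper uses the same witnesses, $C\sqcup\bigsqcup_{i}(C_{\Imc_i})^\dagger$ for reception and $C\sqcap\bigsqcap_{i}\neg(C_{\Imc_i})^\dagger$ for eviction, justified by \cref{lem:translationconcept} and the bisimulation invariance of \ALC; your version is slightly sharper in showing these are the least (resp.\ greatest) finitely representable sets, so $\FRsups$ and $\FRsubs$ are singletons, where the paper only argues minimality (resp.\ maximality).
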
 

\begin{toappendix}

	Theorem~\ref{thm:alcposevictionreception} follows from the next two theorems.
	\begin{theorem}\label{thm:alcposreception}
		$\logsys(\ALC_{\text{concepts}})$ is   reception-compatible in the class of finite sets of finite tree-shaped pointed interpretations over any finite signature.
	\end{theorem}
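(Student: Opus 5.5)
The plan is to show directly that, for every finite \ALC concept $\baseb$ and every finite set $\mSet=\{(\Imc_1,d_1),\ldots,(\Imc_n,d_n)\}$ of finite tree-shaped pointed interpretations over the fixed finite signature $\Sigma$, the set $\modelsof{\baseb}\cup\mSet$ is itself finitely representable. Then $\FRsups(\modelsof{\baseb}\cup\mSet,\logsys(\ALC_{\text{concepts}}))$ is the singleton containing that set, so it is non-empty, and reception-compatibility follows immediately from its definition. There is no need to route through \cref{thm:invoke}, although case (i) of that theorem is exactly what we establish.

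The candidate concept is
\[
D \;:=\; \baseb \sqcup \bigsqcup_{i=1}^{n} (C_{\Imc_i})^\dagger ,
\]
where $C_{\Imc_i}$ is the $\EL_\bot$ concept whose canonical model is isomorphic to $(\Imc_i,d_i)$, and $(\cdot)^\dagger$ is the translation of \cref{def:translation}. This translation is well defined because $\Sigma$ is finite, so the conjunctions over $\NC\setminus S$ and $\NR\setminus T$ are finite. It is also needed that each $(\Imc_i,d_i)$, being finite and tree-shaped, is isomorphic to the canonical model of some $\EL_\bot$ concept, as noted in the preliminaries on unfolding.

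For $\modelsof{\baseb}\cup\mSet\subseteq\modelsof{D}$, use item (1) of \cref{lem:translationconcept}: each $(\Imc_i,d_i)$ satisfies $(C_{\Imc_i})^\dagger$, and every model of $\baseb$ satisfies $D$ trivially. For the converse, let $(\Jmc,e)$ satisfy $D$. If it satisfies $\baseb$ we are done. Otherwise it satisfies some $(C_{\Imc_i})^\dagger$, and by item (2) of \cref{lem:translationconcept} it is bisimilar to $(\Imc_i,d_i)$.

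This last step is the main obstacle. Bisimilarity only guarantees that $(\Jmc,e)$ satisfies exactly the same \ALC concepts as $(\Imc_i,d_i)$; it does not make $(\Jmc,e)$ equal to $(\Imc_i,d_i)$. So $\modelsof{D}$ contains the whole bisimulation closure of $\mSet$, not $\mSet$ alone. I would handle this in two steps.

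First, show that no finitely representable set can separate $(\Jmc,e)$ from $(\Imc_i,d_i)$. By invariance of \ALC under bisimulation (the analogue of \cref{lem:bisimulation}), every finitely representable superset of $\modelsof{\baseb}\cup\mSet$ contains every pointed interpretation bisimilar to a member of $\mSet$. Hence $\modelsof{D}$ is contained in every finitely representable superset of $\modelsof{\baseb}\cup\mSet$.

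Second, since $\modelsof{D}$ is itself finitely representable and contains $\modelsof{D}\supseteq\modelsof{\baseb}\cup\mSet$, it is the least such superset. Therefore $\modelsof{D}$ is the unique element of $\FRsups(\modelsof{\baseb}\cup\mSet,\logsys(\ALC_{\text{concepts}}))$.

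This stronger formulation, that $\modelsof{D}$ is the least finitely representable superset, is the statement I would actually prove. The remaining task is to check carefully that item (2) of \cref{lem:translationconcept} yields a full bisimulation over $\Sigma$, so that \ALC invariance applies to every concept in the language.
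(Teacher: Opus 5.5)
Your final argument is correct and is essentially the paper's proof. It uses the same concept $\baseb\sqcup\bigsqcup_i (C_{\Imc_i})^\dagger$, item (1) of \cref{lem:translationconcept} for the containment $\modelsof{\baseb}\cup\mSet\subseteq\modelsof{D}$, and item (2) plus bisimulation invariance to rule out any smaller finitely representable superset; your formulation, that $\modelsof{D}$ is contained in every finitely representable superset, is if anything cleaner than the paper's.

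Do drop your opening claim that $\modelsof{\baseb}\cup\mSet$ is itself finitely representable (case (i) of \cref{thm:invoke}), because it is false in general. For example, with $\baseb=\bot$, a distinct but bisimilar copy of $(\Imc_1,d_1)$ satisfies every concept that $(\Imc_1,d_1)$ does. This is exactly the obstacle you identify later, which is why the conclusion must be that $\modelsof{D}$ is the unique element of $\FRsups$ rather than equal to $\modelsof{\baseb}\cup\mSet$.
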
 
	\begin{proof}
		Let $\mSet$ be a set of tree-shaped pointed interpretations $(\Imc_1,d_1),\ldots,$ $(\Imc_n,d_n)$ over a finite signature $\Sigma$ and let $C$ be an \ALC concept over $\Sigma$. We need to show that
		\(\FRsups(\modelsof{C}\cup \mSet, \logsys(\ALC_{\text{concepts}}))\neq \emptyset\).
		Let $D_i$ be the $\EL_\top$ concept with $\Imc_{D_i}$ being isomorphic to $\Imc_i$.
		We argue that $\modelsof{C\sqcup \bigsqcup^n_{i=1}  (D_i)^\dagger}$ is in
		\[ \FRsups(\modelsof{C}\cup \mSet, \logsys(\ALC_{\text{concepts}})),\] where $(D_i)^\dagger$ is as in  \cref{def:translation}, for all $1\leq i\leq n$. 
		By \cref{lem:translationconcept}, %$\modelsof{(D_i)^\dagger} = \{(\Jmc,e)\mid (\Jmc,e) \text{ and }(\Imc_i,d_i)\text{ are bisimilar}\}$.
		%As every pointed interpretation is bisimilar with itself, 
		$\mSet\subseteq \bigcup^n_{i=1} \modelsof{(D_i)^\dagger}$.
		This means that 
		\[\modelsof{C}\cup \mSet\subseteq \modelsof{C\sqcup \bigsqcup^n_{i=1}  (D_i)^\dagger}.\]
		Moreover, by~\cref{lem:translationconcept}, if a pointed interpretation belongs to
		$\modelsof{C\sqcup \bigsqcup^n_{i=1}  (D_i)^\dagger}$ but not to $\modelsof{C}\cup \mSet$ then
		it is bisimilar to a model in $\mSet$. 
		Since \ALC is invariant under bisimulations, there is no $\mSet'$ such that
		$\mSet'\in\FRsets(\logsys(\ALC_{\text{concepts}}))$ and
		\[\mSet\subseteq\mSet'\subset \modelsof{C\sqcup \bigsqcup^n_{i=1}   (D_i)^\dagger}.\] 
		So \(\modelsof{C\sqcup \bigsqcup^n_{i=1} (D_i)^\dagger}\in \FRsups(\modelsof{C}\cup \mSet, \logsys(\ALC_{\text{concepts}})),\) as required. 
	\end{proof}
	The next theorem can be proved in a similar way.
	\begin{theorem}\label{thm:alcposeviction}
		$\logsys(\ALC_{\text{concepts}})$ is   eviction-compatible in the class of finite sets of finite tree-shaped pointed interpretations over a finite signature.
	\end{theorem}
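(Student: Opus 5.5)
We mirror the proof of \cref{thm:alcposreception}, now removing rather than adding models. Let $\mSet=\{(\Imc_1,d_1),\ldots,(\Imc_n,d_n)\}$ be a finite set of finite tree-shaped pointed interpretations over the finite signature $\Sigma$, and let $C$ be an \ALC concept over $\Sigma$. We must show that $\FRsubs(\modelsof{C}\setminus\mSet,\logsys(\ALC_{\text{concepts}}))\neq\emptyset$. For each $i$, let $D_i$ be the $\EL_\bot$ concept whose canonical model is isomorphic to $(\Imc_i,d_i)$, and let $(D_i)^\dagger$ be its translation from \cref{def:translation}. The candidate is
\[C^- := C\sqcap \bigsqcap^n_{i=1}\neg (D_i)^\dagger .\]

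First, by Item (1) of \cref{lem:translationconcept}, each $(\Imc_i,d_i)$ satisfies $(D_i)^\dagger$. Hence no element of $\mSet$ satisfies $C^-$, and so $\modelsof{C^-}\subseteq\modelsof{C}\setminus\mSet$.

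Next we establish maximality. Suppose $\mSet'\in\FRsets(\logsys(\ALC_{\text{concepts}}))$ with $\modelsof{C^-}\subseteq\mSet'\subseteq\modelsof{C}\setminus\mSet$, witnessed by an \ALC concept $E$. Take any $(\Jmc,e)\in\mSet'$ and assume, for contradiction, that $(\Jmc,e)\notin\modelsof{C^-}$. Since $(\Jmc,e)$ satisfies $C$, it must satisfy some $(D_i)^\dagger$. By Item (2) of \cref{lem:translationconcept}, $(\Jmc,e)$ is then bisimilar to $(\Imc_i,d_i)$ over $\Sigma$. Because \ALC concepts are invariant under bisimulation (\cref{lem:bisimulation} and the standard invariance result), $(\Imc_i,d_i)$ also satisfies $E$. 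So $(\Imc_i,d_i)\in\mSet'$, contradicting $\mSet'\cap\mSet=\emptyset$. Therefore $\mSet'=\modelsof{C^-}$, and $\modelsof{C^-}\in\FRsubs(\modelsof{C}\setminus\mSet,\logsys(\ALC_{\text{concepts}}))$.

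The main obstacle is this maximality step. It depends on the bisimulation characterisation in Item (2) of \cref{lem:translationconcept}, and that characterisation is exactly where the finiteness of the signature is essential: the translation must negate every concept name not in $S$ and bound every role not in $T$. A second point to check is that the definition of $C^-$ needs both $C$ and the $D_i$ to be over $\Sigma$, so that bisimilarity over $\Sigma$ transfers satisfaction of $E$. Finally, the degenerate case $\mSet=\emptyset$ is immediate: then $C^-\equiv C$ and $\modelsof{C}$ is itself finitely representable.
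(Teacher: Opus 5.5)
Your proof is correct and takes essentially the same route as the paper. Both use the candidate $C\sqcap\bigsqcap_{i=1}^n\neg(D_i)^\dagger$, get the inclusion in $\modelsof{C}\setminus\mSet$ from Item (1) of \cref{lem:translationconcept}, and get maximality from Item (2) together with bisimulation invariance of \ALC. Your maximality step is spelled out more carefully than the paper's, which writes the upper bound as $\mSet$ where $\modelsof{C}\setminus\mSet$ is meant; one small correction is that the concept that must be over $\Sigma$ for the transfer is the witness $E$, which holds automatically because the signature is fixed.
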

	\begin{proof}
		Let $\mSet$ be a set of tree-shaped pointed interpretations $(\Imc_1,d_1),\ldots,$ $(\Imc_n,d_n)$ over a finite signature $\Sigma$ and let $C$ be an \ALC concept over $\Sigma$. We need to show that
		\(\FRsubs(\modelsof{C}\setminus \mSet, \logsys(\ALC_{\text{concepts}}))\neq \emptyset\).
		Let $D_i$ be the $\EL_\top$ concept with $\Imc_{D_i}$ being isomorphic to $\Imc_i$.
		We argue that $\modelsof{C\sqcap \bigsqcap^n_{i=1} \neg (D_i)^\dagger}$ is in
		\[ \FRsubs(\modelsof{C}\setminus \mSet, \logsys(\ALC_{\text{concepts}})),\] where $(D_i)^\dagger$ is as in  \cref{def:translation}, for all $1\leq i\leq n$. 
		By \cref{lem:translationconcept}, %$\modelsof{(D_i)^\dagger} = \{(\Jmc,e)\mid (\Jmc,e) \text{ and }(\Imc_i,d_i)\text{ are bisimilar}\}$.
		%As every pointed interpretation is bisimilar with itself, 
		$\mSet\subseteq \bigcup^n_{i=1} \modelsof{(D_i)^\dagger}$.
		This means that 
		\[\modelsof{C\sqcap \bigsqcap^n_{i=1} \neg (D_i)^\dagger}\subseteq \modelsof{C}\setminus \mSet.\]
		Moreover, by~\cref{lem:translationconcept}, if a pointed interpretation belongs to
		$\modelsof{C}\setminus \mSet$ but not to $\modelsof{C\sqcap \bigsqcap^n_{i=1} \neg (D_i)^\dagger}$ then
		it is bisimilar to a model in $\mSet$. 
		Since \ALC is invariant under bisimulations, there is no $\mSet'$ such that
		$\mSet'\in\FRsets(\logsys(\ALC_{\text{concepts}}))$ and
		\[\modelsof{C\sqcap \bigsqcap^n_{i=1} \neg (D_i)^\dagger}\subset\mSet'\subseteq\mSet.\] So \(\modelsof{C\sqcap \bigsqcap^n_{i=1} \neg (D_i)^\dagger}\in \FRsubs(\modelsof{C}\setminus \mSet, \logsys(\ALC_{\text{concepts}}))\).
	\end{proof}
\end{toappendix}

\section{Model Revision}\label{sec:modelrev}

In this section, we introduce a new kind of model change operation, which we call \emph{model revision}. 
%Whereas eviction and reception, respectively, only removes or incorporates interpretations, model revision assemble these two dimensions: it 
Model revision incorporates a set of models while also guaranteeing that another   set of models is removed.
%(equivalently, not incorporated). 
%
For instance, in \cref{ex:koala_rev},  
%\todo{fixed typos, wrong ref, rem repetition, shortened a bit}
$(\Imc_4,d'')$ had to be removed, while $(\Imc_5,d'')$ had to be added. 
Revision cannot be defined by assembling reception and eviction, as reception can add models required to be evicted and vice versa. 
For revision, we consider %classes of \emph{pairs of models}, that is, belief 
change operators defined on a %(binary) 
class $\Cmc \subseteq \powerset(\mUni) \times \powerset(\mUni)$ of \emph{pairs of models}. 
A class of pairs of models is called a \emph{binary class}.
%  (see \cref{ex:rev_noevicrecp}).  
%Such a na\"{i}ve strategy also violates minimality, as it potentially adds and removes more models than necessary. 

\begin{example}\label{ex:rev_noevicrecp}
	Let $\baseb = \{ \exists r. \top \}$ be an $\EL_{\bot}$ concept on the signature $\NC = \{A\}$ and $\NR = \{r\}$. %as described on \cref{ex:evic}, 
	Let %\todo[color=red]{rever o exemplo}
	$(\Imc_1, d_1)$ and $(\Imc_2, d_1)$ be pointed models with  $\Delta^{\Imc_1} = \{d_1, d_2\}, \Delta^{\Imc_2}= \{d_1, d_2, d_3\}$ and  
	\begin{align*}
		\Imc_1 & : \boxed{\begin{aligned}[t]
				A^{\Imc_1} = \{d_2\}, r^{\Imc_1} = \{(d_1,d_2)\}
		\end{aligned}}\\
		\Imc_2 & : \boxed{\begin{aligned}[t]
				A^{\Imc_2} = \{d_1\}, r^{\Imc_2} = \{(d_1,d_2), (d_2, d_3)
				\}
		\end{aligned}}
	\end{align*}
	%   \todo{fixed the order and typos}
	We want to revise $\baseb$ with $(\{(\Imc_1,d_1)\}, \{(\Imc_2,d_1)\})$, that is, receive $(\Imc_1,d_1)$ and evict $(\Imc_2,d_1)$. 
	%Eviction of $(\Imc_2, d_1)$ followed by the reception of $(\Imc_1,d_2)$ does not work, nor first reception of $(\Imc_1, d_1)$ followed by the eviction of $(\Imc_2, d_1)$.
	Combining rational eviction with rational reception, in any order, is not strong enough to achieve revision. 
	A rational eviction of $\baseb$ with $(\Imc_2, d_1)$ is $\baseb' = \{\exists r^3. \top \}$. %as $\FRsubs(\modelsof{\baseb}\setminus\{(\Imc_2, d_1)\}) = \{\modelsof{\baseb'}\}$. 
	%\todo{why? isnt there the possibility of using   $\exists r.A$ to remove?} 
	However, incorporating $(\Imc_1, d_1)$ to it yields the base $\{\exists r. \top\}$ again, which contains $(\Imc_2, d_1)$. 
	On the other hand,  reception of $\baseb$ with $(\Imc_1, d_1)$ does not change $\baseb$, as $(\Imc_1,d_1)$ is a model of $\exists r.\top$. 
	Eviction of $\baseb$ with $(\Imc_2, d_1)$ gives $\exists r^3. \top$, which does not contain $(\Imc_1, d_1)$. 
	%We want to revise $\baseb$ with $(\{(\Imc_2,d_1)\}, \{(\Imc_1,d_1)\})$, that is, evict $(\Imc_2,d_1)$ and receive $(\Imc_1,d_1)$. 
	%Eviction of $(\Imc_2, d_1)$ followed by the reception of $(\Imc_1,d_2)$ does not work, nor first reception of $(\Imc_1, d_1)$ followed by the eviction of $(\Imc_2, d_1)$.
	%Every eviction of $\baseb$ with $(\Imc_2, d_1)$ results in a  base  equivalent to $\baseb' = \{\exists r^3. \top \}$. 
	%However, incorporating $(\Imc_1, d_1)$ to it yields the base $\{\exists r. \top\}$ again, which contains $(\Imc_2, d_1)$. 
	%On the other hand,  reception of $\baseb$ with $(\Imc_1, d_1)$ does not change $\baseb$, as $(\Imc_1,d_1)$ is a model of $\exists r.\top$. 
	%Eviction of $\baseb$ with $(\Imc_2, d_1)$ gives $\exists r^3. \top$, which does not contain $(\Imc_1, d_1)$. 
	%So, combining eviction with reception, in any order, is not strong enough to achieve revision. 
	%\todo[inline]{we could use this same example to illustrate the following revision in this section, the results would be $\exists r. A$}
\end{example}
\begin{definition}
	A revision operator, on a binary class $\Cmc$, % \subseteq \powerset(\mUni) \times \powerset(\mUni)$, 
	is function $\rev: \finitepwset(\llang)\times \Cmc \to \finitepwset(\llang)$ which satisfies the postulate 
	\begin{description}
		\item[] \success 
		$\mathbb{M}^- \cap \modelsof{\rev(\Bmc,\mathbb{M}^+,\mathbb{M}^-)} = \emptyset$ and
		$\mathbb{M}^+ \subseteq \modelsof{\rev(\Bmc,\mathbb{M}^+, \mathbb{M}^-)}$. 
		%\todo[color=green]{add note abuse of notation}
	\end{description}
\end{definition}
%\todo{found comma before and the text was saying "first", fixed}
For clarity, we denote $\Mminus$ as the set of models to be removed, while $\Mplus$ denotes the set of models to be added. 
%The first rationality postulate of revision, 
\emph{Success}  guarantees that all models in $\Mminus$ are removed while all models in $\Mplus$ are added.  
Clearly, one cannot demand to add and remove the same model.  
%Therefore, 
Success cannot be satisfied 
%a revision operator cannot satisfy success
%\todo{changed cannot be defined with cannot have success}
for $(\baseb, \Mplus, \Mminus)$, if $\Mplus$ and $\Mminus$ are not disjoint. 
%These anomalous cases, therefore, must be ruled out from the class of models $\Cmc$. 
Unfortunately, 
%identifying such anomalous cases
the possibility of
success
goes beyond 
identifying whether or not $\Mplus$ and $\Mminus$ are disjoint. 
%share some models. 
%This occurs due to the 
It depends on the
logic's underlying satisfaction system.
%and their logics. 
%Different logics and satisfaction systems, present strong properties relating their interpretations. 
For example, for \ALC concepts, models are closed under bisimulation~\citep{DBLP:books/el/07/GorankoO07,DBLP:books/el/07/BBW2007}, which means that some distinct but bisimilar pointed interpretations $(\Imc_1,d_1)$ and $(\Imc_2,d_2)$ satisfy precisely the same formulae. 
Thus, a revision that demands incorporation of $(\Imc_1,d_1)$ and removal of $(\Imc_2,d_2)$ cannot occur. % not possible. 
% Moreover, as discussed in \cref{sec:model_change}, some models cannot be distinguished from others. For example, in \cref{ex:nonunique}, there is no concept satisfied by interpretation $\Imc_1$ that is violated by $\Imc_2$. So, reception of $\Imc_1$ enforces reception of $\Imc_2$. Consequently, a revision cannot be performed by incorporating $\Imc_1$ and eliminating $\Imc_2$. 
For success, revision must at least be defined on  classes of pairs of models in which incorporating $\Mplus$ does not conflict with eliminating $\Mminus$. 
We call such classes \emph{revision-realisable}. 

\begin{definition}
	A binary class of models $\Cmc$ is revision-realisable iff for all $(\Mplus, \Mminus) \in \Cmc$, there is a finitely representable set $\mSet$ of 
	models
	%    interpretations
	such that 
	$\Mplus \subseteq \mSet$ and $\Mminus \cap \mSet = \emptyset$.
\end{definition}
%\todo{...}
%Even when considering only revision-realisable classes, 

For conciseness, unless otherwise explicitly stated, we assume that all binary classes of models are revision-realisable. 
\emph{Success} alone is not enough to bring rationality to revision. 
We introduce other rationality postulates to capture the minimal change principle for revision. 
We start with %the following two,

\begin{description}
	%     \item[] \textbf{success:} 
	% $\mathbb{M}^- \cap \modelsof{\rev(\Bmc,\mathbb{M}^+,\mathbb{M}^-)} = \emptyset$ and
	% $\mathbb{M}^+ \subseteq \modelsof{\rev(\Bmc,\mathbb{M}^+, \mathbb{M}^-)}$, 

	\item[]  \vacuousexpansion: if $\mathbb{M}^+ \subseteq \mod(\Bmc)$ then $\modelsof{\rev(\Bmc,\mathbb{M}^+, \mathbb{M}^-)} \subseteq \mod(\Bmc)$,
	
	\item[] \vacuousremoval: if $\mathbb{M}^- \cap \mod(\Bmc) = \emptyset$ then $\mod(\Bmc) \subseteq \modelsof{\rev(\Bmc, \mathbb{M}^+, \mathbb{M}^-)}$.
\end{description}

If all models of $\Mplus$ are models of  $\baseb$, one should  only evict $\Mminus$ (\emph{vacuous expansion}). 
On the other hand, if none of the models in $\Mminus$ satisfy $\baseb$ \emph{(vacuous removal)}, all we need to do is to add models.
%there is nothing to be removed, and one should concentrate their efforts on incorporating $\Mplus$ 
%. 
%In the special case that 
If all models in $\Mplus$ satisfy $\baseb$, and none of  $\Mminus$ violate $\baseb$, the base $\baseb$ should be left untouched. 
We call this \emph{lethargy}. 

\begin{description}
	\item[] \lethargy: if $\Mplus \subseteq \modelsof{\baseb}$ and $\Mminus\cap \modelsof{\baseb
	} = \emptyset$, then $\rev(\baseb, \Mplus, \Mminus) = \modelsof{\baseb}$.
\end{description}

%The postulates \vacuousexpansion  and \vacuousremoval jointly imply \lethargy. 

\begin{propositionrep}
	If a revision operator %$\rev$ 
	satisfies   \vacuousexpansion and \vacuousremoval then it %$\rev$ 
	satisfies \lethargy.
\end{propositionrep}
\begin{proof}
	Let $\rev$ be a revision function, on a revision-realisable binary class of models $\Cmc$, satisfying both \vacuousexpansion and \vacuousremoval. 
	Let $\baseb$ be a finite base, $(\Mplus, \Mminus)\in \Cmc$ such that  
	$\Mplus \subseteq \modelsof{\baseb}$ and $\Mminus \cap \modelsof{\baseb} = \emptyset$. 
	From \vacuousexpansion, $\modelsof{\revline} \subseteq \modelsof{\baseb}$
	while from \vacuousremoval we have $\modelsof{\baseb} \subseteq \modelsof{\revline}$. 
	Therefore, $\modelsof{\revline} = \modelsof{\baseb}$. 
\end{proof}

These three postulates capture the most fundamental features of revision. 
Yet, such postulates are not enough, as they allow for drastic removal and addition of models.
For example, if some of the models in $\Mplus$ does not satisfy $\baseb$ and  some model in $\Mminus$ violates $\baseb$ then 
%the premisses of
\vacuousexpansion and \vacuousremoval allow the removal of all models of $\baseb$.
%, as long as some model in $\Mminus$ violates $\baseb$. 
%For example, if none of the models in $\Mplus$ satisfy $\baseb$, 
%both \vacuousexpansion and \vacuousremoval allow the removal of all models of $\baseb$, as long as some model in $\Mminus$ violates $\baseb$. 
However, ideally, changes should be minimised. 
% Instead of imposing finite retainment and finite temperance, we shall move to the more challenging cases: 
%simply upper-bound or lower-bound the revision changes the models of $\baseb$. However, they do not limit  do not capture the  notion
%The challenges emerge when one must both incorporate and remove interpretations. 

In the case that the trivial removal of $\Mminus$ and the trivial addition of $\Mplus$ reach a finitely representable set $\mSet$, the revision should correspond to $\mSet$ as finiteness is trivially obtained.  However, if finiteness is not reached in this way, then one must remove and add some extra interpretations in favour of finiteness. 
Such additions and removals must be minimised. 
Such a minimal change principle is conceptualised in the form of the \prudence postulate.  
%in the form of the following postulate: 

\begin{description}
	\item[] \prudence: if $X^+, X^- \in \Cmc$ are disjoint sets and 
	conditions (1) to (3) below are jointly satisfied, then condition (4) is satisfied: 
	\begin{enumerate}
		\item 
		$X^- \subseteq \big(\modelsof{\Bmc} \setminus \modelsof{\rev(\Bmc, \M^+, \mathbb{M}^-}\big)$, and 
		$\M^-\cap \modelsof{\baseb} \subseteq X^-$ 
		
		\item 
		$\M^+ \setminus \modelsof{\baseb} \subseteq X^+ \subseteq 
		\big(\modelsof{\rev(\Bmc, \M^+, \M^-} \big) \setminus \modelsof{\Bmc}$ and 
		
		\item  $\big((\modelsof{\Bmc} \setminus X^-) \cup X^+\big) \neq \modelsof{\rev(\Bmc, \M^+, \M^-}$
		
		\item $ \big((\modelsof{\Bmc} \setminus X^-) \cup X^+\big) \not \in \FRsets(\logsys)$.

	\end{enumerate} 
\end{description}

In \prudence above, the set $X^-$ (condition 1) denotes the extra interpretations to be removed, while $X^+$ (condition 2) denotes the extra interpretations to be added during revision. 
These extra additions and removals can only occur in favour of finiteness, and all of them must be necessary to achieve finiteness. 
Therefore, every smaller combination of removals or additions to form a revision candidate (condition 3) does not reach finiteness (condition 4). 
The postulate \prudence captures \lethargy. % also \prudence

\begin{propositionrep}
	If a revision operator 
	%$\rev$ 
	satisfies \prudence then it 
	%$\rev$ 
	satisfies \lethargy.
\end{propositionrep}
\begin{proof}
	Let us suppose for contradiction purposes that $\rev$ satisfies \prudence but it does not satisfy \lethargy. 
	Thus, for some finite base $\baseb$ and disjoint sets of models $\Mplus$ and $\Mminus$, $\modelsof{\baseb} \neq \modelsof{\revline}$. 
	Let $X^- = \emptyset$ and $X^+ = \emptyset$. It follows from \prudence that 
	$$ (\modelsof{\baseb}\setminus X^-) \cup X^+ \not \in \FRsets(\logsys).$$
	
	As $X^+ = X^- = \emptyset$, we have 
	\begin{align*}
		(\modelsof{\baseb}\setminus X^-) \cup X^+ = \modelsof{\baseb}.
	\end{align*}
	Thus, $\modelsof{\baseb} \not \in \FRsets(\logsys)$ which contradicts the hypothesis that $\baseb$ is a finite base. 
\end{proof}

% There is one more postulate that we require: \emph{\uniformity}. Uniformity, analogous to its form for eviction and reception, will ensure that revision operators are neither syntax sensitive nor model structure sensitive. 
% Its formalisation, however, involves some complex devices, which we will postpone towards the end of this section. 

Defining operators capable of satisfying principles of minimal change has been proved a challenge in the field of \emph{belief change}. 
In some logics, operators satisfying minimal change principles cannot even be defined \citep{flouris2006, RibeiroNW18, ouraaai}. 
This occurs due to the strong semantics and properties of the logics. 
\citet{ouraaai} have shown that in some logics satisfaction systems, eviction and reception do not exist. 
For revision, this would not be different. 
We shall direct the effort of defining revision operators to 
classes of models that are compatible with such rationality postulates. 

\begin{definition}\label{def:revisionmod}
	A binary class 
	%of binary models 
	$\Cmc$ is revision-compatible iff there is a revision operator 
	%$\rev$ 
	on $\Cmc$ satisfying \success, \vacuousexpansion, \vacuousremoval and \prudence.
\end{definition}

% It is worth highlighting that binary classes of models that are revision-compatible are also revision-realisable, since \success is necessary to the existence of revision operators. 

% The main challenge, however, lies in satisfying \prudence, as revision operators satisfying both \vacuousexpansion and \vacuousremoval exist in every revision-realisable binary class of models. 

% \begin{theorem}
	%     If a binary class of models $\Cmc$ is revision-realisable, then there exists one revision function that satisfies both \vacuousexpansion and \vacuousremoval.
	% \end{theorem}
% \begin{proof}
	
	% \end{proof}

% , satisfying such rationality postulates, to suitable 
% our efforsand before we can define revision operators satisfying all such rationality postulates, we shall first frame in which classes of models revision []

% Indeed, for revision, it will not be possible to define revision operators that satisfy all principles of minimal change. 
% In fact, even on revison-realisable classes of model
% Precisely, while all 

Given that we are working on a binary class of models that is revision-compatible, we would like to know how to construct a revision operator satisfying all the rationality postulates presented so far.
We will frame the precise class of operators that satisfy such postulates.  
For this, 
we need to define some auxiliary tools. % before introducing this novel class. 
%
%For sets of interpretations $\Mplus$ and $\Mminus$, 
Let 
$$ \dualcon(\M^+, \M^-) = \{ Y \in \FRsets(\logsys) \mid  \M^+ \subseteq Y \mbox{ and } \M^- \cap Y = \emptyset \},$$
be the set which contains exactly all finite bases satisfied by all models in a given set $\Mplus$ but violated by all models in $\Mminus$. 
One can regard $\dualcon(\Mplus, \Mminus)$ as the set of all potential candidates to revise a base with the pair $(\Mplus, \Mminus)$. 

%The postulates above capture principles of minimal change. However, it is unclear how one can construct revision operators that abide by such principles. 
Not all sets in $\dualcon(\Mplus, \Mminus)$, however, are suitable to revise a given base $\baseb$, as some of them might add or remove more than allowed by \prudence. 
%Intuitively, when revising a base $\baseb$ by a pair of sets $(\Mplus, \Mminus)$, one must minimise the change on $\modelsof{\baseb}$, so that $\Mplus$ is incorporated, $\Mminus$ is relinquished, and a finite base is reached. 
If we could ``measure'' the changes incurred on $\modelsof{\baseb}$ to achieve a finite representable set of models, then we can just select the finite representable sets with the minimal incurred changes. 
The symmetric difference between two sets $A$ and $B$ provide exactly the changes necessary to turn one set into another. 
To turn a set $A$ into a set $B$, we need only to add the elements of $B$ that are not in $A$, and remove the elements in $A$ that are not in $B$. 
We can, therefore, use the symmetric difference to ``measure'' changes between sets of interpretations, and choose those that minimise the changes.  
On each set of interpretations $\mSet$, 
we define the relation $\preceq_{\mSet} \subseteq \powerset(\mUni) \times \powerset(\mUni)$, % between sets of interpretations, 
such that 
$ \mSet_1 \preceq_{\mSet} \mSet_2 \mbox{ iff } (\mSet \oplus \mSet_1) \subseteq (\mSet \oplus \mSet_2).$

Intuitively, $\mSet_1 \leqslant_{\mSet} \mSet_2$ means that turning $\mSet$ into $\mSet_2$ incurs in at least as much change as turning $\mSet$ into $\mSet_1$. 
This means that turning $\mSet$ into $\mSet_1$ is equally cheap or cheaper than turning $\mSet$ to $\mSet_2$. 
We can use this closeness relation to revise a base $\baseb$ by a pair $(\Mplus, \Mminus)$. 
% The strategy consus
% Given a
% To revise a base $\baseb$ with a pair $(\Mplus, \Mminus)$, we cam simple choose 
%
%We define a revision operator upon such a relation. 
%We first give the general intuition of the construction before defining it. 
%The strategy works in the following way.  
%To revise a base  $\baseb$ with  $(\Mplus, \Mminus) \in \Cmc$ be a pair for revision. 

As the revision must minimise the changes, we choose, from $\dualcon(\Mplus, \Mminus)$, one of the closest options to $\modelsof{\baseb}$ that is,  
%we choose one option 
one from  $\min_{\preceq_{\modelsof{\baseb}}}(\dualcon(\Mplus, \Mminus)).$
%
%
% \begin{enumerate}
	%     \item we add $\Mplus$ to the models of $\baseb$ and remove $\Mminus$ from it, that is, we produce the set
	%     $\mSet' = (\modelsof{\baseb} \cup \Mplus) \setminus \Mminus$
	%     \item as the revision must minimise the changes, we choose from $\dualcon(\Mplus, \Mminus)$ the closest options to $\mSet'$ w.r.t., that is,  we select from 
	%     $min_{\preceq_{\mSet'}}(\dualcon(\Mplus, \Mminus))$
	% \end{enumerate}
% A selection function realises the choice. 
% \begin{definition}
	%     A selection function is a map $\sel: \powerset(\powerset(\mUni)) \to \powerset(\mUni)$ such that 
	%     if $X \neq \empty$ then $\sel(X) \neq \emptyset$.
	% \end{definition}
% From a non-empty list of options, a choice function chooses exactly one element of the list. 
To define a revision function using this strategy, we need the condition that for every base $\baseb$ and pair $(\Mplus, \Mminus)$, there is at least one choice on 
$\mmin_{\preceq}(\dualcon(\Mplus, \Mminus))$, that is,  the symmetric difference indeed minimises the distance from the base to the revision candidates. 
This condition follows from  %is satisfied whenever the class $\Cmc$ is 
revision-compatibility.

\begin{toappendix}
	
	\begin{lemma}\label{lem:finretcomp}
		Let $Z, Z'$ be sets of models and $\baseb$ a finite base such that 
		$\modelsof{\baseb} \oplus Z' \subset \modelsof{\baseb} \oplus Z$. All the following hold:
		\begin{enumerate}
			\item If $\Mminus \cap Z' = \emptyset$, then $\Mminus \cap \modelsof{\baseb} \subseteq \modelsof{\baseb}\setminus Z'$;
			
			\item $\modelsof{\baseb} \setminus Z' \subseteq \modelsof{\baseb}\setminus Z$;
			
			\item $Z' \setminus \modelsof{\baseb} \subseteq Z \setminus \modelsof{\baseb}$.
		\end{enumerate}
	\end{lemma}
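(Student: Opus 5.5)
The plan is to unfold the definition of symmetric difference, $\modelsof{\baseb}\oplus Z = (\modelsof{\baseb}\setminus Z)\cup(Z\setminus\modelsof{\baseb})$, and argue elementwise. Only the non-strict inclusion $\modelsof{\baseb}\oplus Z' \subseteq \modelsof{\baseb}\oplus Z$ will be used; strictness plays no role. The key observation is that the two halves of a symmetric difference are separated by membership in $\modelsof{\baseb}$. So an element of $\modelsof{\baseb}\oplus Z$ that lies in $\modelsof{\baseb}$ must belong to $\modelsof{\baseb}\setminus Z$, and one that lies outside $\modelsof{\baseb}$ must belong to $Z\setminus\modelsof{\baseb}$.

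For item 1, which does not even need the inclusion hypothesis, take $M\in\Mminus\cap\modelsof{\baseb}$. Since $\Mminus\cap Z'=\emptyset$, we have $M\notin Z'$. Hence $M\in\modelsof{\baseb}\setminus Z'$.

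For item 2, take $M\in\modelsof{\baseb}\setminus Z'$. Then $M\in\modelsof{\baseb}\oplus Z'$, so by hypothesis $M\in\modelsof{\baseb}\oplus Z$. Because $M\in\modelsof{\baseb}$, it cannot lie in $Z\setminus\modelsof{\baseb}$. Therefore $M\in\modelsof{\baseb}\setminus Z$.

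For item 3, take $M\in Z'\setminus\modelsof{\baseb}$. Again $M\in\modelsof{\baseb}\oplus Z'\subseteq\modelsof{\baseb}\oplus Z$. Since $M\notin\modelsof{\baseb}$, it must lie in $Z\setminus\modelsof{\baseb}$.

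There is no real obstacle here. The only point needing care is the case split by membership in $\modelsof{\baseb}$, which correctly identifies which half of $\modelsof{\baseb}\oplus Z$ the element falls into.
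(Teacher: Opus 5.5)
Your proof is correct and follows essentially the same route as the paper. Item 1 follows directly from $\Mminus \cap Z' = \emptyset$, and items 2 and 3 push an element through $\modelsof{\baseb}\oplus Z' \subseteq \modelsof{\baseb}\oplus Z$, with membership in $\modelsof{\baseb}$ deciding which half of the symmetric difference it lands in (the paper phrases this as a two-case split in which one case is contradictory); like you, the paper uses only the non-strict inclusion.
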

	\begin{proof}
		Let $Z, Z'$ be sets of models and $\baseb$ a finite base such that 
		$\modelsof{\baseb} \oplus Z' \subset \modelsof{\baseb} \oplus Z$. 
		\begin{enumerate}
			\item Let us assume that $\Mminus \cap Z' = \emptyset$. Let $x \in \Mminus \cap \modelsof{\baseb}$. We will show that $x \in \modelsof{\baseb}\setminus Z'$. 
			As $x \in \Mminus \cap \modelsof{\baseb}$, it follows that $x \in \modelsof{\baseb}$ and $x \in \Mminus$. By hypothesis, $\Mminus \cap Z' = \emptyset$, which implies that $x \not \in Z'$. Therefore, $x \in \modelsof{\baseb}$ but $x \not \in Z'$ which implies that $x \in \modelsof{\baseb}\setminus Z'$. 
			
			\item let $x \in \modelsof{\baseb}\setminus Z'$. We will show that $x \in \modelsof{\baseb} \setminus Z$. 
			From the definition of symmetric difference, we have 
			$\modelsof{\baseb} \oplus Z' = (\modelsof{\baseb}\setminus Z') \cup (Z'\setminus \modelsof{\baseb})$ which implies 
			$$\modelsof{\baseb}\setminus Z' \subseteq \modelsof{\baseb} \oplus Z'.
			$$
			Thus, $x \in \modelsof{\baseb} \oplus Z'$. 
			By hypothesis,  $\modelsof{\baseb} \oplus Z' \subseteq \modelsof{\baseb} \oplus Z$ which implies that 
			$x \in \modelsof{\baseb} \oplus Z$. 
			By definition, 
			$\modelsof{\baseb} \oplus Z = (\modelsof{\baseb}\setminus Z) \cup (Z \setminus \modelsof{\baseb})$. 
			Thus, 
			(a) $x \in \modelsof{\baseb}\setminus Z$ or
			(b) $x \in Z \setminus \modelsof{\baseb}$.  Case (a) is trivial, as it is exactly the condition we want to prove. Case (b) leads to a contradiction, and we conclude the proof. 
			For case (b), we have that $x \in Z$ but $X \not \in \modelsof{\baseb}$ which contradicts the hypothesis that $x \in \modelsof{\baseb}\setminus Z'$. 
			
			\item let $x \in Z' \setminus \modelsof{\baseb}$. We will show $x \in  Z \setminus \modelsof{\baseb}$. 
			From the definition of symmetric difference, we have 
			$\modelsof{\baseb} \oplus Z' = (\modelsof{\baseb}\setminus Z') \cup (Z'\setminus \modelsof{\baseb})$ which implies 
			$$Z' \setminus \modelsof{\baseb} \subseteq \modelsof{\baseb} \oplus Z'.
			$$ 
			Thus, as $x \in \modelsof{\baseb} \oplus Z'$. 
			By hypothesis,  $\modelsof{\baseb} \oplus Z' \subseteq \modelsof{\baseb} \oplus Z$ which implies that 
			$x \in \modelsof{\baseb} \oplus Z$. 
			By definition, 
			$\modelsof{\baseb} \oplus Z = (\modelsof{\baseb}\setminus Z) \cup (Z \setminus \modelsof{\baseb})$. 
			Thus, (a) $x \in \modelsof{\baseb}\setminus Z$ or
			(b) $x \in Z \setminus \modelsof{\baseb}$.  
			Case (b) is trivial, as it is exactly the condition we want to prove. 
			Case (a) leads to a contradiction, and we conclude the proof. 
			For case (a), we have that $x \in \modelsof{\baseb}$, which contradicts the hypothesis that $x\in Z'\setminus \modelsof{\baseb}$. 
		\end{enumerate}
	\end{proof}
	
	\begin{lemma}\label{lem:circ_to_sym}
		If a revision operator $\rev$ satisfies \prudence then $$\modelsof{\revline} \in \mmin_{\preceq_{\modelsof{\baseb}}}(\dualcon(\Mplus, \Mminus)).$$
	\end{lemma}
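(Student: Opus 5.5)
The plan is to argue by contradiction: take a strictly closer candidate, build from it a pair $(X^+,X^-)$ meeting conditions (1)--(3) of \prudence, and let condition (4) contradict finite representability. Write $R = \modelsof{\revline}$.

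\emph{Step 1: $R$ is a candidate.} Since $\rev$ returns a finite base, $R \in \FRsets(\logsys)$. Since $\rev$ is a revision operator, it satisfies \success, so $\Mplus \subseteq R$ and $\Mminus \cap R = \emptyset$. Hence $R \in \dualcon(\Mplus,\Mminus)$.

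\emph{Step 2: assume $R$ is not minimal.} Recall $\mmin_{\preceq}(A)$ contains the $x \in A$ such that every $y \in A$ with $y \preceq x$ also satisfies $x \preceq y$. If $R$ is not minimal, there is $Z \in \dualcon(\Mplus,\Mminus)$ with $Z \preceq_{\modelsof{\baseb}} R$ but not $R \preceq_{\modelsof{\baseb}} Z$. Equivalently, $\modelsof{\baseb}\oplus Z \subset \modelsof{\baseb}\oplus R$. This is exactly the hypothesis of \cref{lem:finretcomp}, applied with $Z' := Z$ and with $R$ in the role of $Z$.

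\emph{Step 3: the witnesses.} Set $X^- := \modelsof{\baseb}\setminus Z$ and $X^+ := Z\setminus\modelsof{\baseb}$. They are disjoint, because $X^- \subseteq \modelsof{\baseb}$ and $X^+ \cap \modelsof{\baseb} = \emptyset$. A direct computation gives
\[(\modelsof{\baseb}\setminus X^-)\cup X^+ = (\modelsof{\baseb}\cap Z)\cup (Z\setminus \modelsof{\baseb}) = Z.\]

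\emph{Step 4: conditions (1)--(3) of \prudence.}
\begin{itemize}
\item Condition (1): $X^- \subseteq \modelsof{\baseb}\setminus R$ is item 2 of \cref{lem:finretcomp}. Since $\Mminus \cap Z = \emptyset$ (as $Z \in \dualcon(\Mplus,\Mminus)$), item 1 gives $\Mminus\cap\modelsof{\baseb} \subseteq \modelsof{\baseb}\setminus Z = X^-$.
\item Condition (2): $\Mplus \subseteq Z$ yields $\Mplus\setminus\modelsof{\baseb} \subseteq X^+$. Item 3 gives $X^+ \subseteq R\setminus\modelsof{\baseb}$.
\item Condition (3): $(\modelsof{\baseb}\setminus X^-)\cup X^+ = Z \neq R$, since the inclusion of symmetric differences is strict.
\end{itemize}

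\emph{Step 5: contradiction.} By \prudence, condition (4) holds, i.e.\ $Z \notin \FRsets(\logsys)$. This contradicts $Z \in \dualcon(\Mplus,\Mminus) \subseteq \FRsets(\logsys)$. Therefore $R \in \mmin_{\preceq_{\modelsof{\baseb}}}(\dualcon(\Mplus,\Mminus))$.

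\emph{Main obstacle.} I do not expect a deep difficulty; the risk lies in matching the quantifier structure of \prudence. First, \prudence formally asks for $X^+,X^- \in \Cmc$, which I read as ``sets of models,'' consistent with how the postulate is used. Second, the strictness of $\subset$ in Step 2 must be what delivers condition (3). Third, the roles of $Z$ and $Z'$ must be matched correctly when invoking \cref{lem:finretcomp}.
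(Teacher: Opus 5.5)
Your proposal is correct and follows the paper's proof essentially step for step. Both proofs take a strictly closer candidate $Z$, use the witnesses $X^- = \modelsof{\baseb}\setminus Z$ and $X^+ = Z\setminus\modelsof{\baseb}$, verify conditions (1)--(3) through the three items of \cref{lem:finretcomp}, and reach the same contradiction with $Z \in \FRsets(\logsys)$. You are slightly more careful than the paper in explicitly checking that $X^+$ and $X^-$ are disjoint and that $\modelsof{\revline}$ is finitely representable.
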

	\begin{proof}
		
		For conciseness, let 
		$$Z = \modelsof{\revline}.$$ 
		Let us suppose, for contradiction purposes, that 
		$Z \not \in \mathrm{min}_{\preceq_{\modelsof{\baseb}}}(\chi(\logsys, \Mplus, \Mminus))$. 
		Thus, (i) $Z \not \in \chi(\logsys, \Mplus, \Mminus)$ or (ii) $Z\in \chi(\logsys, \Mplus, \Mminus)$ but it is not minimal modulo $\preceq_{\modelsof{\baseb}}$. The case (i) is ruled out, as by hypothesis $\rev$ satisfies success. So, case (ii) must hold. Thus, 
		there is some $Z' \in \chi(\logsys, \Mplus, \Mminus)$ such that 
		$$\modelsof{\baseb}\oplus Z' \subset \modelsof{\baseb} \oplus Z. $$
		Therefore,
		\begin{align}
			Z \neq Z'. \label{eq:czdifzprime}
		\end{align}
		From the definition of $\chi$, we have that 
		$ \Mplus \subseteq  Z'. $
		Therefore, 
		\begin{align}
			\Mplus \setminus \modelsof{\baseb} \subseteq Z'\setminus \modelsof{\baseb}, \label{eq:csubsmpluszede}   
		\end{align}
		and 
		\begin{align}
			Z' \in \FRsets(\logsys). \label{eq:cZprimefinite}
		\end{align}
		From \cref{lem:finretcomp} (item 3), we have 
		\begin{align}
			Z' \setminus \modelsof{\baseb} \subseteq Z \setminus \modelsof{\baseb}.\label{eq:csubszzedeprime}
		\end{align}
		
		Let $X^+ = Z' \setminus \modelsof{\baseb}$. Replacing $X^+ = Z' \setminus \modelsof{\baseb}$ in 
		\eqref{eq:csubsmpluszede} and \eqref{eq:csubszzedeprime} yields
		\begin{align}
			\Mplus \setminus \modelsof{\baseb} \subseteq X^+ \subseteq Z \setminus \modelsof{\baseb}.\label{eq:ccond2}
		\end{align}
		
		From the definition of $\chi$, 
		$$ \Mminus \cap Z' = \emptyset.$$
		which from \cref{lem:finretcomp} (item 1) implies that 
		\begin{align}
			\Mminus\cap \modelsof{\baseb} \subseteq \modelsof{\baseb} \setminus Z'. \label{eq:cpat1c1}
		\end{align}
		From \cref{lem:finretcomp} (item 2), we get 
		\begin{align}
			\modelsof{\baseb}\setminus Z' \subseteq \modelsof{\baseb} \setminus Z.\label{eq:cpat2c1}
		\end{align}
		Let $X^- = \modelsof{\baseb}\setminus Z'$. Replacing it at \eqref{eq:cpat1c1} and \eqref{eq:cpat2c1} yields 
		\begin{align}
			\Mminus\cap \modelsof{\baseb} \subseteq X^- \subseteq \modelsof{\baseb}\setminus Z. \label{eq:ccond1}
		\end{align}
		
		For every pair of sets $A$ and $B$, it holds that $A \setminus (A\setminus B) = A \cap B$, and $A = (A \cap B) \cup (A\setminus B)$. 
		
		Note that 
		\begin{align}
			(\modelsof{\baseb} \setminus X^-) \cup X^+ &= \modelsof{\baseb}\setminus (\modelsof{\baseb}\setminus Z') \cup (Z'\setminus \modelsof{\baseb}) \nonumber\\
			&= (\modelsof{\baseb} \cap Z') \cup (Z'\setminus \modelsof{\baseb}) \nonumber \\
			&= Z'. \label{eq:czprimeineq}
		\end{align}
		From \eqref{eq:czdifzprime} and \eqref{eq:czprimeineq}, we get 
		\begin{align}
			(\modelsof{\baseb} \setminus X^-) \cup X^+ \neq Z. \label{eq:ccond3}
		\end{align}
		
		Due to \eqref{eq:ccond1}, \eqref{eq:ccond2} and \eqref{eq:ccond3}, it follows from \prudence that 
		$(\modelsof{\baseb} \setminus X^-) \cup X^+ \not \in \FRsets(\logsys)$. 
		Thus,  from \eqref{eq:czprimeineq}, we have that  
		$ Z' \not \in \FRsets(\logsys) $, which contradicts \eqref{eq:cZprimefinite}.
		
	\end{proof}
	
\end{toappendix}

\begin{theoremrep}\label{th:recompatb_sym}
	If $\Cmc$ is revision-compatible, then for all finite base $\baseb$ and $(\Mplus, \Mminus) \in \Cmc$, $\mmin_{\preceq_{\modelsof{\baseb}}}(\dualcon(\Mplus, \Mminus)) \neq \emptyset.$
\end{theoremrep}
\begin{proof}
	Let $\Cmc$ be revision-compatible, 
	$\baseb$ a finite base, and $(\Mplus, \Mminus) \in \Cmc$. 
	As $\Cmc$ is revision-compatible, it follows that there is some revision operator on $\Cmc$ satisfying \prudence, which implies from \cref{lem:circ_to_sym} that 
	$$\modelsof{\revline} \in \mmin_{\preceq_{\modelsof{\baseb}}}(\dualcon(\Mplus, \Mminus)).$$ 
	Thus, $\mmin_{\preceq_{\modelsof{\baseb}}}(\dualcon(\Mplus, \Mminus)) \neq \emptyset$.
\end{proof}

%According to \cref{th:recompatb_sym}, a revision operator, on a class $\Cmc$, satisfying all the rationality postulates exists exactly when the symmetric difference minimises the changes. 
%So, we can apply the strategy using the 
We get revision operators from symmetric difference. % to construct revision operators. % in these classes. 
%Intuitively, this strategy should work, because all options within $\dualcon(\Mplus, \Mminus)$ contain $\Mplus$ and have no model from $\Mminus$, while minimality is guaranteed from the relation $\preceq_{\modelsof{\baseb}}$. 
%At first glance, it would seem that the strategy above would be enough to satisfy all the postulates. 
% \todo[inline]{move the followog explanation to after the deinition}
% However, the choices must be constrained a bit further. We first formalize this na\"{i}ve strategy and identify its weakness. 
% From this, we show how to strengthen the strategy so that all postulates are satisfied. 

\begin{definition}
	A \naive relational revision operator, on a binary class of models $\Cmc$, is a map $\rev^{\oplus}: \finitepwset(\llang) \times \Cmc \to \finitepwset(\llang)$ s.t. 
	$ \modelsof{\rev^{\oplus}_{\sel}(\Mplus, \Mminus)} \in \mathsf{min}_{\preceq_{\modelsof{\baseb}}}(\dualcon(\Mplus, \Mminus)).$
\end{definition}

The \naive operator minimises the distance between the models of a base $\baseb'$ to all bases satisfied by $\Mplus$ and violated by all models in $\Mminus$. 
%The operator chooses one of the closest bases. 
Indeed, the \naive revision operators is strongly connected with \prudence. 

\begin{toappendix}
	\begin{lemma}\label{lem:lethargy_special}
		If $Z \in \min_{\preceq_{\modelsof{\baseb}}}(\dualcon( \Mplus, \Mminus))$, and 
		\begin{enumerate}
			\item $\M^-\cap \modelsof{\baseb} \subseteq X^- \subseteq \big(\modelsof{\Bmc} \setminus Z\big)$ and 
			
			\item 
			$\M^+ \setminus \modelsof{\baseb} \subseteq X^+ \subseteq 
			Z \setminus \modelsof{\Bmc}$ and 
			
			\item  $\big((\modelsof{\Bmc} \setminus X^-) \cup X^+\big) \neq Z;$
			
		\end{enumerate}
		then 
		$((\modelsof{\baseb}\setminus X^-) \cup X^+) \not \in \FRsets(\logsys)$.
	\end{lemma}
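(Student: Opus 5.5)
We argue by contradiction. Write $W = (\modelsof{\baseb}\setminus X^-) \cup X^+$ and suppose $W \in \FRsets(\logsys)$. We will show that $W$ is an element of $\dualcon(\Mplus,\Mminus)$ strictly below $Z$ with respect to $\preceq_{\modelsof{\baseb}}$. This contradicts the minimality of $Z$.

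\textbf{Step 1: $W \in \dualcon(\Mplus,\Mminus)$.} Let $x \in \Mplus$. If $x \notin \modelsof{\baseb}$, then condition 2 gives $x \in X^+ \subseteq W$. If $x \in \modelsof{\baseb}$, then $x \in Z$, because $Z \in \dualcon(\Mplus,\Mminus)$. Since $X^- \subseteq \modelsof{\baseb}\setminus Z$, we get $x \notin X^-$, and hence $x \in W$. Now let $x \in \Mminus$. If $x \in \modelsof{\baseb}$, then condition 1 puts $x$ in $X^-$, so $x \notin \modelsof{\baseb}\setminus X^-$. Moreover $x \notin X^+$, because $X^+ \subseteq Z$ and $Z \cap \Mminus = \emptyset$. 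If $x \notin \modelsof{\baseb}$, then $x$ could only enter $W$ through $X^+ \subseteq Z$, which is again impossible. Hence $\Mminus \cap W = \emptyset$.

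\textbf{Step 2: $W \preceq_{\modelsof{\baseb}} Z$ strictly.} Since $X^- \subseteq \modelsof{\baseb}$ and $X^+ \cap \modelsof{\baseb} = \emptyset$, we have $\modelsof{\baseb}\setminus W = X^-$ and $W\setminus\modelsof{\baseb} = X^+$. Therefore $\modelsof{\baseb}\oplus W = X^- \cup X^+$. Conditions 1 and 2 then give $X^- \cup X^+ \subseteq (\modelsof{\baseb}\setminus Z)\cup(Z\setminus\modelsof{\baseb}) = \modelsof{\baseb}\oplus Z$.

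For strictness, note that $W \neq Z$ by condition 3. A set $Y$ is determined by $\modelsof{\baseb}\oplus Y$ via $Y = \modelsof{\baseb}\oplus(\modelsof{\baseb}\oplus Y)$. So the two symmetric differences must differ, and the inclusion is proper.

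\textbf{Conclusion.} We now have $W \in \dualcon(\Mplus,\Mminus)$ and $W \preceq Z$. Because $Z$ is minimal, the definition of $\mmin$ gives $Z \preceq W$, that is, equal symmetric differences. This contradicts the proper inclusion from Step 2. Hence $W \notin \FRsets(\logsys)$.

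The only delicate point is Step 1, specifically the treatment of models lying in both $\Mplus$ and $\modelsof{\baseb}$. These are not protected by $X^+$, so we must use $X^- \subseteq \modelsof{\baseb}\setminus Z$ together with $\Mplus \subseteq Z$ to show they survive in $W$. The rest is routine set algebra.
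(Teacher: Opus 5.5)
Your proof is correct and follows the same route as the paper's. Both assume $W=(\modelsof{\baseb}\setminus X^-)\cup X^+$ is finitely representable, compute $\modelsof{\baseb}\oplus W = X^-\cup X^+$, use condition 3 to get a proper inclusion in $\modelsof{\baseb}\oplus Z$, and contradict the minimality of $Z$. Your Step 1, which checks $\Mplus\subseteq W$ and $\Mminus\cap W=\emptyset$ so that $W\in\dualcon(\Mplus,\Mminus)$, is needed for that contradiction and is skipped in the paper's proof; your injectivity argument for strictness also makes precise the paper's unexplained claim that $X^-\subset Z^-$ or $X^+\subset Z^+$.
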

	\begin{proof}
		Let us suppose for contradiction 
		$Z \in \min_{\preceq_{\modelsof{\baseb}}}(\dualcon( \Mplus, \Mminus))$,
		conditions (1) to (3) and the statement are satisfied but $((\modelsof{\baseb}\setminus X^-) \cup X^+)  \in \FRsets(\logsys)$. 
		For clarity, let 
		\begin{align*}
			Z^- &= \modelsof{\baseb} \setminus Z\\
			Z^+ &= Z\setminus \modelsof{\baseb}\\
			Y &= \big(\modelsof{\baseb} \setminus X^-\big) \cup X^+
		\end{align*}
		
		Note that $Y \in \FRsets(\logsys)$        
		\begin{align}
			\modelsof{\baseb} \setminus Y &= \modelsof{\baseb} \setminus \Big( ( \modelsof{\baseb} \setminus X^-) \cup X^+\Big) \nonumber\\
			&= \big(\modelsof{\baseb} \setminus ( \modelsof{\baseb} \setminus X^-) \big) \setminus X^+ \label{eq:doubleminus}
		\end{align}
		By hypothesis, 
		$ X^- \subseteq \modelsof{\baseb}$, which implies from \eqref{eq:doubleminus} above that 
		\begin{align*}
			\modelsof{\baseb} \setminus Y &= X^- \setminus X^+
		\end{align*}
		By hypothesis, $X^- \cap X^+ = \emptyset$. Therefore, 
		$$ \modelsof{\baseb} \setminus Y = X^-$$
		
		\begin{align*}
			Y \setminus \modelsof{\baseb} &= \Big( \big(\modelsof{\baseb} \setminus X^-\big) \cup X^+\Big) \setminus \modelsof{\baseb} \\
			&= \Big((\modelsof{\baseb} \setminus X^-) \setminus \modelsof{\baseb}\Big) \cup \big( X^+ \setminus \modelsof{\baseb}\big)\\
			&= \emptyset \cup (X^+ \setminus \modelsof{\baseb})
		\end{align*}
		Thus, as $X^+ \cap \modelsof{\baseb} = \emptyset$, we get 
		$$ Y \setminus \modelsof{\baseb} = X^+.$$
		Therefore, 
		$\modelsof{\baseb} \oplus Y = X^- \cup X^+$.
		
		Analogously, we show that 
		$Z \oplus \baseb = Z^+ \cup Z^-$. 
		From hypothesis, 
		$X^- \subseteq Z^-$ and $X^+ \subseteq Z^+$. 
		Moreover, by hypothesis, either $X^- \subset Z^- $ or $X^+ \subset Z^+$. In either case, we get that 
		$X^- \cup X^+ \subset Z^- \cup Z^ +$. 
		Thus, 
		$\modelsof{\baseb} \oplus Y \subset \modelsof{\baseb} \oplus Z$. 
		
		Therefore, $Z \not \in \min_{\preceq_{\modelsof{\baseb}}}(\dualcon( \Mplus, \Mminus))$ which contradicts the hypothesis that $Z  \in \min_{\preceq_{\modelsof{\baseb}}}(\dualcon( \Mplus, \Mminus))$
	\end{proof}
	
\end{toappendix}

\begin{theoremrep}\label{th:naiverep}
	A revision operator satisfies \prudence iff it is a \naive relational revision operator.
\end{theoremrep}
\begin{proof}
	Let $\Cmc$ be a revision-compatible binary class of models. 
	%be a congruous model-revision function on a satisfaction system 
	%\( \logsys = (\llang, \mUni, \models)\) be a satisfaction system. 
	\begin{description}
		\item[``$\Leftarrow$''] 
		Let $\rev$ be a \naive relational revision operator, and  $(\Mplus, \Mminus) \in \Cmc$. 
		\begin{itemize}
			\item \textbf{success}. By definition, for all $Y \in \dualcon(\M^+, \M^-)$, $M^+ \subseteq Y$ and $M^- \cap Y = \emptyset$. 
			Moreover, by definition, $\modelsof{\rev(\baseb, \M^+, \M^-)} \in \dualcon(\M^+, \M^-)$. 
			Therefore, $M^+ \subseteq Y$ and $M^- \cap Y = \emptyset$. 
			\item \prudence. It follows directly from \cref{lem:lethargy_special}. 
		\end{itemize}
		\item[``$\Rightarrow$''] Let us suppose that $\rev$ satisfies both \textbf{success} and \prudence. 
		From \cref{lem:circ_to_sym}, $\modelsof{\revline}\in   \mathrm{min}_{\preceq_{\modelsof{\Bmc}}}(\dualcon(\M^+, \M^-) )  $.
	\end{description}
\end{proof}

Although \naive revision operators capture \prudence, they are too weak to satisfy \emph{vacuous-removal} and \emph{vacuous-expansion}, as 
\cref{ex:noncircums} illustrates. % a \naive operator that violates \emph{vacuous-removal}. 

\begin{example}\label{ex:noncircums}
	%\todo{precisa colocar assinatura?} Nao precisa aqui porque esse ex vale p qq assinatura 
	Let $\baseb =  (B \sqcap C)$ be an $\EL$ concept   and consider the     interpretations $\Imc_i =(\Delta^{\Imc_i},\cdot^{\Imc_i})$, with $i\in \{1,2,3,4\}$,   where $\Delta^{\Imc_i}=\{d\}$, and each $\cdot^{\Imc_i}$ is as follows.
	\begin{align*}
		\begin{aligned}[b]
			A^{\Imc_1} &= \{d\} \\
			B^{\Imc_1} &= \emptyset \\
			C^{\Imc_1} &= \{d\}
		\end{aligned}\hspace*{2ex}\vrule\hspace*{2ex}
		\begin{aligned}[b]
			A^{\Imc_2} = \emptyset \\
			B^{\Imc_2} = \emptyset \\ 
			C^{\Imc_2} = \emptyset 
		\end{aligned}\hspace*{2ex}\vrule\hspace*{2ex}
		\begin{aligned}[b]
			A^{\Imc_3} &= \emptyset \\
			B^{\Imc_3} &= \emptyset \\
			C^{\Imc_3} &= \{d\}
		\end{aligned}\hspace*{2ex}\vrule\hspace*{2ex}
		\begin{aligned}[b]
			A^{\Imc_4} &= \emptyset \\
			B^{\Imc_4} &= \{d\}\\
			C^{\Imc_4} &= \{d\}
		\end{aligned}
	\end{align*}
	By definition, $(\Imc_4,d)$ is a model of $\baseb=(B\sqcap C)$. 
	Assume we want to revise $\baseb$ with $(\{(\Imc_1,d)\}, \{(\Imc_2,d)\})$, that is, accommodate $(\Imc_1,d)$ while relinquishing $(\Imc_2,d)$. 
	Since $d\not\in (B\sqcap C)^{\Imc_2}$,  according to \emph{vacuous-removal}, we should only add models to $\baseb$.  
	Adding only $(\Imc_1,d)$ is not possible, because $(\Imc_4,d)$ is a model of $\baseb$ and every \EL concept satisfied by both $(\Imc_4,d)$ and $(\Imc_1,d)$ is also satisfied by $(\Imc_3,d)$. 
	So, every revision 
	%(to an \EL concept)
	%\todo{added}
	satisfying \emph{vacuous-removal} contains the set $\{(\Imc_1,d), (\Imc_3,d), (\Imc_4,d)\}$. This corresponds to revising $\Bmc=(B\sqcap C)$ to $C$.  
	%On the other hand, 
	If we want to avoid adding $(\Imc_3,d)$, we could remove $(\Imc_4,d)$, violating \emph{vacuous-removal}, and staying only with $(\Imc_1,d)$. 
	%  The interpretation $(\Imc_1,d)$ is finitely representable, as it yields the finite base 
	%$\baseb' = (A\sqcap C)$. 
	%
	%In fact, $\modelsof{\baseb'} \in \min_{\preceq_{\modelsof{\baseb}}}(\dualcon(\Imc_1, \Imc_2))$. 
	%To see this, note that 
	% The symmetric difference between $\modelsof{\baseb}$ and $\{(\Imc_1,d)\}$ is the set 
	%$\mSet' = \{(\Imc_4,d), (\Imc_1,d)\}$. 
	%The model $(\Imc_1,d)$ is in the symmetric difference between $\modelsof{\baseb}$ and each set that contains $(\Imc_1,d)$. 
	%This happens because $(\Imc_1,d) \not \in \modelsof{\baseb}$. 
	%In fact, $\mSet'$ is minimal w.r.t. $\preceq_{\modelsof{\baseb}}$. 
	%As $(\Imc_1,d)$ must be incorporated, it would be in the symmetric difference between $\modelsof{\baseb}$ and each potential revision candidate. 
	%So, the only candidate that would violate minimality of $\mSet'$  would be a finitely representable set $X$ whose difference symmetric with $\modelsof{\baseb}$ yields $\{(\Imc_1,d)\}$. 
	%This means that $\Imc_4 \not \in X$. 
	%So, both $\Imc_1$ and $\Imc_4$ are in $X$ which, as argued above, would imply in $\Imc_3 \in X$. This, in turn, would imply that $\Imc_3$ is in the symmetric difference of $X$ with $\modelsof{\baseb}$, which contradicts the hypothesis that the symmetric difference is $\{\Imc_1\}$. 
	So, a na\"{i}ve revision operator can output $(A\sqcap C)$ as a solution for revising $(B\sqcap C)$.  %\todo{removed repetition}
	%However, it violates \emph{vacuous removal}.
\end{example}

In \cref{ex:noncircums}, \emph{vacuous-removal} is violated, as the operator removes further models from $\baseb$ when $\Mminus$ does not violate $\baseb$. 
Also, when $\Mplus$
%$\Mplus$
%\todo{check}
satisfies $\baseb$, the na\"{i}ve operator allows adding further models. 
%The main reason is that the na\"{i}ve operator simply minimises the distance between the base and the possible solutions that contain $\Mplus$ and do not contain $\Mminus$. 
%Even if it means removing interpretations from $\baseb$, when $\Mminus$ does not violate the base, or conversely, when $\Mplus$ satisfies $\baseb$. 
%To control these cases, 
We strengthen the 
na\"{i}ve operator. % with some conditions. 

\begin{definition}\label{def:symdefrev} 
We define a symmetric-differential revision function on a binary class $\Cmc$ of models %founded on a selection function $\sel$, 
as a function 
$\rev: \finitepwset(\llang) \times \Cmc \to \finitepwset(\llang)$, such that for all $(\Mplus, \Mminus) \in \Cmc$, 
$\modelsof{\rev(\Bmc,\mathbb{M}^+, \mathbb{M}^-)} = \mSet$ where,
\begin{enumerate}[label=(\roman*),leftmargin=*]
\item if $\Mplus \subseteq \modelsof{\baseb}$, then 
$$\mSet \in  \mathrm{min}_{\preceq_{\modelsof{\Bmc}}}(\dualcon(\M^+ , \M^- \cup (\mUni\setminus \modelsof{\baseb})) ) $$

\item if $\Mplus \not \subseteq \modelsof{\baseb}$, but $\Mminus \cap \modelsof{\baseb} = \emptyset$, then 
$$ \mSet \in  \mmin_{\preceq_{\modelsof{\Bmc}}}(\dualcon(\M^+ \cup \modelsof{\baseb}, \M^-)) $$

\item otherwise, 
$\mSet \in  \mmin_{\preceq_{\modelsof{\Bmc}}}(\dualcon(\M^+, \M^-) ) $.
\end{enumerate}

\end{definition}

% We explain conditions (i) to (iii). 
% %To capture \emph{vacuos-removal} and \emph{vacuous-expansion}, a symmetric-differential operator  breaks into three cases: 
% \begin{enumerate}[label=(\roman*),leftmargin=*]
%     \item 
%     when $\Mplus$ satisfies the base,  according to \emph{vacuous-expansion}, no interpretation can be incorporated. 
%     This corresponds to enforcing all counter models of $\baseb$ to be removed jointly with $\Mminus$. 
%     %Therefore, this case reduces to a revision by     $(\Mplus, \Mminus \cup (\mUni \setminus \modelsof{\baseb}))$.
%     %Keeping $\Mplus$ is necessary to guarantee that the revision will not eliminate models from $\Mplus$ in the process. 

%     \item analogous to case (i), whenever all models of $\Mminus$ violate $\baseb$, as per  \emph{vacuous-removal}, no model of $\baseb$ should be removed. 
%     This corresponds to enforcing all models of $\modelsof{\baseb}$ to be incorporated jointly with $\Mplus$. 
%     %Hence, this case reduces to a revision with $(\Mplus \cup \modelsof{\baseb}, \Mminus)$. Keeping $\Mminus$ is necessary to avoid models of $\Mminus$ to be incorporated. 

%     \item cases (i) and (ii) gives enough protection to \emph{vacuous-removal}  and \emph{vacuous-expansion}. Hence, in any other case, we can just perform a \naive revision. 

% \end{enumerate}

For case (i), 
when $\Mplus$ satisfies the base,  according to \emph{vacuous-expansion}, no interpretation can be incorporated. 
This corresponds to enforcing all counter models of $\baseb$ to be removed jointly with $\Mminus$.  

For case (ii), analogous to case (i), whenever all models of $\Mminus$ violate $\baseb$, as per  \emph{vacuous-removal}, no model of $\baseb$ should be removed. 
This corresponds to enforcing all models of $\modelsof{\baseb}$ to be incorporated jointly with $\Mplus$.
As for case (iii), the cases (i) and (ii) give enough protection to \emph{vacuous-removal}  and \emph{vacuous-expansion}. Hence, in any other case, we can just perform a \naive revision. 
The symmetric-differential revision operators are characterised by the rationality postulates of revision. 

\begin{toappendix}

\begin{proposition}\label{prop:symrational}
Every symmetric-differential revision operator satisfies 
\emph{\success, \vacuousexpansion, \vacuousremoval} and \emph{\prudence}.
\end{proposition}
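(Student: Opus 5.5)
The plan is to check the four postulates one at a time, working directly from \cref{def:symdefrev}. Throughout, write $Z=\modelsof{\revline}$. We assume the operator is well defined, i.e.\ the relevant $\mmin$-sets are non-empty, since \cref{def:symdefrev} presupposes this.

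\textbf{Success.} In each of the cases (i)--(iii), $Z$ lies in a set of the form $\dualcon(P,N)$ with $\Mplus\subseteq P$ and $\Mminus\subseteq N$. Hence $\Mplus\subseteq Z$ and $\Mminus\cap Z=\emptyset$, which is \success.

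\textbf{Vacuous-expansion.} If $\Mplus\subseteq\modelsof{\baseb}$, then case (i) applies. There $Z$ is disjoint from $\mUni\setminus\modelsof{\baseb}$, so $Z\subseteq\modelsof{\baseb}$.

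\textbf{Vacuous-removal.} Assume $\Mminus\cap\modelsof{\baseb}=\emptyset$. If case (ii) applies, then $\modelsof{\baseb}\subseteq Z$ directly. Otherwise case (i) applies, because case (iii) is excluded by the premise. In case (i), observe that $\modelsof{\baseb}$ itself belongs to $\dualcon(\Mplus,\Mminus\cup(\mUni\setminus\modelsof{\baseb}))$:
\begin{itemize}
\item it is finitely representable because $\baseb$ is finite;
\item it contains $\Mplus$;
\item it avoids $\Mminus$ by the premise, and it trivially avoids its own complement.
\end{itemize}
Its symmetric difference with $\modelsof{\baseb}$ is $\emptyset$. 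Any other candidate has non-empty symmetric difference and is therefore strictly $\preceq$-dominated by $\modelsof{\baseb}$. So the only minimal element is $\modelsof{\baseb}$, which gives $Z=\modelsof{\baseb}$.

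\textbf{Circumspection.} This is the step needing real work. In case (iii), $Z$ is minimal in $\dualcon(\Mplus,\Mminus)$, so \cref{lem:lethargy_special} applies verbatim. In cases (i) and (ii), $Z$ is only minimal in a smaller family $\dualcon'$ (the restricted $\dualcon$ of that case), so I would rerun the argument of \cref{lem:lethargy_special} relative to $\dualcon'$.

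Take disjoint $X^+,X^-$ satisfying conditions (1)--(3) of \prudence. Suppose, for contradiction, that $Y=(\modelsof{\baseb}\setminus X^-)\cup X^+$ is finitely representable. As in that proof, $\modelsof{\baseb}\oplus Y=X^-\cup X^+$, and conditions (1)--(3) give $\modelsof{\baseb}\oplus Y\subset\modelsof{\baseb}\oplus Z$. It then remains to show $Y\in\dualcon'$, which contradicts the minimality of $Z$.

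First, $\Mplus\subseteq Y$. The part $\Mplus\setminus\modelsof{\baseb}$ is contained in $X^+$. The part $\Mplus\cap\modelsof{\baseb}$ is not removed, because $X^-\subseteq\modelsof{\baseb}\setminus Z$ while $\Mplus\subseteq Z$.

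Second, $\Mminus\cap Y=\emptyset$. We have $\Mminus\cap\modelsof{\baseb}\subseteq X^-$, and $\Mminus\cap X^+\subseteq\Mminus\cap Z=\emptyset$.

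Third, the case-specific extra constraint holds:
\begin{itemize}
\item In case (i), $Z\subseteq\modelsof{\baseb}$, so $X^+\subseteq Z\setminus\modelsof{\baseb}=\emptyset$. Hence $Y\subseteq\modelsof{\baseb}$, i.e.\ $Y$ avoids $\mUni\setminus\modelsof{\baseb}$.
\item In case (ii), $\modelsof{\baseb}\subseteq Z$, so $X^-\subseteq\modelsof{\baseb}\setminus Z=\emptyset$. Hence $\modelsof{\baseb}\subseteq Y$, so $Y$ contains $\Mplus\cup\modelsof{\baseb}$.
\end{itemize}
Therefore $Y\in\dualcon'$, and \prudence follows. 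I expect this last adaptation to be the only delicate point: one must verify that the degenerate sets $X^+=\emptyset$ in case (i) and $X^-=\emptyset$ in case (ii) keep $Y$ inside the restricted candidate family.
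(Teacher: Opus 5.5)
Your proof is correct. For \success, \vacuousexpansion{} and \vacuousremoval{} it coincides with the paper's. In case (i) of \vacuousremoval{} you state the key fact correctly: $\modelsof{\baseb}$ itself lies in $\dualcon(\Mplus,\Mminus\cup(\mUni\setminus\modelsof{\baseb}))$. The paper's text instead misstates this as $\mUni\setminus\modelsof{\baseb}=\emptyset$.

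The route differs for \prudence. The paper does not split into cases. It asserts that a symmetric-differential operator is a \naive{} relational one, so that $Z$ is $\preceq_{\modelsof{\baseb}}$-minimal in the unrestricted family $\dualcon(\Mplus,\Mminus)$, and then repeats the argument of \cref{lem:lethargy_special}. That assertion is true but is not justified there. It needs an observation of the kind you make: in case (i), any $Z'\in\dualcon(\Mplus,\Mminus)$ with $\modelsof{\baseb}\oplus Z'\subset\modelsof{\baseb}\setminus Z$ must satisfy $Z'\subseteq\modelsof{\baseb}$, so it lies in the restricted family. Dually, $Z'\supseteq\modelsof{\baseb}$ in case (ii).

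You instead keep minimality relative to the restricted family of each case and show that the competitor $Y$ belongs to it, using $X^+=\emptyset$ in case (i) and $X^-=\emptyset$ in case (ii). You also check explicitly that $\Mplus\subseteq Y$ and $\Mminus\cap Y=\emptyset$. Both the paper's proof and \cref{lem:lethargy_special} omit this step, even though it is needed to conclude that $Y$ is a genuine candidate.

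So the paper's route is shorter, since it funnels everything through one lemma, but it relies on an unproved reduction. Yours is self-contained and slightly more careful.
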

\begin{proof}
Let $\rev$ be a symmetric-differential revision function on a binary class $\Cmc$ of models, and    $(\Mplus, \Mminus) \in \Cmc$. % be disjoint sets of models in $\mUni$. 
\begin{itemize}
	\item \textbf{success}. By definition, for all $Y \in \chi(\M^+, \M^-)$, $M^+ \subseteq Y$ and $M^- \cap Y = \emptyset$. 
	Moreover, by definition, $\modelsof{\rev(\baseb, \M^+, \M^-)} \in \chi(\M^+, \M^-)$. 
	Therefore, $M^+ \subseteq Y$ and $M^- \cap Y = \emptyset$. 
	
	\item \textbf{vacuous-expansion}. Let $\M^+ \subseteq \modelsof{\baseb}$, and $\mSet \in \modelsof{\revline}$. 
	We will show that $\mSet \in \modelsof{\baseb}$.  
	So, from definition of $\rev$, we have 
	$\modelsof{\revline} \in \mathrm{min}_{\preceq_{\modelsof{\Bmc}}}(\chi(\logsys,\M^+ , \M^- \cup (\mUni\setminus \modelsof{\baseb}) ))$. 
	So, none of the models in $$\mathrm{min}_{\preceq_{\modelsof{\Bmc}}}(\chi(\logsys,\M^+ , \M^- \cup (\mUni\setminus \modelsof{\baseb}) ))$$ are in $\Mminus \cup (\mUni\setminus \modelsof{\baseb})$. Therefore, $\mSet  \in \modelsof{\baseb}$. 
	
	\item \textbf{vacuous removal:} Let $\mathbb{M}^- \cap \mod(\Bmc) = \emptyset$. 
	Either 
	(a) $\Mplus \not \subseteq \modelsof{\baseb}$ or  
	(b)$\Mplus \subseteq \modelsof{\baseb}$ 
	
	\begin{itemize}
		\item (a) $\Mplus \not \subseteq \modelsof{\baseb}$. 
		Thus, from the definition of $\rev$, we have that 
		$$\modelsof{\revline} = \mathrm{sel}\big( \mathrm{min}_{\preceq_{\modelsof{\Bmc}}}(\chi(\logsys,\M^+ \cup \modelsof{\baseb}, \M^-) ) \big),  $$
		which means that
		$$ \modelsof{\revline} \in \chi(\logsys,\M^+ \cup \modelsof{\baseb}, \M^-).$$
		
		By the definition, of $\chi$, it follows that for every $Y \in \chi(\logsys, X, X')$,  $X \subseteq Y$. 
		Therefore, as $\modelsof{\revline} \in \chi(\logsys,\M^+ \cup \modelsof{\baseb}, \M^-)$, we get 
		$$ \Mplus \cup \modelsof{\baseb} \subseteq \modelsof{\revline},$$
		
		which implies that 
		$\modelsof{\baseb} \subseteq \revline$.

		\item (b) $\Mplus \subseteq \modelsof{\baseb}$. 
		Thus, from the definition of $\rev$, we have that 
		$$ \modelsof{\revline} = \mathrm{sel}\big( \mathrm{min}_{\preceq_{\modelsof{\Bmc}}}(\chi(\logsys,\M^+ , \M^- \cup (\mUni\setminus \modelsof{\baseb})) ) \big). $$
		By hypothesis, $\Mminus \cap \modelsof{\baseb}= \emptyset$. Thus, as $\mUni\setminus \modelsof{\baseb} = \emptyset$ we have that 
		$\Mminus \cup (\mUni\setminus \modelsof{\baseb}) = \emptyset$. This jointly with the hypothesis that $\Mplus \subseteq \modelsof{\baseb}$ implies that 
		$\modelsof{\baseb} \in \chi(\logsys, \Mplus, \Mminus \cup (\mUni\setminus \modelsof{\baseb}))$. 
		Therefore, as $\modelsof{\baseb} \oplus \modelsof{\baseb} = \emptyset$, we get that $$\mathrm{min}_{\leqslant_{\modelsof{\baseb}}}(\chi(\logsys, \Mplus, \Mminus \cup (\mUni\setminus \modelsof{\baseb})) = \{\modelsof{\baseb}\}.$$
		Therefore, $\modelsof{\revline} = \modelsof{\baseb}$ which implies that $\modelsof{\baseb} \subseteq \modelsof{\revline}$. 
		
	\end{itemize}
	
	\item \prudence. 
	Let us suppose, for contradiction, that there is some \naive revision operation $\rev$ that does not satisfy \prudence. 
	Thus, there is a pair of sets $(\Mplus, \Mminus)$ and 
	$X^-$ and $X^+$ satisfying Points~1-3 of \prudence but violating point 4, that is,  
	\begin{align}
		\big((\modelsof{\Bmc} \setminus X^-) \cup X^+\big) \in \FRsets(\logsys)\label{eq:FrRestwo} 
	\end{align}

	As $\rev$ is a \naive revision function, we have that 
	\begin{equation}        
		\modelsof{\revline} \in \mmin_{\preceq_{\modelsof{\baseb}}}(\dualcon(\Mplus, \Mminus)).\label{eq:revisRelationaltwo}    
	\end{equation}     
	For clarity, let 
	\begin{align*}
		Z^- &= \modelsof{\baseb} \setminus \modelsof{\revline}\\
		Z^+ &= \modelsof{\revline}\setminus \modelsof{\baseb}\\
		Y &= \big(\modelsof{\baseb} \setminus X^-\big) \cup X^+
	\end{align*}

	Thus, from \eqref{eq:FrRestwo}, $Y \in \FRsets(\logsys)$. 
	\begin{align}
		\modelsof{\baseb} \setminus Y &= \modelsof{\baseb} \setminus \Big( ( \modelsof{\baseb} \setminus X^-) \cup X^+\Big) \nonumber\\
		&= \big(\modelsof{\baseb} \setminus ( \modelsof{\baseb} \setminus X^-) \big) \setminus X^+ \label{eq:doubleminustwo}
	\end{align}
	By hypothesis, 
	$ X^- \subseteq \modelsof{\baseb}$, which implies from \eqref{eq:doubleminustwo} above that 
	\begin{align*}
		\modelsof{\baseb} \setminus Y &= X^- \setminus X^+
	\end{align*}
	By hypothesis, $X^- \cap X^+ = \emptyset$. Therefore, 
	$$ \modelsof{\baseb} \setminus Y = X^-$$
	
	\begin{align*}
		Y \setminus \modelsof{\baseb} &= \Big( \big(\modelsof{\baseb} \setminus X^-\big) \cup X^+\Big) \setminus \modelsof{\baseb} \\
		&= \Big((\modelsof{\baseb} \setminus X^-) \setminus \modelsof{\baseb}\Big) \cup \big( X^+ \setminus \modelsof{\baseb}\big)\\
		&= \emptyset \cup (X^+ \setminus \modelsof{\baseb})
	\end{align*}
	Thus, as $X^+ \cap \modelsof{\baseb} = \emptyset$, we get 
	$$ Y \setminus \modelsof{\baseb} = X^+.$$
	Therefore, 
	$\modelsof{\baseb} \oplus Y = X^- \cup X^+$.
	
	Let $Z = \modelsof{\revline}$. Analogously, we show that 
	$Z \oplus \baseb = Z^+ \cup Z^-$. 
	From hypothesis, 
	$X^- \subseteq Z^-$ and $X^+ \subseteq Z^+$. 
	Moreover, by hypothesis, either $X^- \subset Z^- $ or $X^+ \subset Z^+$. In either case, we get that 
	$X^- \cup X^+ \subset Z^- \cup Z^ +$. 
	%which implies that $X^- \cup X^+ \subseteq Z^- \cup Z^+$. 
	Thus, 
	$\modelsof{\baseb} \oplus Y \subset \modelsof{\baseb} \oplus \modelsof{\revline}$. 
	
	Therefore, 
	$$\modelsof{\revline} \not \in \mmin_{\preceq_{\modelsof{\baseb}}}(\dualcon(M^+,M^-)),$$ 
	which contradicts  \eqref{eq:revisRelationaltwo}.
	
\end{itemize}
\end{proof}

\begin{proposition}\label{th:rational_symmetric}
If a revision operator $\rev$ satisfies \emph{\success, \vacuousexpansion, \vacuousremoval} and \emph{\prudence} then 
$\rev$ is a symmetric-differential operator.
\end{proposition}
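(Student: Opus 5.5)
The plan is to show that, in each of the three cases of \cref{def:symdefrev}, the set $Z = \modelsof{\revline}$ lies in the prescribed $\preceq_{\modelsof{\baseb}}$-minimal set. Two basic facts come first. Since $\rev$ outputs a finite base, $Z \in \FRsets(\logsys)$. By \success, $\Mplus \subseteq Z$ and $\Mminus \cap Z = \emptyset$, so $Z \in \dualcon(\Mplus, \Mminus)$. Since $\rev$ satisfies \prudence, \cref{lem:circ_to_sym} then gives
\[Z \in \mmin_{\preceq_{\modelsof{\baseb}}}(\dualcon(\Mplus, \Mminus)).\]
This already settles case (iii).

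For cases (i) and (ii) I will use a simple monotonicity observation. If $\chi' \subseteq \dualcon(\Mplus,\Mminus)$ and $Z \in \chi'$, then $Z$ is $\preceq_{\modelsof{\baseb}}$-minimal in $\chi'$. Indeed, suppose some $Z' \in \chi'$ had $\modelsof{\baseb}\oplus Z' \subset \modelsof{\baseb}\oplus Z$. Then $Z'$ would also lie in $\dualcon(\Mplus,\Mminus)$, which contradicts the minimality of $Z$ obtained from \cref{lem:circ_to_sym}. So in each case it suffices to check two things: that the relevant $\chi'$ is contained in $\dualcon(\Mplus,\Mminus)$, and that $Z \in \chi'$.

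Case (i), where $\Mplus \subseteq \modelsof{\baseb}$, is handled as follows.
\begin{itemize}
\item Take $\chi' = \dualcon(\Mplus, \Mminus \cup (\mUni\setminus\modelsof{\baseb}))$. It is contained in $\dualcon(\Mplus,\Mminus)$, because excluding more models is a stronger requirement.
\item By \vacuousexpansion, $Z \subseteq \modelsof{\baseb}$, so $Z \cap (\mUni\setminus\modelsof{\baseb}) = \emptyset$.
\item Together with \success and finite representability, this gives $Z \in \chi'$.
\end{itemize}

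Case (ii), where $\Mplus \not\subseteq \modelsof{\baseb}$ and $\Mminus\cap\modelsof{\baseb}=\emptyset$, is symmetric.
\begin{itemize}
\item Take $\chi' = \dualcon(\Mplus\cup\modelsof{\baseb}, \Mminus)$. It is contained in $\dualcon(\Mplus,\Mminus)$, because including more models is a stronger requirement.
\item By \vacuousremoval, $\modelsof{\baseb} \subseteq Z$, so $\Mplus \cup \modelsof{\baseb} \subseteq Z$.
\item Hence $Z \in \chi'$.
\end{itemize}

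There is no deep obstacle here. The one point that needs care is that \cref{lem:circ_to_sym} only provides minimality in the larger set $\dualcon(\Mplus,\Mminus)$, and this must be transferred to the restricted candidate sets. The monotonicity observation does exactly that, provided membership of $Z$ in the restricted sets is established first via the vacuity postulates. I will also check that the case split of \cref{def:symdefrev} matches the hypotheses of \vacuousexpansion and \vacuousremoval exactly, which it does. Combining the three cases then shows that $\rev$ is a symmetric-differential operator.
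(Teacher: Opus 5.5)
Your proof is correct, but it is organised differently from the paper's.

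You call \cref{lem:circ_to_sym} once, to get $Z\in\mmin_{\preceq_{\modelsof{\baseb}}}(\dualcon(\Mplus,\Mminus))$ in every case. You then transfer this minimality to the restricted candidate sets $\dualcon(\Mplus,\Mminus\cup(\mUni\setminus\modelsof{\baseb}))$ and $\dualcon(\Mplus\cup\modelsof{\baseb},\Mminus)$. The transfer works because these sets are subsets of $\dualcon(\Mplus,\Mminus)$, and a minimal element stays minimal in any subset that contains it. You need \vacuousexpansion and \vacuousremoval only to show that $Z$ belongs to these subsets.

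The paper uses \cref{lem:circ_to_sym} only for case (iii). For cases (i) and (ii) it argues by contradiction from scratch:
\begin{itemize}
\item it uses the vacuity postulates to collapse $\modelsof{\baseb}\oplus Z$ to $\modelsof{\baseb}\setminus Z$ (resp.\ $Z\setminus\modelsof{\baseb}$);
\item it takes a strictly closer candidate $Z'$;
\item it applies \prudence with $X^+=\emptyset$, $X^-=\modelsof{\baseb}\setminus Z'$ (resp.\ $X^-=\emptyset$, $X^+=Z'\setminus\modelsof{\baseb}$).
\end{itemize}
These witnesses are exactly what the witnesses in the proof of \cref{lem:circ_to_sym} become when $Z'\subseteq\modelsof{\baseb}$ (resp.\ $\modelsof{\baseb}\subseteq Z'$). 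So the paper is in effect re-proving the lemma inside each case.

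Your route is shorter and avoids that duplication. It also makes explicit that the proposition is just \cref{th:naiverep} plus the two vacuity postulates. The paper's version gains only self-containment of cases (i) and (ii), at the cost of repetition.
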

\begin{proof}
Let $\Cmc$ be a revision-compatible class of models, and 
$\rev$ revision operator on $\Cmc$ satisfying  \emph{\success, \vacuousexpansion, \vacuousremoval} and \emph{\prudence}. 

Let $\baseb$ be a finite base, and $(\Mplus, \Mminus) \in \Cmc$.
It suffices to show that 
\begin{itemize}
	\item (a) $\modelsof{\revline}\in \mathrm{min}_{\preceq_{\modelsof{\Bmc}}}(\dualcon(\M^+ , \M^- \cup (\mUni\setminus \modelsof{\baseb})) ) $, if $\Mplus \subseteq \modelsof{\baseb}$;
	\item (b) $\modelsof{\revline}\in     \mathrm{min}_{\preceq_{\modelsof{\Bmc}}}(\dualcon(\M^+ \cup \modelsof{\baseb}, \M^-))$, if  $\Mplus \not \subseteq \modelsof{\baseb}$, but  $\Mminus \cap \modelsof{\baseb} = \emptyset$;

	\item (c) 
	$\modelsof{\revline}\in   \mathrm{min}_{\preceq_{\modelsof{\Bmc}}}(\dualcon(\M^+, \M^-) )  $, if $\Mplus\not \subseteq \modelsof{\baseb}$ and $\Mminus \cap \modelsof{\baseb} \neq \emptyset$. 
\end{itemize}

\begin{itemize}
	\item (a) let us suppose that $\Mplus \subseteq \modelsof{\baseb}$. 
	Thus, from vacuous expansion, $$\modelsof{\revline} \subseteq \modelsof{\baseb}.$$ 
	For conciseness, let $Z = \modelsof{\revline}$. 
	So, 
	\begin{equation}
		Z \subseteq \modelsof{\baseb}. \label{eq:zedbase}
	\end{equation}
	Let us suppose, for contradiction purposes, that 
	$$Z \not \in \mathrm{min}_{\preceq{\modelsof{\baseb}}}(\dualcon(\Mplus, \Mminus \cup (\mUni \setminus \modelsof{\baseb}))).$$
	Thus, there is some $Z' \in \dualcon(\Mplus, \Mminus \cup (\mUni \setminus \modelsof{\baseb}))$ such that 
	\begin{equation}
		Z' \oplus \modelsof{\baseb} \subset Z \oplus \modelsof{\baseb}. \label{eq:zedsubsbase}   
	\end{equation}
	It follows from the definition of $\chi$ that 
	\begin{align}
		Z' \in \FRsets(\logsys) \label{eq:zedprimefr}
	\end{align}
	$Z' \cap  \Mminus \cup (\mUni \setminus \modelsof{\baseb}) = \emptyset$. Therefore, $Z' \cap  (\mUni \setminus \modelsof{\baseb}) = \emptyset$ which implies 
	\begin{align}
		Z' \subseteq \modelsof{\baseb}\label{eq:zprimemodb}
	\end{align}
	
	From \eqref{eq:zedbase}, $Z \subseteq \modelsof{\baseb}$, which implies that 
	\begin{equation}
		Z \oplus \modelsof{\baseb} = \modelsof{\baseb} \setminus Z.  \label{eq:zedb1}  
	\end{equation}
	It follows from \eqref{eq:zprimemodb}
	that  $Z' \oplus \modelsof{\baseb} \subseteq \modelsof{\baseb}$, which implies that 
	\begin{equation}
		Z' \oplus \modelsof{\baseb} = \modelsof{\baseb} \setminus Z'. \label{eq:zedb2} 
	\end{equation}
	Replacing \eqref{eq:zedb1} and \eqref{eq:zedb2} in \eqref{eq:zedsubsbase}, we get 
	\begin{align}
		\modelsof{\baseb}\setminus Z' \subset \modelsof{\baseb}\setminus Z
	\end{align}
	
	By hypothesis, $Z' \in \dualcon(\Mplus, \Mminus \cup (\mUni \setminus \modelsof{\baseb}))$. Therefore, 
	$$Z' \cap \Mminus = \emptyset,$$
	which implies that 
	$ \Mminus \cap \modelsof{\baseb} \subseteq \modelsof{\baseb} \setminus Z'$, which jointly with \eqref{eq:zedb2} implies 
	\begin{align}
		\Mminus \cap \modelsof{\baseb} \subseteq \modelsof{\baseb} \setminus Z' \subset \modelsof{\baseb}\setminus Z. \label{eq:zedecond1}
	\end{align}
	First, recall that for all sets $A$ and $B$, $A \setminus (A \setminus B) = A \cap B$. 
	Therefore, 
	$\modelsof{\baseb} \setminus (\modelsof{\baseb}\setminus Z') = \modelsof{\baseb}\cap Z'$. 
	Let 
	$$X^- = \modelsof{\baseb}\setminus Z'. $$ 
	So, $\modelsof{\baseb}\setminus X^- = \modelsof{\baseb}\cap Z'$. 
	From \eqref{eq:zedb2}, it follows that $\modelsof{\baseb}\cap Z' \neq Z$. Therefore,  
	\begin{equation}
		(\modelsof{\baseb} \setminus X^-) \cup X^+ \neq Z. \label{eq:zedcond3} 
	\end{equation}
	Replacing $X^-= \modelsof{\baseb}\setminus Z'$ in \eqref{eq:zedecond1} yields
	\begin{align}
		\Mminus \cap \modelsof{\baseb} \subseteq X^- \subset \modelsof{\baseb}\setminus Z. \label{eq:zedecond1p1}
	\end{align}
	By hypothesis, $\Mplus \subseteq \modelsof{\baseb}$, which implies that $\Mplus\setminus\modelsof{\baseb} = \emptyset$. Thus, 
	\begin{equation}
		\Mplus\setminus\modelsof{\baseb} \subseteq X^+ \subseteq \modelsof{\revline}\setminus\modelsof{\baseb}\label{eq:zedconidtion2}
	\end{equation}
	Due to \eqref{eq:zedecond1p1}, \eqref{eq:zedconidtion2} and \eqref{eq:zedcond3}, it follows from finite retainment-temperance that 
	\begin{align}
		\modelsof{\baseb}\setminus X^- \not \in \FRsets(\logsys). \label{eq:nonfinrep}
	\end{align}
	As, from \eqref{eq:zprimemodb}, $Z' \subseteq \modelsof{\baseb}$, we get that 
	$$ \modelsof{\baseb}\setminus(\modelsof{\baseb}\setminus Z') = Z'.$$
	Therefore, $\modelsof{\baseb}\setminus X^- =Z'$ which implies from \eqref{eq:nonfinrep} that 
	$$Z' \not \in \FRsets(\logsys), $$
	which contradicts \eqref{eq:zedprimefr}.

	\item (b) Let us suppose that $\Mplus \not \subseteq \modelsof{\baseb}$, but  $\Mminus \cap \modelsof{\baseb}$. 
	For conciseness, let 
	$$Z = \modelsof{\revline}.$$ 
	As $\Mminus \cap \modelsof{\baseb} = \emptyset$, it follows from vacuous removal that 
	$$\modelsof{\baseb}\subseteq Z.$$
	Therefore, 
	\begin{align}
		\modelsof{\baseb} \oplus Z = Z\setminus \modelsof{\baseb}. \label{eq:bzede}
	\end{align}
	Let us suppose, for contradiction purposes, that 
	$$Z \not \in \mmin_{\preceq_{\modelsof{\baseb}}}(\dualcon(\Mplus \cup \modelsof{\baseb}, \Mminus)).$$
	Thus, there is a 
	$Z' \in \dualcon(\Mplus \cup \modelsof{\baseb}, \Mminus),$
	such that  
	\begin{align}
		Z' \oplus \modelsof{\baseb} \subset Z\oplus \modelsof{\baseb}. \label{eq:bzedesubs}  
	\end{align}
	From de definition of $\chi$, we get 
	\begin{align}
		Z' \in \FRsets(\logsys) \label{eq:frsetsbzede}, \mbox{ and }\\
		\Mplus \cup \modelsof{\baseb} \subseteq Z'. \label{eq:bmplusbasesubs}   
	\end{align}
	Therefore, 
	$$\modelsof{\baseb} \subseteq Z',$$ 
	which implies that 
	\begin{align}
		\modelsof{\baseb} \oplus Z' = Z' \setminus \modelsof{\baseb}. \label{eq:bzedeprime}    
	\end{align}
	Replacing \eqref{eq:bzede} and \eqref{eq:bzedeprime} in \eqref{eq:bzedesubs} renders
	\begin{align}
		Z'\setminus \modelsof{\baseb} \subset Z \setminus \modelsof{\baseb}.\label{eq:bzzprimesubs}
	\end{align}
	Therefore, 
	$$ Z \neq Z'$$
	From \eqref{eq:bmplusbasesubs} we get 
	\begin{align*}
		\Mplus &\subseteq Z'\\
		\Mplus\setminus \modelsof{\baseb} &\subseteq Z'\setminus \modelsof{\baseb}.
	\end{align*}
	This jointly with  \eqref{eq:bzzprimesubs} implies
	\begin{align*}
		\Mplus \setminus \modelsof{\baseb} \subseteq Z'\setminus \modelsof{\baseb} \subseteq Z \setminus \modelsof{\baseb}.
	\end{align*}
	Let $X^+ = Z'\setminus \modelsof{\baseb}$. 
	Thus,
	\begin{align}
		\Mplus \setminus \modelsof{\baseb} \subseteq X^+ \subseteq Z \setminus \modelsof{\baseb}. \label{eq:bcond1}
	\end{align}
	Let $X^- = \emptyset$. So, 
	\begin{align}
		\Mminus \cap \modelsof{\baseb} \subseteq X^- \subseteq \modelsof{\baseb}\setminus Z. \label{eq:bcond2}
	\end{align}
	As $\modelsof{\baseb} \subseteq Z'$, we have that
	\begin{align*}
		\modelsof{\baseb} \cup (Z'\setminus \modelsof{\baseb})  &= Z' \\
		\modelsof{\baseb} \cup X^+ &= Z'  \tag{from $X^+ = Z'\setminus \modelsof{\baseb}$} \\ 
		(\modelsof{\baseb}\setminus X^-) \cup X^+ &= Z' \tag{from $X^- = \emptyset$}
	\end{align*}
	Therefore, as $Z \neq Z'$, we get  
	\begin{align}
		(\modelsof{\baseb}\setminus X^-) \cup X^+ \neq Z.\label{eq:bcond3}    
	\end{align}
	
	Due to \eqref{eq:bcond1}, \eqref{eq:bcond2} and \eqref{eq:bcond3}, it follows from finite temperance-retainment that $(\modelsof{\baseb} \setminus X^-) \cup X^+ \not \in \FRsets(\logsys)$. 
	Therefore, 
	$Z' \not \in \FRsets(\logsys)$ which contradicts \eqref{eq:frsetsbzede}.

	\item (c) Let us suppose that $\Mplus\not \subseteq \modelsof{\baseb}$ and $\Mminus \cap \modelsof{\baseb} \neq \emptyset$. This case follows directly from \cref{lem:circ_to_sym}.
\end{itemize}
\end{proof}
\end{toappendix}

\begin{theoremrep}\label{th:symrep}
A revision operator $\rev$ satisfies
\emph{\success, \vacuousexpansion, \vacuousremoval} and \emph{\prudence}
iff it is a symmetric differential revision operator.    

\end{theoremrep}
\begin{proof}
It follows from \cref{prop:symrational} and \cref{th:rational_symmetric}.
\end{proof}

% In the special case that every additional of models is trivially finitely representable, and each removal of interpretation is also finitely representable, vacuous \vacuousexpansion and \vacuousremoval are trivially satisfied by the \naive operators, as there is no room to add or remove extra interpretations. 
% In this case, the classes of \naive relational operators and symmetric-differential operator coincide. 

\begin{toappendix}

\begin{proposition}\label{prop:spcialcase}
Let $\Cmc$ be a binary class of models, such that for every finite base $\baseb$ and  $(\Mplus, \Mminus) \in \Cmc$, 
\begin{description}
	\item[] (i) $\modelsof{\baseb} \cup \Mplus$ is finitely representable, and 
	\item[] (ii) $\modelsof{\baseb} \setminus \Mminus$ is finitely representable. 
\end{description}

A revision operator on $\Cmc$ is a \naive relational operator iff it is a symmetric-differential operator. 
\end{proposition}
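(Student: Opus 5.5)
The plan is to reduce the statement to the two representation theorems already proved, \cref{th:naiverep} and \cref{th:symrep}. By \cref{th:naiverep}, the \naive relational operators are exactly the revision operators satisfying \prudence; since a revision operator satisfies \success by definition, this covers \success as well. By \cref{th:symrep}, the symmetric-differential operators are exactly those satisfying \success, \vacuousexpansion, \vacuousremoval and \prudence.

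The direction ``symmetric-differential $\Rightarrow$ \naive'' is then immediate, since the symmetric-differential postulates include \prudence. For the converse, it suffices to show that under hypotheses (i) and (ii) every \naive relational operator also satisfies \vacuousexpansion and \vacuousremoval.

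First I note that $\Mplus\cap\Mminus=\emptyset$ for every $(\Mplus,\Mminus)\in\Cmc$, because $\Cmc$ is revision-realisable. Let $Z=\modelsof{\revline}$, which lies in $\mmin_{\preceq_{\modelsof{\baseb}}}(\dualcon(\Mplus,\Mminus))$. Throughout I use that if some $Y\in\dualcon(\Mplus,\Mminus)$ satisfies $Y\preceq_{\modelsof{\baseb}} Z$, then minimality of $Z$ gives $Z\preceq_{\modelsof{\baseb}} Y$. Hence the two symmetric differences with $\modelsof{\baseb}$ coincide.

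For \vacuousexpansion, assume $\Mplus\subseteq\modelsof{\baseb}$ and take $Y=\modelsof{\baseb}\setminus\Mminus$.
\begin{enumerate}
\item By (ii), $Y$ is finitely representable.
\item $Y$ contains $\Mplus$ (by disjointness of $\Mplus$ and $\Mminus$) and avoids $\Mminus$, so $Y\in\dualcon(\Mplus,\Mminus)$.
\item Its change set is $\modelsof{\baseb}\oplus Y=\Mminus\cap\modelsof{\baseb}$.
\item Every candidate $Z$ avoids $\Mminus$, so $\Mminus\cap\modelsof{\baseb}\subseteq\modelsof{\baseb}\setminus Z\subseteq\modelsof{\baseb}\oplus Z$; that is, $Y\preceq_{\modelsof{\baseb}} Z$.
\item By minimality, $\modelsof{\baseb}\oplus Z=\Mminus\cap\modelsof{\baseb}\subseteq\modelsof{\baseb}$, so $Z\setminus\modelsof{\baseb}=\emptyset$, which is exactly $Z\subseteq\modelsof{\baseb}$.
\end{enumerate}

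For \vacuousremoval, assume $\Mminus\cap\modelsof{\baseb}=\emptyset$ and take $Y=\modelsof{\baseb}\cup\Mplus$.
\begin{enumerate}
\item By (i), $Y$ is finitely representable.
\item $Y$ contains $\Mplus$, and it avoids $\Mminus$ by the assumption together with disjointness, so $Y\in\dualcon(\Mplus,\Mminus)$.
\item Its change set is $\Mplus\setminus\modelsof{\baseb}$, which is contained in $Z\setminus\modelsof{\baseb}\subseteq\modelsof{\baseb}\oplus Z$ for every candidate $Z$, since $Z\supseteq\Mplus$.
\item The same minimality argument forces $\modelsof{\baseb}\oplus Z=\Mplus\setminus\modelsof{\baseb}$, so $\modelsof{\baseb}\setminus Z=\emptyset$, i.e.\ $\modelsof{\baseb}\subseteq Z$.
\end{enumerate}

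With both vacuity postulates in hand, \cref{th:symrep} yields that the \naive operator is symmetric-differential. The only delicate point is handling minimality correctly for the preorder $\preceq_{\modelsof{\baseb}}$: we do not know that $Z$ is least, only minimal. This is resolved because $Y$ is below every candidate, so minimality of $Z$ turns $Y\preceq_{\modelsof{\baseb}} Z$ into equality of the symmetric differences.
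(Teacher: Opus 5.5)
Your proof is correct. The outer reduction is exactly the paper's: symmetric-differential $\Rightarrow$ \naive via \cref{th:symrep} and \cref{th:naiverep}, and for the converse you establish the two vacuity postulates and invoke \cref{th:symrep}. Where you differ is in how the vacuity postulates are obtained.

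The paper builds a single global witness $Z=(\modelsof{\baseb}\cup\Mplus)\setminus\Mminus$. It shows $Z$ is finitely representable by first applying (i), then applying (ii) to a finite base for $\modelsof{\baseb}\cup\Mplus$. It then proves that $Z$ lies below every member of $\dualcon(\Mplus,\Mminus)$ under $\preceq_{\modelsof{\baseb}}$, so $Z$ is the unique minimal element. Both postulates are then read off from $\modelsof{\revline}=Z$.

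You instead split into the two vacuous cases. Your witnesses $\modelsof{\baseb}\setminus\Mminus$ and $\modelsof{\baseb}\cup\Mplus$ are just the paper's $Z$ specialised to those cases. You show each is below every candidate and then use only minimality, not leastness, of the operator's output.

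Your route is leaner. \emph{Vacuous-expansion} needs only (ii) and \emph{vacuous-removal} only (i). You need neither the chaining of the two hypotheses through an auxiliary base nor the uniqueness argument.

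The paper's route yields a stronger by-product: under (i) and (ii) the \naive operator is completely determined and always returns $(\modelsof{\baseb}\cup\Mplus)\setminus\Mminus$. That is the fact \cref{cor:alcrev} then exploits when it defines its operator directly as this set.
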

\begin{proof}
Let $\Cmc$ be a binary class of models satisfying conditions (i) and (ii) from the statement. 
Let $\rev$ be a revision operator on $\Cmc$. 
\begin{description}
	\item[] ``$\Leftarrow$'' Let us suppose that $\rev$ is a symmetric-differential operator. 
	Thus, from \cref{th:symrep}, $\rev$ satisfies \prudence, which from \cref{th:naiverep} implies that $\rev$ is a \naive relational operator. 
	\item[] ``$\Rightarrow$''. Let us suppose that $\rev$ is a \naive relational operator. 
	We will show that 
	$\rev$ satisfies \vacuousremoval and \vacuousexpansion. 
	Let
	$$ Z = (\modelsof{\baseb} \cup \Mplus)\setminus \Mminus $$
	So, 
	\begin{align*}
		Z &= (\modelsof{\baseb} \setminus \Mminus) \cup (\Mplus \setminus \Mminus)
	\end{align*}
	As $\rev$ is defined on $\Cmc$, we have that $\Cmc$ is revision-realisable, which implies that 
	$\Mplus \cap \Mminus = \emptyset$. 
	So, $\Mplus \setminus \Mminus = \Mplus$, which implies 
	\begin{align}
		Z &= (\modelsof{\baseb} \setminus \Mminus) \cup \Mplus.
		\label{eq:zeddist}
	\end{align}
	By hypothesis, item (i) of the statement, 
	$\modelsof{\baseb} \cup \Mplus$ is finitely-representable. 
	Let $\baseb'$ be a finite base for this. 
	So, from item (ii) of the statement, we have that 
	$\modelsof{\baseb'}\setminus \Mminus $ is also finitely representable, which implies that 
	$Z$ is finitely representable. 
	
	We will show that 
	$$\mmin_{\preceq_{\modelsof{\baseb}}}(\dualcon(\Mplus, \Mminus)) = \{Z\}$$
	
	By construction, $\Mplus \subseteq \modelsof{Z}$ and $\Mminus \cap \modelsof{Z} = \emptyset$. Also, $Z$ is finitely representable. 
	So, $Z \in \dualcon(\Mplus, \Mminus)$. 
	Let us suppose for contradiction that $$Z \not \in \mmin_{\preceq_{\modelsof{\baseb}}}(\dualcon(\Mplus, \Mminus)).$$ 
	Then, there is some $Z' \in \dualcon(\Mplus, \Mminus)$ such that 
	$$ Z' \oplus \modelsof{\baseb} \subset Z \oplus \modelsof{\baseb} $$
	Thus, there is some $M \in Z \oplus \modelsof{\baseb}$ such that 
	\begin{align}
		M \not \in Z' \oplus \modelsof{\baseb}. \label{eq:zpcontra}
	\end{align}
	By construction, we have that 
	$$ Z \oplus \modelsof{\baseb} = (\modelsof{\baseb} \cap \Mminus) \cup (\Mplus \setminus \modelsof{\baseb}) $$
	So, 
	(a) $M \in\modelsof{\baseb} \cap \Mminus $ or 
	(b) $M \in \Mplus \setminus \modelsof{\baseb}$. 
	\begin{itemize}
		\item[] (a) $M \in\modelsof{\baseb} \cap \Mminus$. 
		So, $M \in \modelsof{\baseb}$
		As $Z' \in \dualcon(\Mplus, \Mminus)$, we get $Z' \cap \Mminus = \emptyset$.
		Thus, as $M \in \Mminus$,  we get that  $M \not \in Z'$. 
		Hence, as $M \in \modelsof{\baseb}$, we get $M \in \modelsof{\baseb} \setminus Z'$
		which implies $M \in Z' \oplus \modelsof{\baseb}$, which contradicts \eqref{eq:zpcontra}.
		
		\item (b) $M \in \Mplus \setminus \modelsof{\baseb}$. So $M \in \Mplus$ and $M \not \in \baseb$ .
		As $Z' \in \dualcon(\Mplus, \Mminus)$, we get that $\Mplus \subseteq Z'$ which implies $M \in Z'$. 
		Thus, as $M \not \in \baseb$, 
		$M \in Z' \setminus \modelsof{\baseb}$ which implies $M \in Z' \oplus \modelsof{\baseb}$ which contradicts \eqref{eq:zpcontra}.
	\end{itemize}
	
	In either case, (a) or (b), we reach a contradiction. Therefore, $$Z \in \mmin_{\preceq_{\modelsof{\baseb}}}(\dualcon(\Mplus, \Mminus)).$$ 
	
	For the uniqueness of $Z$. Let us suppose for contradiction that there is some distinct $Z' \in \mmin_{\preceq_{\modelsof{\baseb}}}(\dualcon(\Mplus, \Mminus))$. 
	
	So, as both $Z$ and $Z'$ are minimal, we get that 
	$Z \oplus \modelsof{\baseb} \not \subseteq Z' \oplus \modelsof{\baseb}$. 
	Thus,  there is some $M \in Z \oplus \modelsof{\baseb}$ such that     
	$M \not \in Z' \oplus \modelsof{\baseb}$. 
	This is the same argument to prove the minimality of $Z$. 
	So, using the same proof, we reach a contradiction. Therefore, we conclude that $Z$ is unique in $\in \mmin_{\preceq_{\modelsof{\baseb}}}(\dualcon(\Mplus, \Mminus))$, that is, 
	\begin{equation}
		\mmin_{\preceq_{\modelsof{\baseb}}}(\dualcon(\Mplus, \Mminus)) = \{Z\} \label{eq:unique}
	\end{equation}
	By definition, 
	$$\modelsof{\rev(\baseb, \Mplus, \Mminus)} \in \mmin_{\preceq_{\modelsof{\baseb}}}(\dualcon(\Mplus, \Mminus)),$$ which from \eqref{eq:unique} implies that  
	\begin{align} 
		\modelsof{\rev(\baseb, \Mplus, \Mminus)} &= Z. \label{eq:revsubbase}
	\end{align}
	
	\begin{description}
		\item \vacuousexpansion .
		Let $\Mplus \subseteq \modelsof{\baseb}$. 
		
		As $Z = \Mplus \subseteq \modelsof{\baseb}$,
		we get that 
		$Z =  \modelsof{\baseb} \setminus \Mminus$, which implies $Z \subseteq \modelsof{\baseb}$ which from \eqref{eq:revsubbase} renders 
		$\modelsof{\rev(\baseb, \Mplus, \Mminus)} \subseteq \modelsof{\baseb}$.
		
		\item \vacuousremoval . 
		Let $\Mminus \cap \modelsof{\baseb} = \emptyset$. 
		Thus, $\modelsof{\baseb} \setminus \Mminus = \modelsof{\baseb}$, which implies from \eqref{eq:zeddist}, that
		$$ Z = \modelsof{\baseb} \cup \Mplus. $$
		
		Thus, $\modelsof{\baseb} \subseteq Z$ which from \eqref{eq:revsubbase} implies   
		$\modelsof{\baseb} \subseteq \modelsof{\revline}$.
		
	\end{description}
	
	As $\rev$ is a revision operator, it satisfies success. Thus, $\rev$ satisfies success, \vacuousexpansion, \vacuousremoval and \prudence, which implies from \cref{th:symrep} that $\rev$ is a symmetric-differential operator.
	
\end{description}
\end{proof}

\begin{corollary}\label{cor:alcrev}
Let $\logsys = (\Lmc,\mUni,\models)$ be a satisfaction system, and $\Cmc \subseteq \powerset(\mUni) \times \powerset(\mUni)$ a binary class of models on $\logsys$. 
%\todo[inline]{$\Cmc \subseteq \powerset(\mUni) \times \powerset(\mUni)$}
If  for every finite base $\baseb$ and  $(\Mplus, \Mminus) \in \Cmc$, 
\begin{description}
	\item[] (i) $\modelsof{\baseb} \cup \Mplus$ is finitely representable, % by a base $\baseb'$, and 
	\item[] (ii) $\modelsof{\baseb} \setminus \Mminus$ is finitely representable, and 
	\item[] (iii) $\Mplus$ and $\Mminus$ are disjoint,
	%$(\modelsof{\baseb} \cup \Mplus)\setminus \Mminus$
	%    \item[] (iii) there are formulae $\alpha$ and $\beta$ such that 
	% $\Mplus \subseteq \modelsof{\alpha}$ and $\Mminus\cap \modelsof{\alpha} = \emptyset$.
	%$\modelsof{\baseb'} \cap \Mminus = \emptyset$
\end{description}

then $\Cmc$ is revision-compatible.
\end{corollary}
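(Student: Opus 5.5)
Revision-compatibility asks for a revision operator on $\Cmc$ satisfying \success, \vacuousexpansion, \vacuousremoval and \prudence. By \cref{th:symrep}, it is enough to show that a symmetric-differential revision operator can be defined on $\Cmc$, i.e.\ that in each of the three cases of \cref{def:symdefrev} the relevant set $\mmin_{\preceq_{\modelsof{\baseb}}}(\dualcon(\cdot,\cdot))$ is non-empty, so that a choice can be made. Alternatively, following \cref{prop:spcialcase}, it suffices to exhibit a \naive relational revision operator, since under (i) and (ii) these coincide with symmetric-differential ones. I would take this second route, because \cref{prop:spcialcase} already does most of the work.

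The central object is $Z = (\modelsof{\baseb}\setminus\Mminus)\cup\Mplus$. Using (iii), this equals $(\modelsof{\baseb}\cup\Mplus)\setminus\Mminus$. By (i), there is a finite base $\baseb'$ with $\modelsof{\baseb'}=\modelsof{\baseb}\cup\Mplus$. Since (ii) quantifies over every finite base, applying it to $\baseb'$ gives that $\modelsof{\baseb'}\setminus\Mminus = Z$ is finitely representable. Moreover $\Mplus\subseteq Z$ holds by construction, and $\Mminus\cap Z=\emptyset$ holds by (iii). Hence $Z\in\dualcon(\Mplus,\Mminus)$; in particular $\Cmc$ is revision-realisable, so revision operators on it make sense at all.

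Next I would show $\mmin_{\preceq_{\modelsof{\baseb}}}(\dualcon(\Mplus,\Mminus))=\{Z\}$, exactly as in the proof of \cref{prop:spcialcase}. We have $\modelsof{\baseb}\oplus Z = (\modelsof{\baseb}\cap\Mminus)\cup(\Mplus\setminus\modelsof{\baseb})$, and every $Z'\in\dualcon(\Mplus,\Mminus)$ contains $\Mplus$ and avoids $\Mminus$. It follows that $\modelsof{\baseb}\oplus Z\subseteq \modelsof{\baseb}\oplus Z'$, so $Z$ is the least element and hence the unique minimum.

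We can then define $\rev(\baseb,\Mplus,\Mminus)$ as any finite base whose models are $Z$. This is a \naive relational operator, and by \cref{prop:spcialcase} (whose hypotheses are (i) and (ii)) it is symmetric-differential. By \cref{th:symrep} it therefore satisfies all four postulates, which establishes revision-compatibility. If one prefers to avoid relying on \cref{prop:spcialcase}, the same conclusion follows directly: \vacuousexpansion holds because $\Mplus\subseteq\modelsof{\baseb}$ gives $Z=\modelsof{\baseb}\setminus\Mminus\subseteq\modelsof{\baseb}$; \vacuousremoval holds because $\Mminus\cap\modelsof{\baseb}=\emptyset$ gives $Z\supseteq\modelsof{\baseb}$; and \prudence follows from \cref{th:naiverep}.

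The main obstacle is the finite representability of $Z$. It hinges on reading (ii) as applying to the intermediate base $\baseb'$ supplied by (i), not only to the original $\baseb$. I would state this explicitly, since the quantifier over all finite bases is what makes the composition legitimate.
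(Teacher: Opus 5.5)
Your proposal is correct and follows essentially the paper's own proof. Like the paper, you let the revision output a finite base for $(\modelsof{\baseb}\cup\Mplus)\setminus\Mminus$ and obtain finite representability by applying (ii) to the base $\baseb'$ supplied by (i). You then note that this set is the unique $\preceq_{\modelsof{\baseb}}$-minimum of $\dualcon(\Mplus,\Mminus)$ and conclude via \cref{prop:spcialcase} and \cref{th:symrep}. Your explicit computation $\modelsof{\baseb}\oplus Z=(\modelsof{\baseb}\cap\Mminus)\cup(\Mplus\setminus\modelsof{\baseb})$ and your direct check of the vacuous postulates usefully fill in the minimality step, which the paper only asserts.
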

\begin{proof}
Let $\logsys = (\Lmc,\mUni,\models)$ be a satisfaction system, and $\Cmc \subseteq \powerset(\mUni) \times \powerset(\mUni)$ a binary class of models on $\logsys$ satisfying conditions (i) to (iii) from the statement. 
%Let $\baseb$ be a finite base, and $(\Mplus, \Mminus) \in \Cmc$. 
% From condition (iii), we have that $\dualcon(\Mplus, \Mminus) \not \emptyset$
We define the following revision operator on $\Cmc$,
$\rev(\baseb, \Mplus, \Mminus) = (\modelsof{\baseb} \cup \Mplus) \setminus \Mminus$.
As $\Mplus$ and $\Mminus$ are disjoint, $(\modelsof{\baseb} \cup \Mplus) \setminus \Mminus$ does not remove any model from $\Mplus$, and therefore, success is satisfied. So $\rev$ is a revision function. 
As $(\modelsof{\baseb} \cup \Mplus)$ is finitely representable by a base $\baseb'$, it follows from item (ii) that $\modelsof{\baseb'}\setminus \Mminus$ is finitely representable. 
So, $(\modelsof{\baseb} \cup \Mplus)\setminus \Mminus$ is finitely  representable by some finite base $\baseb''$. So, adding $\Mplus$ and removing $\Mminus$ are the minimal changes necessary to obtain a finite representation, which implies that 
$\min_{\preceq_{\modelsof{\baseb}}}(\dualcon(\Mplus, \Mminus) = \{\modelsof{\baseb''}\}$. 
Thus,  $\rev$ is a \naive relational revision operator. 
Hence, from \cref{prop:spcialcase}, $\rev$ is a symmetric-differential operator, and therefore, from \cref{th:symrep}, it satisfies all revision postulates. 
Thus, $\Cmc$ is revision-compatible.
\end{proof}

\end{toappendix}

Revision-compatibility is tightly connected to both reception-compatibility and eviction-compatibility.
In the case that $\Mminus$ is empty, revising a base with $(\Mplus,\Mminus)$ intuitively corresponds to performing a reception, as \vacuousremoval would forbid removal of interpretations. 
Analogously, if $\Mplus$ is empty, then revising with $(\Mplus, \Mminus)$, with a non-empty $\Mminus$, would correspond to evicting $\Mminus$, as \vacuousexpansion would forbid adding interpretations. 
From this perspective, we can trace an important connection between revision-compatibility with eviction-compatibility and reception-compatibility. 
To establish this connection, the underlying class of models must cover the cases that we can perform revision of the kind $(\Mplus, \emptyset)$ and $(\emptyset, \Mminus)$, that is, cover the possibility of solely adding or removing interpretations. 
We call such classes \decomp classes of models. 
%\todo{chamar non-trivial de decomposable, usar uma macro.}

\begin{definition}
A binary class of models $\Cmc$ is \decomp iff for all $(\Mplus, \Mminus) \in \Cmc$, %it holds that 
$(\Mplus, \emptyset) \in \Cmc$ and $(\emptyset, \Mminus) \in \Cmc$.
\end{definition}

% Whereas revision concerns pairs of sets of models, reception and eviction involve unary classes of models, as they are concerned with either solely adding interpretation or removing interpretation. 
Given a binary class of models $\Cmc$, let 
%\begin{align*}
$\Cmc^+ = \{ \Mplus \in \mUni \mid (\Mplus, \Mminus)\in \Cmc\}, \mbox{ and }$, %$\\
$\Cmc^- = \{ \Mminus \in \mUni \mid (\Mplus, \Mminus)\in \Cmc\}$. 
%\end{align*}
The set $\Cmc^+$ is the greatest subclass of $\Cmc$ with reception candidates, whereas $\Cmc^-$ is the largest subclass with eviction candidates. 
Every rational revision operator induces an eviction and a reception operator. 
Consequently, compatibility of revision implies compatibility with both reception and eviction. 

\begin{toappendix}

\begin{lemma}\label{lem:revtorcp}
Let $\Cmc$ be a \decomp revision-compatible class. 
If $\rev$ is a revision operator on $\Cmc$, then 
$ \Rcp(\baseb, \Mplus) := \rev(\baseb, \Mplus, \emptyset)$ on $\Cmc^+$ satisfies \textit{success, persistence, vacuity} and \textit{finite temperance}. 
\end{lemma}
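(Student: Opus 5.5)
The plan is to read each reception postulate off the corresponding revision postulate, with $\Mminus$ instantiated to $\emptyset$. The lemma speaks only of ``a revision operator''. I take $\rev$ to be one satisfying \success, \vacuousexpansion, \vacuousremoval and \prudence; such an operator exists because $\Cmc$ is revision-compatible (\cref{def:revisionmod}). Since $\Cmc$ is \decomp, for every $\Mplus\in\Cmc^+$ the pair $(\Mplus,\emptyset)$ lies in $\Cmc$. Hence $\Rcp(\baseb,\Mplus)=\rev(\baseb,\Mplus,\emptyset)$ is well defined on $\Cmc^+$ and returns a finite base.

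\emph{Success} is immediate: the second half of \success for $\rev$ gives $\Mplus\subseteq\modelsof{\rev(\baseb,\Mplus,\emptyset)}$. \emph{Persistence} follows from \vacuousremoval, whose premise $\emptyset\cap\modelsof{\baseb}=\emptyset$ always holds; it yields $\modelsof{\baseb}\subseteq\modelsof{\Rcp(\baseb,\Mplus)}$. For \emph{vacuity}, which I read as ``if $\Mplus\subseteq\modelsof{\baseb}$ then the models are unchanged'', I would invoke the earlier proposition that \vacuousexpansion and \vacuousremoval together imply \lethargy. Its two premises are $\Mplus\subseteq\modelsof{\baseb}$ and $\emptyset\cap\modelsof{\baseb}=\emptyset$, and its conclusion gives $\modelsof{\Rcp(\baseb,\Mplus)}=\modelsof{\baseb}$.

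The step that needs real work is \emph{finite temperance}, and I would derive it from \prudence. Write $Z=\modelsof{\rev(\baseb,\Mplus,\emptyset)}$, and suppose for contradiction that $\mSet'\in\FRsets(\logsys)$ with $\modelsof{\baseb}\cup\Mplus\subseteq\mSet'\subset Z$. Choose $X^-=\emptyset$ and $X^+=\mSet'\setminus\modelsof{\baseb}$; these two sets are disjoint. I then check the three premises of \prudence.
\begin{enumerate}
\item $X^-=\emptyset\subseteq\modelsof{\baseb}\setminus Z$, and $\Mminus\cap\modelsof{\baseb}=\emptyset\subseteq X^-$.
\item $\Mplus\setminus\modelsof{\baseb}\subseteq X^+$, because $\Mplus\subseteq\mSet'$. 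Also $X^+\subseteq Z\setminus\modelsof{\baseb}$, because $\mSet'\subseteq Z$.
\item Since $\modelsof{\baseb}\subseteq\mSet'$, we have $(\modelsof{\baseb}\setminus X^-)\cup X^+=\modelsof{\baseb}\cup(\mSet'\setminus\modelsof{\baseb})=\mSet'$, and $\mSet'\neq Z$ because the inclusion is strict.
\end{enumerate}
\prudence then says that $\mSet'\notin\FRsets(\logsys)$, which is the desired contradiction.

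The only delicate point I expect concerns the typing in \prudence, which asks that ``$X^+,X^-\in\Cmc$''. I would interpret this as the intended requirement that $X^+$ and $X^-$ are sets of models, which is how \prudence is applied in \cref{lem:circ_to_sym}, so the choices above are admissible.
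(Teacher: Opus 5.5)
Your proposal is correct and follows the paper's proof essentially step for step: success from \success, persistence from \vacuousremoval, and finite temperance via \prudence with exactly the same witnesses $X^-=\emptyset$ and $X^+=\mSet'\setminus\modelsof{\baseb}$. The differences are cosmetic. For vacuity you invoke the lethargy proposition to get equality, whereas the paper uses \vacuousexpansion alone for $\modelsof{\Rcp(\baseb,\Mplus)}\subseteq\modelsof{\baseb}$, with persistence supplying the converse. You also state explicitly that $\rev$ satisfies all four revision postulates, which the paper's proof assumes tacitly.
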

\begin{proof}
Let us define the operation 
$\Rcp(\baseb, \Mplus) := \rev(\baseb, \Mplus, \emptyset)$. We will show that it is a reception operation, that is, that it satisfies all postulates for reception: 
\begin{itemize}
	\item \textbf{success}: follows trivially from the success of $\rev$
	
	\item \textbf{persistence}: from vacuous-removal, we have that $\modelsof{\baseb} \subseteq \modelsof{\rev(\baseb,\Mplus, \emptyset)}$. 
	Thus, $\modelsof{\baseb} \subseteq \modelsof{\Rcp(\baseb,\Mplus)}$. 
	
	\item \textbf{vacuity}: let us suppose that $\Mplus \subseteq \modelsof{\baseb}$. From vacuous-expansion, we have $\modelsof{\rev(\baseb, \Mplus, \Mminus)} \subseteq \modelsof{\baseb}$. 
	Thus, $\rev(\baseb, \Mplus), \emptyset \subseteq \modelsof{\baseb}$ which implies that 
	$\Rcp(\baseb, \Mplus) \subseteq \modelsof{\baseb}$. 
	
	\item \textbf{finite temperance}: 
	Let us suppose that finite temperance is not satisfied for contradiction purposes. 
	Thus, there are some $\M$ and $\M'$ such that 
	$$ \modelsof{\baseb} \cup \M \subseteq \M' \subset \modelsof{\Rcp(\baseb, \M)},$$
	but $\M' \in \FRsets(\logsys)$. 
	Let $X^{-} = \emptyset$ and 
	$X^{+} = \M' \setminus \modelsof{\baseb}$. 
	Condition 1 of finite-temperance, for revision, is trivially satisfied. 
	For condition 2, note that by hypothesis $\M' \subseteq \modelsof{\Rcp(\baseb, \M)} = \modelsof{\rev(\baseb, \M, \emptyset)}$ which implies that 
	$$\M' \setminus \modelsof{\baseb} \subseteq \modelsof{\rev(\baseb, \M, \emptyset)}\setminus \modelsof{\baseb}.$$ 
	Thus, as $X^{+} = \M' \setminus \modelsof{\baseb}$, we get that 
	$$X^{+} \subseteq \modelsof{\rev(\baseb, \M, \emptyset)} \setminus \modelsof{\baseb}.$$ 
	Hence, condition 2 is satisfied. 
	For condition 3, 
	as $X^- = \emptyset$, we get
	$$ (\modelsof{\baseb}\setminus X^{-}) \cup X^+ = \modelsof{\baseb} \cup X^+.$$
	
	By hypothesis, $\modelsof{\baseb} \cup \M  \subseteq \M'$ which means that 
	$\modelsof{\baseb} \subseteq \M'$. 
	Therefore, 
	$$ \M' = (\M' \setminus \modelsof{\baseb}) \cup \modelsof{\baseb} $$
	By definition, $X^+ = (\M' \setminus \modelsof{\baseb})$. 
	Thus, 
	$$ \M' = X^+ \cup \modelsof{\baseb}.$$
	
	By hypothesis, 
	$\M' \subset \modelsof{\rev(\baseb, \M, \emptyset)}$. 
	Thus, 
	$$X^{+} \cup \modelsof{\baseb} \neq \modelsof{\rev(\baseb, \M, \emptyset)}.$$
	Hence, condition 3 is also satisfied. 
	Thus, according to finite-temperance of revision, $\modelsof{\baseb} \cup X^{+} \in \FRsets(\logsys)$. 
	Thus, as $\modelsof{\baseb} \cup X^{+} = \M'$, we get that $\M' \in \FRsets(\logsys)$ which contradicts the hypothesis.

\end{itemize}
\end{proof}

\begin{lemma}\label{lem:minSymMinSups}
For every finite base $\baseb$ and set of models $\Mplus$, 
$$\FRsups(\modelsof{\baseb} \cup \Mplus) = \mathrm{min}_{\preceq_{\modelsof{\baseb}}}(\dualcon(\Mplus \cup \modelsof{\baseb}, \emptyset)).$$
\end{lemma}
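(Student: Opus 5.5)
The plan is to prove the equality by double inclusion. The key step is to identify the candidate set on the right-hand side with the candidate set on the left, and then show that the two orders agree on that common candidate set.

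First, unfold the definition of $\dualcon$ with $\Mminus=\emptyset$. The condition $\emptyset \cap Y = \emptyset$ is vacuous, so
\[
\dualcon(\Mplus \cup \modelsof{\baseb}, \emptyset) = \{ Y \in \FRsets(\logsys) \mid \modelsof{\baseb} \cup \Mplus \subseteq Y\}.
\]
This is exactly the set over which $\FRsups(\modelsof{\baseb}\cup\Mplus)$ takes $\mmin_{\subseteq}$. So both sides select minimal elements of the same family; only the order differs ($\subseteq$ versus $\preceq_{\modelsof{\baseb}}$).

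Second, show that on this family the two orders coincide. For any $Y$ in the family we have $\modelsof{\baseb} \subseteq Y$, hence
\[
\modelsof{\baseb} \oplus Y = Y \setminus \modelsof{\baseb}.
\]
This is the same computation as in case (b) of \cref{th:rational_symmetric}. For $Y_1, Y_2$ in the family, $Y_1 \preceq_{\modelsof{\baseb}} Y_2$ therefore means $Y_1\setminus\modelsof{\baseb} \subseteq Y_2\setminus\modelsof{\baseb}$.

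Since both $Y_i$ contain $\modelsof{\baseb}$, we can write $Y_i = \modelsof{\baseb} \cup (Y_i\setminus \modelsof{\baseb})$. Consequently $Y_1\setminus\modelsof{\baseb} \subseteq Y_2\setminus\modelsof{\baseb}$ holds iff $Y_1 \subseteq Y_2$. So $\preceq_{\modelsof{\baseb}}$ restricted to the family is exactly $\subseteq$, and strictness transfers in both directions.

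Finally, conclude that the minimal elements coincide. An element $Y$ is $\subseteq$-minimal iff there is no $Y'$ in the family with $Y' \subset Y$. By the order equivalence just shown, this holds iff there is no $Y'$ with $Y' \preceq Y$ and not $Y \preceq Y'$. That is exactly $\preceq_{\modelsof{\baseb}}$-minimality, using the paper's definition of $\mmin$ for preorders, so $\mmin_{\subseteq}$ and $\mmin_{\preceq_{\modelsof{\baseb}}}$ of the same family agree.

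I expect no real obstacle. The only delicate point is that the order equivalence is used only inside the family, where every set contains $\modelsof{\baseb}$; outside it, symmetric difference and inclusion genuinely differ.
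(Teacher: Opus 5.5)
Your proof is correct and follows essentially the same route as the paper's: both rest on the observation that every candidate $Y$ contains $\modelsof{\baseb}$, so $\modelsof{\baseb}\oplus Y = Y\setminus\modelsof{\baseb}$, and on that common family $\preceq_{\modelsof{\baseb}}$ collapses to $\subseteq$. The paper proves the two inclusions separately by contradiction, whereas you show once that the two orders coincide. Your packaging is slightly cleaner and avoids a couple of small slips in the paper's write-up, e.g.\ taking symmetric differences against $\modelsof{\baseb}\cup\Mplus$ rather than $\modelsof{\baseb}$ in part (b).
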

\begin{proof}
We   prove that 
(a) $\FRsups(\modelsof{\baseb} \cup \Mplus) \subseteq \mathrm{min}_{\preceq_{\modelsof{\baseb}}}(\dualcon(\Mplus \cup \modelsof{\baseb}, \emptyset))$; and 
(b)$ \mathrm{min}_{\preceq_{\modelsof{\baseb}}}(\dualcon(\Mplus \cup \modelsof{\baseb}, \emptyset)) \subseteq \FRsups(\modelsof{\baseb} \cup \Mplus) $. 
\begin{description}
	\item[] (a) $\FRsups(\modelsof{\baseb} \cup \Mplus) \subseteq \mathrm{min}_{\preceq_{\modelsof{\baseb}}}(\chi(\logsys, \Mplus \modelsof{\baseb}, \emptyset)). $ 
	Let $$X \in \FRsups(\modelsof{\baseb} \cup \Mplus).$$ 
	So, $ \Mplus \subseteq X$ which implies $X \in \chi(\logsys, \modelsof{\baseb} \cup \Mplus, \emptyset)$. 
	Let us suppose for contradiction purposes that $$X \not \in \mathrm{min}_{\preceq_{\modelsof{\baseb}}}(\chi(\logsys, \Mplus \cup \modelsof{\baseb}, \emptyset))$$. 
	So, there is some $Y \in \chi(\logsys, \Mplus, \emptyset)$ such that 
	$$ \modelsof{\baseb
	} \oplus Y \subset \modelsof{\baseb} \oplus X. $$
	As $\modelsof{\baseb} \subseteq X$ and  $\modelsof{\baseb} \subseteq Y$, we get 
	$\modelsof{\baseb} \oplus Y = Y \setminus \modelsof{\baseb}$ and 
	$\modelsof{\baseb} \oplus X = X \setminus \modelsof{\baseb}$. 
	Thus, 
	\begin{align}
		Y \setminus \modelsof{\baseb} \subset Y \setminus \modelsof{\baseb}. \label{eq:stepsubsxy}  
	\end{align}
	As $\modelsof{\baseb}$ si a subset of both $X$ and $Y$, we have that 
	\begin{align*}
		Y &= (Y\setminus \modelsof{\baseb}) \cup \modelsof{\baseb}, \mbox{ and }
		X &= (X\setminus \modelsof{\baseb}) \cup \modelsof{\baseb}.
	\end{align*}
	This jointly with \eqref{eq:stepsubsxy} implies that 
	$$Y \subset X$$
	
	As $Y \in \chi(\logsys, \Mplus \cup \modelsof{\baseb}, \emptyset)$, we have that $Y \in \FRsets(\logsys)$ and $\Mplus \cup \modelsof{\baseb}$ which implies that $Y \in \mathrm{FRSups}(\modelsof{\baseb} \cup \Mplus)$. However, as $Y \subseteq X$, this contradicts the hypothesis that $X \in \FRsups(\modelsof{\baseb} \cup \Mplus)$. 
	
	\item (b) $ \mathrm{min}_{\preceq_{\modelsof{\baseb}}}(\dualcon(\modelsof{\baseb} \cup \Mplus, \emptyset)) \subseteq \FRsups(\modelsof{\baseb} \cup \Mplus) $.   
	Let us suppose, for contradiction purposes, that it does not hold. 
	So, there is  $\mSet \in \mmin_{\preceq_{\modelsof{\baseb}}}(\dualcon(\Mplus \cup \modelsof{\baseb}, \emptyset))$, such that 
	$\mSet \not \in  \FRsups(\modelsof{\baseb} \cup \Mplus)$.
	By definition,
	$\modelsof{\baseb} \cup \Mplus \subseteq \mSet$ and $\mSet$ is finitely representable. 
	Thus, there is some finitely represetnable 
	$\mSet' \subset \mSet$ such that 
	$ \modelsof{\baseb} \cup \Mplus \subseteq  \mSet' $. 
	As $\modelsof{\baseb} \cup \Mplus$ is contained on both $\mSet$ and $\mSet'$, and $\mSet' \subset \mSet$, it follows that 
	$$  \mSet' \oplus (\modelsof{\baseb} \cup \Mplus) \subset  \mSet \oplus (\modelsof{\baseb} \cup \Mplus),$$
	
	which implies that $\mSet \not \in \mmin_{\preceq_{\modelsof{\baseb}}}(\dualcon(\modelsof{\baseb} \cup \Mplus, \emptyset))$ which contradicts the hypothesis.

\end{description}
\end{proof}

\begin{theoremrep}\label{th:revToEvcExp}
Let $\Cmc$ be a \decomp binary class of models. 
If $\Cmc$ is revision-compatible then 
$\Cmc^+$ is reception-compatible and 
$\Cmc^-$ is eviction-compatible.
%$\Cmc^+$ is reception-compatible, and $\Cmc^-$ is eviction-compatible, for $ C^+ = \{ \Mplus \in \mUni \mid (\Mplus, \Mminus)\in \Cmc\}$ and     $C^- = \{ \Mminus \in \mUni \mid (\Mplus, \Mminus)\in \Cmc\}$
% \begin{itemize}[leftmargin=*]
	%     \item $\Cmc^+ = \{ \Mplus \in \mUni \mid (\Mplus, \Mminus)\in \Cmc\}$ is reception-compatible,
	%     \item $\Cmc^- = \{ \Mminus \in \mUni \mid (\Mplus, \Mminus)\in \Cmc\}$ is eviction-compatible.
	% \end{itemize}
	\end{theoremrep}
	\begin{proof}
Let 
$\Cmc$ be revision-compatible. 
Thus, 
there is a revision operator on $\Cmc$ satisfying \emph{\prudence, \vacuousexpansion} and \emph{\vacuousremoval}. 
To show eviction and reception-compatibility, it suffices to show that there exists 
\begin{description}
	\item[] (a) a reception operator on $\Cmc^+$ and 
	\item (b) an eviction operator $\mCon$  on $\Cmc^-$. 
\end{description}

We will construct such eviction and reception operators from a variation of $\rev$. 
To guarantee the postulate of \uniformity, let  
$\sel: \powerset(\powerset(\mUni)) \to \powerset(\mUni)$ be a function that, given a set $X$ of finitely representable sets, chooses exactly one of them,
that is 
$\sel(X) \in X$, if $X \neq \emptyset$. 
We define a second revision operator $\rev'$ from it, where 
$\modelsof{\rev'(\baseb, \Mplus, \Mminus)} = \mSet$ such that 
\begin{enumerate}[label=(\roman*)]
	\item if $\Mplus \subseteq \modelsof{\baseb}$, then 
	$$\mSet = \sel\big( \mathrm{min}_{\preceq_{\modelsof{\Bmc}}}(\chi(\M^+ , \M^- \cup (\mUni\setminus \modelsof{\baseb})) ) \big) $$
	
	\item if $\Mplus \not \subseteq \modelsof{\baseb}$, but $\Mminus \cap \modelsof{\baseb} = \emptyset$, then 
	$$ \mSet =\sel\big(  \mathrm{min}_{\preceq_{\modelsof{\Bmc}}}(\chi(\logsys,\M^+ \cup \modelsof{\baseb}, \M^-)) \big)$$
	
	\item otherwise, 
	$\mSet = \modelsof{\revline}$
\end{enumerate}

Note that $\rev'$ satisfies all three conditions from the symmetric-differential definition, so $\rev'$ is a symmetric-differential function.
Therefore, it satisfies all rationality postulates of revision. 
Let 
\begin{align*}
	\mExp(\baseb, \Mplus) &= \rev'(\baseb, \Mplus, \emptyset)\\
	\mCon(\baseb, \Mplus) &= \rev'(\baseb, \emptyset, \Mminus)        
\end{align*}
We will show that they are respectively reception and eviction operators. 
\begin{enumerate}
	\item (a) $\mExp$ is a reception operator. We show it satisfies all postulates from reception. 
	From \cref{lem:revtorcp}, $\mExp$ satisfies \success, persistence, and finite-temperance. 
	We proceed to prove  \uniformity. 
	Let us suppose that 
	\begin{align}
		\FRsups(\modelsof{\baseb} \cup \mSet,  \logsys)
		=\FRsups(\modelsof{\baseb'} \cup \mSet',  \logsys). \label{eq:hypBBiprime}    
	\end{align}
	
	By definition, $\mExp(\baseb, \mSet) = \rev'(\baseb, \mSet, \emptyset)$. 
	Either 
	(I) $\mSet \subseteq \modelsof{\baseb}$ or 
	(II) $\mSet \not \subseteq \modelsof{\baseb}$. 
	\begin{description}
		\item[] Case (I), $\modelsof{\baseb} \cup \mSet = \modelsof{\baseb}$ which is finitely representable. 
		Therefore, 
		\begin{align}
			\FRsups(\modelsof{\baseb} \cup \mSet,  \logsys) = \{ \modelsof{\baseb} \cup \mSet \}. \label{eq:bprimeeqFrB}    
		\end{align}

		So, 
		$$\mmin_{\preceq_{\modelsof{\baseb}}}(\dualcon(\mSet, \mUni \setminus \modelsof{\baseb}) = \{\modelsof{\baseb}\cup\mSet\}.$$ 
		Thus, from  the construction of $\rev'$, condition (i), 
		$$\modelsof{\mExp(\baseb, \mSet)} = \modelsof{\baseb}\cup\mSet$$
		From \eqref{eq:hypBBiprime} and \eqref{eq:bprimeeqFrB},
		$$ 
		\FRsups(\modelsof{\baseb'} \cup \mSet',  \logsys) = \{ \modelsof{\baseb} \cup \mSet \}
		$$
		From persistence, success and finite-temperance, 
		$$\modelsof{\mExp(\baseb', \mSet')} \in \FRsups(\modelsof{\baseb'} \cup \mSet',  \logsys).$$ 
		So, 
		$\modelsof{\mExp(\baseb', \mSet')} = \modelsof{\baseb} \cup \mSet$, 
		which implies
		$$\modelsof{\mExp(\baseb, \mSet)} = \modelsof{\mExp(\baseb', \mSet')}$$

		\item Case (II), $\mSet \not \subseteq \modelsof{\baseb}$. 
		Let 
		\begin{align}
			X = \modelsof{\mExp(\baseb, \mSet)}. \label{eq:xRec}   
		\end{align}
		
		So, from the construction of $\rev'$, item (ii),
		\begin{align}
			X = \sel\big(\mmin_{\preceq_{\modelsof{\baseb}}}(\dualcon(\mSet \cup \modelsof{\baseb}, \emptyset)) \big) \label{eq:xSel}        
		\end{align}

		From \cref{lem:minSymMinSups},
		
		$$\FRsups(\modelsof{\baseb} \cup \mSet) = \mathrm{min}_{\preceq_{\modelsof{\baseb}}}(\dualcon(\mSet \cup \modelsof{\baseb}, \emptyset)).$$
		
		This implies from \cref{eq:hypBBiprime} that 
		\begin{align}
			\FRsups(\modelsof{\baseb'} \cup \mSet') = \mathrm{min}_{\preceq_{\modelsof{\baseb}}}(\dualcon(\mSet \cup \modelsof{\baseb}, \emptyset)). 
			\label{eq:bprimeMprime}
		\end{align}
		
		Either $\mSet' \subseteq \modelsof{\baseb'}$ or not. 
		In the case that $\mSet' \subseteq 
		\modelsof{\baseb'}$, we get that 
		$\FRsups(\modelsof{\baseb'} \cup \mSet')$ is a singleton set $\{Y\}$. 
		From success, persistence and finite-retainment, it follows that 
		$\modelsof{\mExp(\baseb', \mSet')} = Y$. 
		This jointly with   \eqref{eq:bprimeMprime}, \eqref{eq:xSel} and \eqref{eq:xRec}, implies that $X = Y$. 
		Therefore, 
		$\modelsof{\mCon(\baseb, \mSet)} = \modelsof{(\mCon(\baseb, \mSet'))}$.
		
		In the case that $\mSet' \not \subseteq 
		\modelsof{\baseb'}$, we get from definition of $\rev'$, item (ii), that 
		$\modelsof{\mExp(\baseb', \mSet')} = \sel\big(  \mathrm{min}_{\preceq_{\modelsof{\Bmc}}}(\chi(\logsys,\M^+ \cup \modelsof{\baseb}, \M^-)) \big)$. 
		From \eqref{eq:xSel} and \eqref{eq:xRec} we have that 
		$\modelsof{\mExp(\baseb', \mSet')}$ which implies
		$\modelsof{\mCon(\baseb, \mSet)} = \modelsof{(\mCon(\baseb, \mSet'))}$.
		
	\end{description}

	\item (b) success follows from success from $\rev'$. 
	For inclusion, by definition $\mCon(\baseb, \Mminus) = \rev'(\baseb, \emptyset, \Mminus)$ which implies from the construction of $\rev'$, item (i), that 
	$\mCon(\baseb, \Mminus) \in \dualcon(\emptyset, \Mminus \cup( \mUni \setminus \modelsof{\baseb})$. 
	Which implies that no model can be gained. So, inclusion is satisfied. 
	\begin{description}
		\item[](\textbf{finite-retainment}) Let $\modelsof{\mCon(\baseb,\Mminus)} \subset \mSet \subseteq \modelsof{\baseb} \setminus \Mminus$. We will show that $\mSet$ is not finitely representable. 
		Let $$ X^- = \modelsof{\baseb} \setminus \modelsof{\mSet}.$$
		
		It follows from the hypothesis that  
		\begin{align*}
			&\mSet \subseteq \modelsof{\baseb}, \mbox{ and } \\
			&(\Mminus \cap \modelsof{\baseb}) \cap \mSet = \emptyset.
		\end{align*}
		Therefore, 
		\begin{align}
			\modelsof{\baseb} \setminus X^- &= \mSet \label{eq:dbcompl} \\
			\Mminus \cap \modelsof{\baseb} &\subseteq \modelsof{\baseb} \setminus \modelsof{\mSet} \nonumber\\
			\Mminus \cap \modelsof{\baseb} &\subseteq X^-\label{eq:p1_cd}
		\end{align}
		It follows from the hypothesis, 
		$$\modelsof{\mCon(\baseb,\Mminus)} \subset \mSet \subseteq \modelsof{\baseb}, $$
		which implies 
		\begin{align}
			\modelsof{\baseb} \setminus \modelsof{\mSet} &\subseteq \modelsof{\baseb} \setminus \modelsof{\mCon(\baseb, \Mminus)} \nonumber\\
			X^- &\subseteq \modelsof{\baseb} \setminus \modelsof{\mCon(\baseb, \Mminus)} \label{eq:p2_cd}
		\end{align}
		
		Note that 
		$$ \ $$
		
		By the definition of $\mCon$, 
		$\mCon(\baseb, \Mminus) = \rev'(\baseb, \emptyset, \Mminus)$. 
		So, $\Mplus = \emptyset$. 
		Let  $X^+ = \emptyset$. 
		So, 
		$$(\modelsof{\baseb} \setminus X^-) \cup X^+ = (\modelsof{\baseb} \setminus X^-), $$
		which from \eqref{eq:dbcompl} implies
		$$(\modelsof{\baseb} \setminus X^-) \cup X^+ = \mSet.$$
		Therefore, as $\modelsof{\mCon(\baseb,\Mminus)} \subset \mSet$, we get that 
		$\mSet \neq \modelsof{\mCon(\baseb,\Mminus)}$.
		So, 
		\begin{align}
			(\modelsof{\baseb} \setminus X^-) \cup X^+ \neq \modelsof{\mCon(\baseb,\Mminus)}\label{eq:cond3_pru}
		\end{align}
		From \eqref{eq:cpat1c1} and \eqref{eq:p2_cd}, condition (1) of \prudence is satisfied. 
		As $\Mplus = X^+ = \emptyset$, condition (2) of \prudence is also satisfied. 
		While condition (3) of \prudence follows from \eqref{eq:cond3_pru}.
		Therefore, it follows from \prudence that 
		$\mSet$ is not finitely representable.
		
		\item \uniformity . The proof is analogous to the proof of \uniformity for case (a). 
		The eviction reduces to case (i) of $\rev'$. The selection function guarantees that the eviction result coincides whenever the closest subsets not containing $\Mminus$ also coincide. So uniformity is satisfied. 
		%$\dualcon(\baseb, \M)$$$
		% Whenever a result were to be in $\FRsups$, in the proof for eviction it will be in $\FRsubs$  and the results will coincide.     
	\end{description}
	
\end{enumerate}

\end{proof}

\end{toappendix}

% As \cref{ex} illustrates, 

% In the special case that every set of interpretations is finitely representation, then trivially symmetric-differential revisions functions and then 

% The symmetric-differential operators capture precisely the class of all and only revision operators satisfying all rationality postulates. 
%not only satisfy all rationality postulates of revision, but also 
%opero actually, operations satisfies one further not only satisfies the rationaity po

\begin{toappendix}

\begin{proposition}\label{lem:from_evc_to_rev}
Let $\Cmc$ be a \decomp revision-compatible class of models.  
If $\Evc$ is an eviction operator on $\Cmc^-$, then  there is a revision operation $\rev$ on $\Cmc$ satisfying all postulates such that 
$\Evc(\baseb, \M) = \rev(\baseb,\emptyset, \Mminus)$.
\end{proposition}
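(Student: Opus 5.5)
The plan is to build $\rev$ by cases, following the three-case template of \cref{def:symdefrev}, but hard-wiring the given eviction into the case $\Mplus=\emptyset$. Since $\Cmc$ is revision-compatible, \cref{th:recompatb_sym} guarantees that every minimum set we select from is non-empty. Fix any choice function $\sel$ on non-empty families of finitely representable sets. Define $\rev(\baseb,\Mplus,\Mminus)$ as follows:
\begin{itemize}
\item if $\Mplus=\emptyset$, set $\rev(\baseb,\emptyset,\Mminus):=\Evc(\baseb,\Mminus)$;
\item otherwise, return a finite base whose model set is $\sel$ applied to the appropriate $\mmin_{\preceq_{\modelsof{\baseb}}}(\dualcon(\cdot,\cdot))$ family of cases (i)--(iii) of \cref{def:symdefrev}.
\end{itemize}
By construction $\Evc(\baseb,\Mminus)=\rev(\baseb,\emptyset,\Mminus)$. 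Because $\Cmc$ is \decomp, $(\emptyset,\Mminus)\in\Cmc$ whenever $\Mminus\in\Cmc^-$, so $\rev$ is defined on all of $\Cmc$.

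By \cref{th:symrep}, it suffices to show that $\rev$ is a symmetric-differential revision operator. For $\Mplus\neq\emptyset$ this holds by construction. For $\Mplus=\emptyset$ we are in case (i), since $\emptyset\subseteq\modelsof{\baseb}$. So the whole task reduces to showing
\[
\modelsof{\Evc(\baseb,\Mminus)}\in \mmin_{\preceq_{\modelsof{\baseb}}}\bigl(\dualcon(\emptyset,\Mminus\cup(\mUni\setminus\modelsof{\baseb}))\bigr).
\]
Here we read ``eviction operator'' as a rational one, satisfying success, inclusion and finite retainment; a bare success-only eviction cannot work.

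The key step is the eviction analogue of \cref{lem:minSymMinSups}:
\[
\FRsubs(\modelsof{\baseb}\setminus\Mminus,\logsys)=\mmin_{\preceq_{\modelsof{\baseb}}}\bigl(\dualcon(\emptyset,\Mminus\cup(\mUni\setminus\modelsof{\baseb}))\bigr).
\]
Members $Y$ of $\dualcon(\emptyset,\Mminus\cup(\mUni\setminus\modelsof{\baseb}))$ are exactly the finitely representable $Y\subseteq\modelsof{\baseb}\setminus\Mminus$. For such $Y$ we have $\modelsof{\baseb}\oplus Y=\modelsof{\baseb}\setminus Y$. Hence $Y\preceq Y'$ iff $\modelsof{\baseb}\setminus Y\subseteq\modelsof{\baseb}\setminus Y'$ iff $Y'\subseteq Y$. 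So $\preceq$-minimality coincides with $\subseteq$-maximality, and both inclusions follow as in \cref{lem:minSymMinSups}.

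Next, success and inclusion of $\Evc$ put $\modelsof{\Evc(\baseb,\Mminus)}$ inside $\modelsof{\baseb}\setminus\Mminus$, and it is finitely representable because the output is a finite base. Finite retainment then says that no finitely representable $\mSet'$ sits strictly between it and $\modelsof{\baseb}\setminus\Mminus$. This is exactly membership in $\FRsubs$, and the displayed identity finishes the argument.

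The main obstacle is this identification of the eviction output with an element of $\FRsubs$, together with the hypothesis issue already noted: the statement must be read for rational eviction operators. Uniformity plays no role here, since the choice in the other cases is free. A secondary check is that well-definedness needs no consistency across cases, because the case split depends only on $(\baseb,\Mplus,\Mminus)$.
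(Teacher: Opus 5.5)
Your proof is correct, but it takes a different route from the paper.

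\textbf{The paper's route.} The paper fixes an arbitrary operator $\rev'$ witnessing revision-compatibility. It sets $\rev(\baseb,\Mplus,\Mminus)=\Evc(\baseb,\Mminus)$ when $\Mplus=\emptyset$ and $\rev'(\baseb,\Mplus,\Mminus)$ otherwise. It then checks the four revision postulates directly on the branch $\Mplus=\emptyset$:
\begin{itemize}
\item success comes from success of $\Evc$;
\item \vacuousexpansion comes from inclusion;
\item \vacuousremoval comes from the fact that $\Evc$ returns $\modelsof{\baseb}$ when $\Mminus\cap\modelsof{\baseb}=\emptyset$;
\item \prudence is obtained by first forcing $X^+=\emptyset$ and then applying finite retainment to $\modelsof{\baseb}\setminus X^-$.
\end{itemize}

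\textbf{Your route.} You go through the representation theorem \cref{th:symrep}. You identify the case-(i) family $\mmin_{\preceq_{\modelsof{\baseb}}}(\dualcon(\emptyset,\Mminus\cup(\mUni\setminus\modelsof{\baseb})))$ with $\FRsubs(\modelsof{\baseb}\setminus\Mminus,\logsys)$. This works because $\preceq_{\modelsof{\baseb}}$ is just reverse inclusion on subsets of $\modelsof{\baseb}$, so the two extremal sets coincide by definition. You then observe that success, inclusion and finite retainment say exactly that $\modelsof{\Evc(\baseb,\Mminus)}$ lies in this set.

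\textbf{What each buys.} Your argument is more conceptual. It shows that a rational eviction is literally case (i) of a symmetric-differential operator with $\Mplus=\emptyset$. It also states explicitly that inclusion and finite retainment of $\Evc$ are required. The paper's proof uses these too (its vacuity step in \vacuousremoval is derivable only from inclusion plus finite retainment) but never says so. The paper's argument, in turn, is self-contained and does not lean on the full characterisation theorem. It also makes visible which eviction postulate yields which revision postulate.

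\textbf{One correction.} \cref{th:recompatb_sym} only guarantees non-emptiness of the case-(iii) family $\mmin_{\preceq_{\modelsof{\baseb}}}(\dualcon(\Mplus,\Mminus))$. For the case-(i) and case-(ii) families on the branch $\Mplus\neq\emptyset$ you need something else. Either apply \cref{th:rational_symmetric} to an operator witnessing revision-compatibility, since its output lies in the relevant family. Or simply reuse that witness on the branch $\Mplus\neq\emptyset$, as the paper does.

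A minor point: decomposability is not what makes $\rev$ defined on all of $\Cmc$. That holds automatically, since $\Mminus\in\Cmc^-$ whenever $(\emptyset,\Mminus)\in\Cmc$. What decomposability does is make the identity $\Evc(\baseb,\Mminus)=\rev(\baseb,\emptyset,\Mminus)$ meaningful for every $\Mminus\in\Cmc^-$.
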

\begin{proof}
As $\Cmc$ is revision compatible, there is a revision operation  $\rev'$. Let us fix such an operation.  
We define the following operation
\begin{align*}
	\rev(\baseb, \Mplus, \Mminus) &=\left\{ 
	\begin{array}{cc}
		\Evc(\baseb, \Mminus) &  \mbox{ if } \Mplus = \emptyset \\
		\rev'(\baseb, \Mplus, \Mminus)  & \mbox{otherwise}
	\end{array}
	\right.
\end{align*}

By definition, $\Evc(\baseb, \Mminus) = \rev(\baseb, \emptyset, \Mminus)$. 
To conclude the proof, we have to show that $\rev$ is a revision function. 
For $\Mplus \neq \emptyset$, all four postulates are trivially satisfied, as in such a case $\rev(\baseb, \Mplus, \Mminus) = \rev'(\baseb, \Mplus, \Mminus)$, which by hypothesis satisfies all four postulates. 
So we proceed with the case that $M^+ = \emptyset$. 
\begin{itemize}
	\item \textbf{success}: follows from the success of $\rev'$ and $\Evc$.
	
	\item \vacuousexpansion.
	Let $\Mplus \subseteq \modelsof{\baseb}$. If $\Mplus \neq \emptyset$, then \vacuousexpansion follows from \vacuousexpansion of $\rev'$. 
	If $\Mplus = \emptyset$, then 
	$\revline = \Evc(\baseb, \Mminus)$, which from inclusion implies 
	$\modelsof{\Evc(\baseb,\Mminus)} \subseteq \modelsof{\baseb}$. Therefore, 
	$\modelsof{\revline} \subseteq \modelsof{\baseb}$. 
	
	\item \vacuousremoval: 
	Let $\Mminus \cap \modelsof{\baseb} = \emptyset$. 
	If $\Mplus \neq \emptyset$, then \vacuousremoval follows from \vacuousremoval of $\rev'$. 
	If $\Mplus = \emptyset$, 
	then $\revline = \Evc(\baseb, \Mminus)$. Thus, as $\Mminus \cap \modelsof{\baseb} = \emptyset$, we have that 
	$\Evc(\baseb, \Mminus) = \modelsof{\baseb}$, 
	therefore $\modelsof{\revline} = \modelsof{\baseb}$. 
	So, $\modelsof{\baseb} \subseteq \modelsof{\revline}$.
	
	\item \prudence:  
	Let $X^-$ and $X^+$ be sets that satisfy conditions (1) to (3) of finite temperance. 
	We will show that 
	$(\modelsof{\baseb}\setminus X^-) \cup X^+ \not \in \FRsets_{\logsys}$. 
	From inclusion, 
	\begin{align}
		\modelsof{\Evc(\baseb, \Mminus)} \subseteq \modelsof{\baseb}, \label{eq:mode_evc_min_baseb}
	\end{align}        
	and from condition (1) $X^- \subseteq \modelsof{\baseb}$ and $\modelsof{\Evc(\baseb, \Mminus)} \cap X^- = \emptyset$. 
	Therefore, 
	\begin{align}
		\modelsof{\Evc(\baseb, \Mminus)}  \subseteq \modelsof{\baseb} \setminus X^- \label{eq:evc_minus_emp}
	\end{align}
	
	By definition, $\rev(\baseb, \emptyset, \Mminus) = \Evc(\baseb, \Mminus)$, which 
	from \eqref{eq:evc_minus_emp} that 
	\begin{align}
		\modelsof{\rev(\baseb, \Mminus)}  \subseteq \modelsof{\baseb} \setminus X^- \label{eq:evc_minus_emp_rev}
	\end{align}
	and from        
	\eqref{eq:mode_evc_min_baseb},  that 
	\begin{align*}
		\modelsof{\rev(\baseb, \emptyset, \Mminus)} \subseteq \modelsof{\baseb}. %\label{eq:rev_subs_bases_b}
	\end{align*}
	%which jointly with \eqref{eq:evc_minus_emp}  implies that 
	Therefore, 
	\begin{align*}
		\modelsof{\rev(\baseb, \Mminus)} \setminus \modelsof{\baseb} = \emptyset
	\end{align*}        
	
	% \begin{align}
		%     \modelsof{\Evc(\baseb, \Mminus)} & \subseteq \modelsof{\baseb}\\
		%     \modelsof{\rev(\baseb, \emptyset, \Mminus) }\setminus \modelsof{\baseb} = \emptyset
		% \end{align}
	which from condition (2) of finite temperance-retainment implies that 
	$$ X^+ = \emptyset . $$ 
	Thus, from  condition (3) of finite temperance-retainment, it follows that
	$ \modelsof{\baseb} \setminus X^- \neq \modelsof{\rev(\baseb, \emptyset, \Mminus)},$
	which from \eqref{eq:evc_minus_emp_rev} implies that 
	\begin{align}
		\modelsof{\rev(\baseb, \emptyset, \Mminus)} \subset \modelsof{\baseb} \setminus X^-. \label{eq:cond_finrep_retain_1}    
	\end{align}

	From condition (1) of finite temperance-retainment, $\modelsof{\baseb}\cap\Mminus \subseteq X^- \subseteq \modelsof{\baseb}$. 
	Therefore, 
	$$ \modelsof{\baseb} \setminus X^- \subseteq \modelsof{\baseb} \setminus (\modelsof{\baseb} \cap \Mminus)$$
	Thus, as $\modelsof{\baseb} \setminus (\modelsof{\baseb} \cap \Mminus) = \modelsof{\baseb} \setminus \Mminus$, we get
	\begin{align}
		\modelsof{\baseb} \setminus X^- \subseteq \modelsof{\baseb} \setminus \Mminus. \label{eq:cond_finrep_retain_2}    
	\end{align}
	
	Thus, from \eqref{eq:cond_finrep_retain_1} and \eqref{eq:cond_finrep_retain_2}
	$$ \modelsof{\rev(\baseb, \emptyset, \Mminus)} \subset \modelsof{\baseb} \setminus X^- \subseteq \modelsof{\baseb} \setminus \Mminus $$
	By definition, $\rev(\baseb, \emptyset, \Mminus) = \Evc(\baseb, \Mminus)$, which implies that 
	$$ \modelsof{\Evc(\baseb, \Mminus)} \subset \modelsof{\baseb} \setminus X^-1 \subseteq \modelsof{\baseb} \setminus \Mminus $$
	Thus, from finite-retainment, we get 
	$\modelsof{\baseb} \setminus X^- \not\in \FRsets_{\logsys}$. 
	
	Thus,as $X^+ =\emptyset$, we get 
	$(\modelsof{\baseb} \setminus X^-) \cup X^+ \not\in \FRsets$. 
	%So, fin ... is satisfied. 
	
\end{itemize}

\end{proof}

\begin{proposition}\label{prop:rcp_to_rev}
Let $\Cmc$ be a \decomp revision-compatible class of models. For If $\Rcp$ is a reception operator on $\Cmc^+$, then  there is a revision operation $\rev$ on $\Cmc$ satisfying all postulates such that 
$\Rcp(\baseb, \M) = \rev(\baseb, \M, \emptyset)$. 
\end{proposition}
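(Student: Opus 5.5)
Mirroring the proof of \cref{lem:from_evc_to_rev}, I fix, by revision-compatibility, a revision operator $\rev'$ on $\Cmc$ satisfying \success, \vacuousexpansion, \vacuousremoval and \prudence, and define
\begin{align*}
\rev(\baseb, \Mplus, \Mminus) &= \left\{
\begin{array}{cc}
\Rcp(\baseb, \Mplus) & \mbox{ if } \Mminus = \emptyset \\
\rev'(\baseb, \Mplus, \Mminus) & \mbox{otherwise.}
\end{array}
\right.
\end{align*}
Decomposability guarantees $(\Mplus,\emptyset)\in\Cmc$, so $\Mplus\in\Cmc^+$ and $\Rcp(\baseb,\Mplus)$ is defined. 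Then $\Rcp(\baseb,\M)=\rev(\baseb,\M,\emptyset)$ holds by construction. For $\Mminus\neq\emptyset$ every postulate is inherited from $\rev'$, so only the case $\Mminus=\emptyset$ needs checking. I take $\Rcp$ to be rational, i.e.\ satisfying success, persistence and finite temperance; this is the reading under which the statement is meaningful, as for $\Evc$ in \cref{lem:from_evc_to_rev}.

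\success is immediate: $\Mplus\subseteq\modelsof{\Rcp(\baseb,\Mplus)}$ by success of $\Rcp$, and $\Mminus=\emptyset$ is trivially disjoint from everything. \vacuousremoval follows from persistence. \vacuousexpansion needs an argument. If $\Mplus\subseteq\modelsof{\baseb}$, then $\modelsof{\baseb}\cup\Mplus=\modelsof{\baseb}$ is finitely representable, and persistence gives $\modelsof{\baseb}\subseteq\modelsof{\Rcp(\baseb,\Mplus)}$. If this inclusion were strict, $\mSet'=\modelsof{\baseb}$ would satisfy $\modelsof{\baseb}\cup\Mplus\subseteq\mSet'\subset\modelsof{\Rcp(\baseb,\Mplus)}$ and $\mSet'\in\FRsets(\logsys)$, contradicting finite temperance. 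Hence $\modelsof{\Rcp(\baseb,\Mplus)}=\modelsof{\baseb}$.

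\prudence is the main step. Let $X^+,X^-$ satisfy conditions (1)--(3), and write $R=\modelsof{\Rcp(\baseb,\Mplus)}$. By persistence $\modelsof{\baseb}\subseteq R$, so $\modelsof{\baseb}\setminus R=\emptyset$, and condition (1) forces $X^-=\emptyset$. The candidate is then $Y=\modelsof{\baseb}\cup X^+$. By condition (2), $\Mplus\setminus\modelsof{\baseb}\subseteq X^+\subseteq R\setminus\modelsof{\baseb}$, hence $\modelsof{\baseb}\cup\Mplus\subseteq Y\subseteq R$. Condition (3) gives $Y\neq R$, so $Y\subset R$. Finite temperance of $\Rcp$ then yields $Y\notin\FRsets(\logsys)$, which is condition (4).

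I expect the main obstacle to be purely definitional. If $\Rcp$ were only assumed to satisfy success, \vacuousexpansion and \prudence could fail, so the argument must rely on persistence and finite temperance. The set computations themselves are routine.
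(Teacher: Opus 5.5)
Your proposal is correct and follows the paper's proof essentially step for step. Both use the same piecewise definition through a fixed rational $\rev'$, obtain success and \vacuousremoval from success and persistence of $\Rcp$, and establish \prudence by using persistence to force $X^-=\emptyset$ and then applying finite temperance to $\modelsof{\baseb}\cup X^+$.

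The one difference is in \vacuousexpansion. The paper cites a ``vacuity'' postulate of reception that it never actually lists among the reception postulates. You instead derive $\modelsof{\Rcp(\baseb,\Mplus)}=\modelsof{\baseb}$ from persistence and finite temperance, which is a small but real tightening.
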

\begin{proof}
As $\Cmc$ is revision-compatible, there is a revision operation  $\rev'$. Let us fix such an operation.  
We define the following operation
\begin{align*}
	\rev(\baseb, \Mplus, \Mminus) &=\left\{ 
	\begin{array}{cc}
		\Rcp(\baseb, \Mplus) &  \mbox{ if } \Mminus = \emptyset \\
		\rev'(\baseb, \Mplus, \Mminus)  & \mbox{otherwise}
	\end{array}
	\right.
\end{align*}

By definition, $\Rcp(\baseb, \Mplus) = \rev(\baseb, \Mplus, \emptyset)$. 
To conclude the proof, we have to show that $\rev$ is a revision function. 
For $\Mminus \neq \emptyset$, all four postulates are trivially satisfied, as in such a case $\rev(\baseb, \Mplus, \Mminus) = \rev'(\baseb, \Mplus, \Mminus)$, which by hypothesis satisfies all four postulates. 
So we proceed with the case that $M^- = \emptyset$. 
\begin{itemize}
	\item \textbf{success}: trivially $\Mminus \cap \modelsof{\rev(\baseb,\Mplus, \Mminus)} = \emptyset$. From the success of reception, we have that $\Mplus \subset \modelsof \Rcp(\baseb, \Mplus)$ which implies that 
	$\Mplus \subset \modelsof \rev(\baseb, \Mplus, \emptyset)$. 
	
	\item \vacuousexpansion: let $\Mplus \subseteq \modelsof{\baseb}$. Thus, from vacuity of reception, \\
	$\modelsof{\Rcp(\baseb, \Mplus)} = \modelsof{\baseb}$. Therefore, $\rev(\baseb, \Mplus, \emptyset) = \modelsof{\baseb}$, which implies $\rev(\baseb, \Mplus, \emptyset) \subseteq \modelsof{\baseb}$ 
	
	\item \vacuousremoval: From persistence, $\modelsof{\baseb} \subset \modelsof{\Rcp(\baseb, \Mplus)}$. 
	Therefore, $\modelsof{\baseb} \subseteq \rev(\baseb, \Mplus, \emptyset)$. 
	
	\item \prudence: 
	Let us suppose that finite temperance is not satisfied for contradiction purposes. 
	Thus, there are $X^+$ and $X^-$ satisfying conditions 1 to 3 of finite-temperance, but 
	\begin{align}
		(\modelsof{\baseb} \setminus X^-) \cup X^+ \in \FRsets(\logsys). \label{eq:xmin_emp_Fr}
	\end{align}

	From persistence,  
	$\modelsof{\baseb} \subseteq \Rcp(\baseb, \Mplus)$ which implies that 
	\begin{align}
		\modelsof{\baseb} \subseteq \rev(\baseb, \Mplus, \emptyset)    \label{eq:pers_model_subs_rev}
	\end{align}
	From condition 1, we have that $X^- \subseteq \modelsof{\baseb} \setminus\modelsof{\rev(\baseb, \Mplus, \emptyset)}$. 
	Therefore, $X^- = \emptyset$. 
	Thus, 
	$$ (\modelsof{\baseb} \setminus X^-) \cup X^+ = \modelsof{\baseb} \cup X^+$$
	Moreover, as $X^- = \emptyset$, it follows from \eqref{eq:xmin_emp_Fr},
	\begin{align}
		\modelsof{\baseb} \cup X^+ \in \FRsets(\logsys) \label{eq:fr_in_xmin}
	\end{align}

	From condition 2, 
	$X^+ \subseteq \modelsof{\rev(\baseb, \Mplus, \emptyset)}$. 
	Thus, 
	\begin{align}
		X^+ \cup \modelsof{\baseb} & \subseteq \modelsof{\rev(\baseb, \Mplus, \emptyset)} \cup \modelsof{\baseb, \Mplus, \emptyset} \nonumber \\
		& \subseteq \modelsof{\rev(\baseb, \Mplus, \emptyset)}\tag{from \eqref{eq:pers_model_subs_rev}}
	\end{align}
	
	From condition 3, $X^+ \cup \modelsof{\baseb} \neq \modelsof{\rev(\baseb, \Mplus, \emptyset)}$, which from above implies 
	$X^+ \cup \modelsof{\baseb} \subset \modelsof{\rev(\baseb, \Mplus, \emptyset)}$. 
	Thus, 
	$$ X^+ \cup \modelsof{\baseb} \subset \modelsof{\Rcp(\baseb, \Mplus}$$
	From condition 2, $\Mplus \setminus \modelsof{\baseb} \subseteq X^+$. 
	Thus, 
	$$\modelsof{\baseb} \cup \Mplus \subseteq X^+ \cup \modelsof{\baseb} \subset \modelsof{\Rcp(\baseb, \Mplus)}.$$
	Therefore, from finite temperance of reception, we get 
	$X^+ \cup \modelsof{\baseb} \not \in \FRsets(\logsys)$ which contradicts \eqref{eq:fr_in_xmin}.
	
\end{itemize}
\end{proof}

\end{toappendix}

\begin{toappendix}

\begin{theorem}\label{thm:posrev_posrev_two}
Let $\Cmc$ be a \decomp revision-compatible class of models,  

\begin{enumerate}
	\item if  $\mCon$ is a eviction operator on $\Cmc^-$, then there is a revision operator $\rev$ on $\Cmc$ satisfying all rationality postulates such that 
	$ \mCon(\baseb, \Mminus) =  \rev(\baseb, \emptyset, \Mminus) $,
	\item if  $\mExp$ is a reception operator on $\Cmc^+$, then there is a revision operator $\rev$ on $\Cmc$ satisfying all rationality postulates such that 
	$ \mExp(\baseb, \Mplus) =  \rev(\baseb, \Mplus, \emptyset) $. 
\end{enumerate}

% Let $\rev$ be a revision operator on a binary class of models $\Cmc$. 
% If $\rev$ satisfies \emph{\vacuousexpansion, \vacuousremoval} and \prudence, then 
% there are eviction and recepto
\end{theorem}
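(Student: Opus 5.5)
This theorem is essentially a packaging of the two propositions that immediately precede it, so the plan is to derive each item directly from them. Throughout, $\Cmc$ is a \decomp revision-compatible class. Revision-compatibility gives us a fixed revision operator $\rev'$ on $\Cmc$ satisfying \success, \vacuousexpansion, \vacuousremoval and \prudence. Decomposability guarantees that pairs of the form $(\emptyset,\Mminus)$ and $(\Mplus,\emptyset)$ lie in $\Cmc$, so both the eviction and the reception instances are legitimate inputs for a revision operator on $\Cmc$.

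For item~1, we take an eviction operator $\mCon$ on $\Cmc^-$ and apply \cref{lem:from_evc_to_rev}. Concretely, we define $\rev(\baseb,\Mplus,\Mminus)=\mCon(\baseb,\Mminus)$ when $\Mplus=\emptyset$, and $\rev'(\baseb,\Mplus,\Mminus)$ otherwise. On inputs with $\Mplus\neq\emptyset$ every postulate is inherited from $\rev'$. On inputs with $\Mplus=\emptyset$ the postulates come from eviction:
\begin{itemize}
\item \success follows from eviction's success;
\item \vacuousexpansion follows from inclusion;
\item \vacuousremoval follows from the fact that a rational eviction leaves $\modelsof{\baseb}$ unchanged when $\Mminus\cap\modelsof{\baseb}=\emptyset$;
\item \prudence: conditions (1)--(2) force $X^+=\emptyset$ (since the output is contained in $\modelsof{\baseb}$), and condition (3) then turns into the hypothesis of finite retainment.
\end{itemize}
By construction, $\mCon(\baseb,\Mminus)=\rev(\baseb,\emptyset,\Mminus)$.

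Item~2 is symmetric and uses \cref{prop:rcp_to_rev}. We now branch on $\Mminus=\emptyset$ and set $\rev(\baseb,\Mplus,\emptyset)=\mExp(\baseb,\Mplus)$. Persistence yields \vacuousremoval and also forces $X^-=\emptyset$ in \prudence. Conditions (2)--(3) then give
\[
\modelsof{\baseb}\cup\Mplus\subseteq\modelsof{\baseb}\cup X^+\subset\modelsof{\mExp(\baseb,\Mplus)},
\]
so finite temperance shows that $\modelsof{\baseb}\cup X^+$ is not finitely representable.

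The main obstacle is the use of \vacuousremoval in item~1 and \vacuousexpansion in item~2. These need, respectively, that eviction with a disjoint $\Mminus$ returns $\modelsof{\baseb}$, and that reception with $\Mplus\subseteq\modelsof{\baseb}$ returns $\modelsof{\baseb}$. Neither fact is a listed postulate, so I would derive them from finite retainment and finite temperance. In the eviction case, $\modelsof{\baseb}\setminus\Mminus=\modelsof{\baseb}$ is finitely representable, so finite retainment together with inclusion forces the output to be $\modelsof{\baseb}$. The reception case is dual: $\modelsof{\baseb}\cup\Mplus=\modelsof{\baseb}$ is finitely representable, so finite temperance together with persistence forces the output to be $\modelsof{\baseb}$. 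This argument works provided ``eviction operator'' and ``reception operator'' in the statement are read as rational operators, and I would make that reading explicit. With these two facts in place, the theorem follows from \cref{lem:from_evc_to_rev} and \cref{prop:rcp_to_rev}.
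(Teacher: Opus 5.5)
Your proposal is correct and follows the paper's proof: item~1 is discharged by \cref{lem:from_evc_to_rev} and item~2 by \cref{prop:rcp_to_rev}. Both use the same piecewise definition (branching on $\Mplus=\emptyset$, resp.\ $\Mminus=\emptyset$, and otherwise falling back to a fixed $\rev'$ that satisfies all postulates) and the same postulate-by-postulate checks, including forcing $X^+=\emptyset$, resp.\ $X^-=\emptyset$, in \prudence. Your two additions fill steps the paper only asserts. You derive the vacuity facts from inclusion plus finite retainment and from persistence plus finite temperance, whereas the paper invokes a ``vacuity'' postulate for reception that is not among the listed ones. You also make explicit that ``eviction/reception operator'' must be read as rational, since the paper defines such operators by \success alone.
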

\begin{proof}
Item 1 follows from \cref{lem:from_evc_to_rev}, while item 2 follows from \cref{prop:rcp_to_rev}
\end{proof}

\end{toappendix}

\begin{theoremrep}\label{thm:posrev}
Let $\Cmc$ be a \decomp revision-compatible binary class of models.
%, then 
The following hold for \Cmc.   
\begin{enumerate}
\item 
$\Cmc^+$ is reception-compatible and 
$\Cmc^-$ is eviction-compatible. 
\item %If $\Cmc$ is revision-compatible 
If  $\mCon$ is a eviction operator on $\Cmc^-$, then there is a revision operator $\rev$ on $\Cmc$ satisfying all rationality postulates such that 
$ \mCon(\baseb, \Mminus) =  \rev(\baseb, \emptyset, \Mminus) $.
\item If  $\mExp$ is a reception operator on $\Cmc^+$, then there is a revision operator $\rev$ on $\Cmc$ satisfying all rationality postulates such that 
$ \mExp(\baseb, \Mplus) =  \rev(\baseb, \Mplus, \emptyset) $. 
\end{enumerate}

\end{theoremrep}
\begin{proof}
Follows from \cref{thm:posrev_posrev_two} and \cref{th:revToEvcExp}.

\end{proof}

From \cref{thm:posrev}, eviction and reception are special cases of revision, whereas revision can only be performed in classes of models compatible with both reception and eviction. 
This connection between revision with eviction and reception allows to translate (in)compatibility results from eviction and reception to revision, as we see 
in \cref{sec:revdl}.
\section{Revision: DL Concepts}\label{sec:revdl}
Here, we briefly consider the revision of DL concepts. It follows from \cref{thm:posrev} and the results in \cref{results} that neither $\logsys(\EL_{\bot \text{concepts}})$ nor $\logsys(\ALC_{\text{concepts}})$ are, in general, revision-compatible. We establish that these satisfaction systems are also not revision-compatible when we restrict to 
%the binary class of 
finite tree-shaped pointed interpretations.
%, with or without further restricting the sets of models or the signature to be finite. % focusing in \ALC.
\begin{theoremrep}\label{thm:alcposrevision}
	$\logsys(\EL_{\bot \text{concepts}})$ and $\logsys(\ALC_{\text{concepts}})$ are not revision-compatible in the binary class of finite tree-shaped pointed interpretations. This also holds if we restrict to finite sets of models and if we restrict to  a finite signature.  
\end{theoremrep}
\begin{proof}
	Consider the case in which $(\Imc,d
	)$ and $(\Jmc,e
	)$ are two different but bisimilar finite tree-shaped pointed interpretations over a finite signature.  
	Since both \EL and \ALC are invariant under bisimilar pointed interpretations, there is no 
	$Y \in \FRsets(\logsys(\EL_{\bot \text{concepts}}))$ and no $Y \in \FRsets(\logsys(\ALC_{ \text{concepts}}))$ such that
	$\{(\Imc,d
	) \}\subseteq Y \mbox{ and } \{(\Jmc,e
	) \} \cap Y = \emptyset$.
	%    $\chi(\logsys,\M^+, \M^-) = \{ Y \in \FRsets(\logsys) \mid  \M^+ \subseteq Y \mbox{ and } \M^- \cap Y = \emptyset \} $
	%   So,
	%    $\min_{\preceq_{\modelsof{\baseb}}}(\chi(\logsys(\ALC_{\text{concepts}}), \{(\Imc,d
	%) \}, \{(\Jmc,e
	%) \})) = \emptyset$. 
	Since we cannot satisfy \textbf{success}, this means that $\logsys(\EL_{\text{concepts}})$ and $\logsys(\ALC_{\text{concepts}})$ are not  revision-compatible in the class of finite tree-shaped pointed interpretations for finite sets of models and finite signature (\cref{def:revisionmod}).  
	%\todo{add ref to def of revision, relevant proposition }    
\end{proof}
So we restrict the binary class of models that we consider even further.
We consider the binary class of finite tree-shaped pointed interpretations for finite sets of models \emph{union their closure under bisimulation} over a finite signature. We argue that this class is revision-compatible.
We say that two sets of pointed interpretations are \emph{bisimulation disjoint} iff
there is no pointed interpretation in one of the sets that is bisimilar to a pointed interpretation in the other set. 

\begin{theoremrep}\label{thm:alcposrevision}
	$\logsys(\ALC_{\text{concepts}})$ is    revision-compatible in the binary class  of sets of pointed interpretations which are the closure under bisimulation of finite sets of finite tree-shaped pointed interpretations  over a (unique) finite signature.  
\end{theoremrep}
\begin{proof}
	Let $\{(\Imc_1,d_1),\ldots,$ $(\Imc_n,d_n)\}$ 
	and $\{(\Jmc_1,e_1),\ldots,$ $(\Jmc_m,e_m)\}$ be bisimulation-disjoint finite sets of finite tree-shaped pointed interpretations over a finite signature $\Sigma$ and let
	$\mSet^+$ and $\mSet^-$ be their closure under bisimulation. Let $D_i$ be the $\EL_\bot$ concept with $\Imc_{D_i}$ being isomorphic to $\Imc_i$ and let $E_j$ be the $\EL_\bot$ concept with $\Jmc_{E_j}$ being isomorphic to $\Jmc_j$.
	Let $C$ be an \ALC concept over $\Sigma$.
	%    We need to show that
	%\begin{enumerate}
	%       \item $\modelsof{ (\bigsqcup^n_{i=1}  (D_i)^\dagger \sqcup C)\sqcap \bigsqcap^m_{j=1} \neg (E_j)^\dagger}\in \mathrm{min}_{\preceq_{\modelsof{C}}}(\chi(\logsys(\ALC_{\text{concepts}}),\M^+ , $ $\M^- $ $\cup (\mUni\setminus \modelsof{C})) ) $, if $\Mplus \subseteq \modelsof{C}$;
	%
	%   \item  $\modelsof{ (\bigsqcup^n_{i=1}   (D_i)^\dagger \sqcup C)\sqcap \bigsqcap^m_{j=1} \neg (E_j)^\dagger}\in     \mathrm{min}_{\preceq_{\modelsof{C}}}(\chi(\logsys(\ALC_{\text{concepts}}),\M^+ $ $\cup \ $ \\  $ $ $\modelsof{C}, \M^-))$, if  $\Mplus \not \subseteq \modelsof{C}$, but  $\Mminus \cap \modelsof{C} = \emptyset$;
	%
	%  \item 
	%   $\modelsof{ (\bigsqcup^n_{i=1}   (D_i)^\dagger \sqcup C)\sqcap \bigsqcap^m_{j=1} \neg (E_j)^\dagger}\in   \mathrm{min}_{\preceq_{\modelsof{C}}}(\chi(\logsys(\ALC_{\text{concepts}}),\M^+,  $ \\$ \M^-) )  $, if $\Mplus\not \subseteq \modelsof{C}$ and $\Mminus \cap \modelsof{C} \neq \emptyset$. 
	%\end{enumerate}
	%\[\modelsof{C\sqcap \bigsqcap^n_{i=1} \neg (D_i)^\dagger}\in \FRsubs(\modelsof{C}\setminus \mSet, \logsys),\] 
	%where $(D_i)^\dagger$ and $(E_j)^\dagger$ are as in \cref{def:translation}, for all $1\leq i\leq n$ and all $1\leq j\leq m$. 
	Since $(\Imc_1,d_1),\ldots,$ $(\Imc_n,d_n)$ 
	and $(\Jmc_1,e_1),\ldots,$ $(\Jmc_m,e_m)$ are bisimulation-disjoint finite sets of finite tree-shaped pointed interpretations over   $\Sigma$ and $\mSet^+$ and $\mSet^-$ are their closure under bisimulation,  
	$\mSet^+$ and $\mSet^-$ are disjoint.
	We   argue that
	%\[\modelsof{ (\bigsqcup^n_{i=1}   (D_i)^\dagger \sqcup C)\sqcap \bigsqcap^m_{j=1} \neg (E_j)^\dagger}= (\mSet^+\cup\modelsof{C})\setminus \Mminus=\mSet^+\cup(\modelsof{C}\setminus \Mminus).\] 
	\[\modelsof{ (\bigsqcup^n_{i=1}   (D_i)^\dagger \sqcup C)\sqcap \bigsqcap^m_{j=1} \neg (E_j)^\dagger}= (\mSet^+\cup\modelsof{C})\setminus \Mminus.\] 
	% (since $\mSet^+$ and $\mSet^-$ are disjoint).
	Indeed, by the semantics of \ALC,
	\[\modelsof{(\bigsqcup^n_{i=1}   (D_i)^\dagger \sqcup C)}=\modelsof{\bigsqcup^n_{i=1}   (D_i)^\dagger}\cup \modelsof{C}.\]
	By Lemma~\ref{lem:translationconcept} and the definition of $D_i$, we have that
	\[\modelsof{\bigsqcup^n_{i=1}   (D_i)^\dagger}=\bigcup^n_{i=1}\modelsof{   (D_i)^\dagger}=\mSet^+.\]
	Also, by Lemma~\ref{lem:translationconcept} and the definition of $E_j$, we have that
	\[\modelsof{\bigsqcap^m_{j=1}   \neg(E_j)^\dagger}=\bigcap^m_{j=1}(\mUni\setminus\modelsof{   (E_j)^\dagger})=\mUni\setminus(\bigcup^m_{j=1}\modelsof{   (E_j)^\dagger})=\mUni\setminus\mSet^-.\]
	Then, \[\modelsof{ (\bigsqcup^n_{i=1}   (D_i)^\dagger \sqcup C)\sqcap \bigsqcap^m_{j=1} \neg (E_j)^\dagger}= (\mSet^+\cup\modelsof{C})\cap(\mUni\setminus \Mminus)=(\mSet^+\cup\modelsof{C})\setminus \Mminus.\]
	%\todo{...}
	%This means that (1) if $\Mplus \subseteq \modelsof{C}$ then
	%\[\modelsof{ (\bigsqcup^n_{i=1}   (D_i)^\dagger \sqcup C)\sqcap \bigsqcap^m_{j=1} \neg (E_j)^\dagger}=(\mSet^+\cup\modelsof{C})\setminus \Mminus=\modelsof{C}\setminus \M^- \]\[=\modelsof{C}\setminus (\M^- \cup (\mUni\setminus \modelsof{C})).\]
	%So
	%\[\modelsof{ (\bigsqcup^n_{i=1}  (D_i)^\dagger \sqcup C)\sqcap \bigsqcap^m_{j=1} \neg (E_j)^\dagger}\in \]\[\mathrm{min}_{\preceq_{\modelsof{C}}}(\chi(\logsys(\ALC_{\text{concepts}}),\M^+ , \M^- \cup (\mUni\setminus \modelsof{C})) ) .\]
	%Also (2) if $\Mminus \cap \modelsof{C} = \emptyset$ then
	%\[\modelsof{ (\bigsqcup^n_{i=1}   (D_i)^\dagger \sqcup C)\sqcap \bigsqcap^m_{j=1} \neg (E_j)^\dagger}=(\mSet^+\cup\modelsof{C})\setminus \Mminus=\mSet^+\cup\modelsof{C}\] (since $\mSet^+$ and $\mSet^-$ are disjoint). 
	%\[\modelsof{ (\bigsqcup^n_{i=1}   (D_i)^\dagger \sqcup C)\sqcap \bigsqcap^m_{j=1} \neg (E_j)^\dagger}=(\mSet^+\cup\modelsof{C})\setminus \Mminus=\modelsof{C}\setminus \M^- \]\[=\modelsof{C}\setminus (\M^- \cup (\mUni\setminus \modelsof{C})).\]
	%So
	%\[\modelsof{ (\bigsqcup^n_{i=1}   (D_i)^\dagger \sqcup C)\sqcap \bigsqcap^m_{j=1} \neg (E_j)^\dagger}\in   \]\[   \mathrm{min}_{\preceq_{\modelsof{C}}}(\chi(\logsys(\ALC_{\text{concepts}}),\M^+ \cup \modelsof{C}, \M^-)).\]
	% Finally, we argue that Case (3) holds.
	%
	%\todo{...}
	%So \[\modelsof{ (\bigsqcup^n_{i=1}   (D_i)^\dagger \sqcup C)\sqcap \bigsqcap^m_{j=1} \neg (E_j)^\dagger}\in   \mathrm{min}_{\preceq_{\modelsof{C}}}(\chi(\logsys(\ALC_{\text{concepts}}),\M^+,   \M^-) ) .\] 
	The conditions of~\cref{cor:alcrev} are satisfied for this binary class, $\logsys(\ALC_{\text{concepts}})$ is revision-compatible.
\end{proof}

Regarding the case of $\EL_\bot$,  we do not have the same expressivity we have in \ALC for restricting the models that are satisfied by a concept. We leave it as an open question.
From the positive results for eviction and reception for certain classes of models for $\EL_\bot$ and \ALC in \cref{results}, 
%and \cref{thm:posrev}, 
we obtain the existence of the revision operators  in \cref{thm:posrev}.

\section{Conclusion}
We investigated eviction and reception for DL concepts, establishing various results considering different classes of models. We find classes of models
where we obtain eviction and reception compatibility for both \ALC and $\EL_\bot$ concepts. 
It turns out that the class of models where we obtain positive results for \ALC is much more restricted than the class for $\EL_\bot$. We also introduce the notion of model revision
%, which considers binary classes of models. We 
%
and relate various postulates with the revision operation. Revision cannot be seen as a mere combination of eviction and reception, which is evidenced by negative results for DL concepts.
As future work, it would be interesting to investigate model eviction, reception, and revision of DL concepts in a more practical setting, expand our work to more expressive DLs.
%, and study model revision for ontologies.

\section*{Acknowledgments}
Ozaki is supported by the Research Council of Norway, project (316022, 322480). This work was also supported by the Research Council of Norway, Integreat - Norwegian Centre for knowledge-driven machine learning (332645).
\bibliographystyle{apalike}
\bibliography{references}

\end{document}